\renewcommand{\le}{\leqslant}
\renewcommand{\ge}{\geqslant}
\newcommand{\eps}{\varepsilon}
\newcommand{\emp}{\emptyset}
\newcommand{\Sig}{\Sigma}
\newcommand{\sig}{\sigma}
\newcommand{\noin}{\noindent}
\newcommand{\bi}{\begin{itemize}}
\newcommand{\ei}{\end{itemize}}
\newcommand{\be}{\begin{enumerate}}
\newcommand{\ee}{\end{enumerate}}
\newcommand{\bd}{\begin{description}}
\newcommand{\ed}{\end{description}}
\newcommand{\bq}{\begin{quote}}
\newcommand{\eq}{\end{quote}}
\newcommand{\co}{\colon}
\newcommand{\id}{\mathbbm{1}}
\newcommand{\cA}{{\mathcal A}}
\newcommand{\cB}{{\mathcal B}}
\newcommand{\cC}{{\mathcal C}}
\newcommand{\cD}{{\mathcal D}}
\newcommand{\cN}{{\mathcal N}}
\newcommand{\cP}{{\mathcal P}}
\newcommand{\cS}{{\mathcal S}}
\newcommand{\e}{\emph}
\newcommand{\sto}{\twoheadrightarrow}
\def\shu{\mathbin{\mathchoice
{\rule{.3pt}{1ex}\rule{.3em}{.3pt}\rule{.3pt}{1ex}
\rule{.3em}{.3pt}\rule{.3pt}{1ex}}
{\rule{.3pt}{1ex}\rule{.3em}{.3pt}\rule{.3pt}{1ex}
\rule{.3em}{.3pt}\rule{.3pt}{1ex}}
{\rule{.2pt}{.7ex}\rule{.2em}{.2pt}\rule{.2pt}{.7ex}
\rule{.2em}{.2pt}\rule{.2pt}{.7ex}}
{\rule{.3pt}{1ex}\rule{.3em}{.3pt}\rule{.3pt}{1ex}
\rule{.3em}{.3pt}\rule{.3pt}{1ex}}\mkern2mu}}
\newtheorem{theorem}{Theorem}
\newtheorem{lemma}[theorem]{Lemma}
\newtheorem{proposition}[theorem]{Proposition}
\newdefinition{definition}[theorem]{Definition}
\newdefinition{remark}[theorem]{Remark}
\newdefinition{example}[theorem]{Example}
\newproof{proof}{Proof}
\begin{document}

\title{State Complexity of Pattern Matching in Regular Languages\tnoteref{thanks}}

\tnotetext[thanks]{This work was supported by the Natural Sciences and Engineering Research Council of Canada 
grant No.~OGP0000871.}

\author[dcs]{Janusz A. Brzozowski}
\ead{brzozo@uwaterloo.ca}
\author[dpm]{Sylvie Davies\corref{cor}}
\ead{sldavies@uwaterloo.ca}
\author[dcs]{Abhishek Madan}
\ead{a7madan@edu.uwaterloo.ca}
\address[dcs]{David R. Cheriton School of Computer Science, University of Waterloo, Waterloo, ON, Canada N2L 3G1}
\address[dpm]{Department of Pure Mathematics, University of Waterloo, Waterloo, ON, Canada N2L 3G1}
\cortext[cor]{Corresponding author}

\begin{abstract}
In a simple pattern matching problem one has a pattern $w$ and a text $t$, which are words over a finite alphabet $\Sig$. 
One may ask whether $w$ occurs in $t$, and if so, 
where?
More generally, we may have a set $P$ of patterns and a set $T$ of texts, where $P$ and $T$ are regular languages.
We are interested 
whether any word of $T$ begins with a word of $P$, ends with a word of $P$, has a word of $P$ as a 
factor,
or has a word of $P$ as a subsequence. 
Thus we are interested in the languages $(P\Sig^*)\cap T$, 
$(\Sig^*P)\cap T$, $(\Sig^* P\Sig^*)\cap T$, and $(\Sig^*\shu P)\cap T$, where $\shu$ is the shuffle operation.
The state complexity $\kappa(L)$ of a regular language $L$ is the number of states in the minimal deterministic finite automaton recognizing $L$. 
We derive the following upper bounds on the state complexities of our pattern-matching languages, where 
$\kappa(P)\le m$, and $\kappa(T)\le n$:
$\kappa((P\Sig^*)\cap T) \le mn$; 
$\kappa((\Sig^*P)\cap T) \le 2^{m-1}n$;
$\kappa((\Sig^*P\Sig^*)\cap T) \le (2^{m-2}+1)n$; 
and $\kappa((\Sig^*\shu P)\cap T) \le (2^{m-2}+1)n$. 
We prove that these bounds are tight, and that 
to meet them, the alphabet must have at least two letters
in the first three cases,  and  at least $m-1$ letters in the last case.
We also consider the special case where $P$ is a single word $w$, and obtain the following tight upper bounds:
$\kappa((w\Sig^*)\cap T) \le m+n-1$;
$\kappa((\Sig^*w)\cap T) \le (m-1)n-(m-2)$;
$\kappa((\Sig^*w\Sig^*)\cap T) \le (m-1)n$; 
and $\kappa((\Sig^*\shu w)\cap T) \le (m-1)n$. 
For unary languages, we have a tight upper bound of $m+n-2$ in all eight of the aforementioned cases. 
 \end{abstract}

\begin{keyword}
all-sided ideal\sep combined operation\sep factor\sep finite automaton\sep left ideal\sep pattern matching\sep prefix\sep 
regular language\sep 
right ideal\sep state complexity\sep subsequence\sep suffix\sep two-sided ideal
\end{keyword}

\maketitle

\section{Introduction}

Given a regularity-preserving operation on regular languages, we may ask the following natural question: in the worst case, how many states are necessary and sufficient for a deterministic finite automaton (DFA) to accept the language resulting from the operation, in terms of the number of states of the input DFAs? 
For example, consider the intersection of two languages: if the input DFAs have $m$ and $n$ states respectively, then an $mn$-state DFA is sufficient to accept the 
intersection; this follows 
by the usual direct product construction. 
It was proved by Yu, Zhuang, and Salomaa~\cite{YZS94} that an $mn$-state DFA is also necessary in the worst case; for all $m,n \ge 1$, there exist 
a language accepted by an $m$-state DFA and a language accepted by an $n$-state DFA whose intersection is accepted by a minimal DFA with $mn$ states.

This worst-case value is called the \e{state complexity}~\cite{GMRY16,Mas70,YZS94} of the operation.
The \e{state complexity} of a regular language $L$, denoted by $\kappa(L)$, is the number of states in the minimal DFA accepting $L$. Thus $\kappa(L) = n$ means the minimal DFA for $L$ has exactly $n$ states, and $\kappa(L) \le n$ means $L$ can be recognized by an $n$-state DFA.
If a language has state complexity $n$, we indicate this by the subscript $n$ and use $L_n$ instead of $L$.
Then the state complexity of an operation is the worst-case state complexity of the result of the operation, expressed in terms of the maximal allowed state complexity of the inputs. For example, the state complexity of intersection is $mn$ because if 
$\kappa(K) \le m$ and $\kappa(L) \le n$, then $\kappa(K \cap L) \le mn$ and this bound is tight for all $m,n \ge 1$.

Aside from ``basic'' operations like union, intersection, concatenation and star, the state complexity of \e{combined operations}~\cite{SSY07} such as ``star of intersection'' and ``star of union'' has also been studied. We investigate the state complexity of
new combined operations inspired by pattern matching problems.

For a comprehensive treatment of pattern matching, see~\cite{CrHa97}.
In a pattern matching problem we have a \emph{text} and a \emph{pattern}. 
In its simplest form, the pattern $w$ and the text $t$ are both words over an alphabet $\Sig$.
Some natural questions about patterns in texts include the following: Does $w$ 
occur in $t$, 
and if so, where?

Pattern matching has many applications. 
Aho and Corasick~\cite{AhCo75} developed an algorithm to determine all  occurrences of words from a finite pattern in a given text; this algorithm leads to significant improvements in the speed of bibliographic searches.
Pattern matching is used in bioinformatics~\cite{EIWA15};
 in this context  the text $t$ is often a DNA sequence, and the pattern $w$ is a sequence of nucleotides searched for in the text.

More generally, we can have a \e{set} $P$ of patterns and a \e{set} $T$ of texts. These could be finite sets, or they could be arbitrary regular languages, specified by a finite automaton or a regular expression.
For example, many text editors and text processing utilities have a \e{regular expression search} feature, which finds all lines in a text file that match a certain regular expression. 
In this context, the pattern set $P$ is often a regular language (but not always, as software implementations of ``regular expressions'' typically have extra features allowing 
irregular 
languages to be specified).
We can view a text file as either an ordered sequence of single-word texts $t$ (each representing a line of the file), or if the order of lines is not important, as a finite set $T$.
There could also be cases where it is useful to allow $T$ to be an arbitrary regular language rather than a finite set; for example, $T$ could be the set of all possible interleaved execution traces from the processes in a distributed system, as described in~\cite{DAW88}.

In this paper, we ask whether a pattern from the set $P$ occurs as a \e{prefix}, \e{suffix}, \e{factor} or \e{subsequence} of a text from the set $T$.
If $u,v,w\in\Sigma^*$ and $w=uv$, then $u$ is a {\em prefix\/} of $w$ and $v$ is a {\em suffix\/} of $w$. 
If $w=xvy$ for some $v,x,y\in\Sigma^*$, then $v$ is a {\em factor\/} of $w$.
If $w=w_0a_1w_1\cdots a_nw_n$, where $a_1,\ldots,a_n\in \Sigma$, and $w_0,\ldots,w_n\in\Sigma^*$, then $v=a_1\cdots a_n$ is a \e{subsequence}
of $w$. 

If $L$ is any  language, then $L \Sig^*$ is the \emph{right ideal} generated by $L$,
$\Sig^* L$ is the \emph{left ideal} generated by $L$, and 
$\Sig^*L\Sig^* $ is the \emph{two-sided ideal} generated by $L$.

The \emph{shuffle} $u\shu v$ of words $u,v\in\Sig^*$ is defined as follows:
$$ u\shu v= \{u_1v_1\cdots u_kv_k \mid u=u_1\cdots u_k,  v=v_1\cdots v_k,
u_1,\ldots,u_k,v_1,\ldots, v_k\in \Sig^*\}.$$
The shuffle of two languages $K$ and $L$ over $\Sig$ is defined by
$$ K\shu L=\bigcup_{u \in K, v \in L} u\shu v.$$
The language $\Sig^* \shu L$ is an \emph{all-sided ideal}.
The language $\Sig^*\shu w$ consists of all words that contain $w$ as a subsequence.
Such a language could be used, for example, to determine whether a report has all the required sections and that they are in the correct order.

The combined operations we consider are of the form
 ``the intersection of $T$ with the right (left, two-sided, all-sided) ideal generated by $P$''.
We study four problems with pattern sets $P\subseteq \Sig^*$ and text sets $T\subseteq \Sig^*$. 
\be
\item
Find $(P\Sig^*) \cap T$, the set of all the words in $T$ each of which begins with a word in $P$.
\item
Find $(\Sig^*P) \cap T$, the set of all the words in $T$ each of which ends with a word in $P$.
\item
Find $(\Sig^* P\Sig^*) \cap T$ the set of all the words in $T$ each of which has a word in $P$ as a factor.
\item
Find $(\Sig^* \shu P) \cap T$, the set of all the words in $T$ each of which has a word of $P$ as a subsequence.
\ee
We then repeat these four problems for the case where the pattern is a single word $w$.
In all eight cases we find the state complexity of these operations.
We show that for languages $P$, $T$ and $\{w\}$ such that $\kappa(P)\le m$, $\kappa(T) \le n$, $\kappa(\{w\})\le m$,
the following upper bounds hold:
\be
\item
General case:
\be
\item
Prefix: $\kappa((P\Sig^*)\cap T) \le mn$. 
\item
Suffix: $\kappa((\Sig^*P)\cap T)\le 2^{m-1}n$. 
\item
Factor: $\kappa((\Sig^*P\Sig^*)\cap T)   \le (2^{m-2}+1)n$. 
\item
Subsequence:
$\kappa((\Sig^*\shu P)\cap T) \le  (2^{m-2}+1)n$. 
\ee
\item
Single-word case:
\be
\item
Prefix:
$\kappa((w\Sig^*)\cap T) \le m+n-1$.
\item
Suffix: $\kappa((\Sig^*w)\cap T) \le (m-1)n-(m-2)$.
\item
Factor: $\kappa((\Sig^*w\Sig^*)\cap T) \le (m-1)n$.
\item
Subsequence: $\kappa((\Sig^*\shu w)\cap T) \le (m-1)n$. 
\ee
\ee
Moreover, in each case there exist languages $P_m$, $T_n$, $\{w\}_m$ that meet the upper bounds.

In the general prefix, suffix and factor cases, there exist binary witnesses meeting the bounds.
For the general subsequence case, an alphabet of at least $m-1$ letters is needed to reach this bound.
For the single-word cases we use binary witnesses to reach each of the bounds.

In Section~\ref{sec:prefixword}, we consider prefix matching in the case where $P = \{w\}$ is a single word.
In addition to considering arbitrary alphabets, in that section we also look at the 
 case where $P$ and $T$ are languages over a \e{unary alphabet}.
We prove a tight upper bound of $m+n-2$ on the state complexity of $w\Sig^* \cap T$ in the unary alphabet case.
It turns out that when $P$ and $T$ are unary languages, the single-word prefix matching case coincides with all the other cases.
Thus none of the upper bounds from above can be reached with unary alphabets.
See Remark \ref{rmk:unary} in Section \ref{sec:prefixword} for more details.

\section{Terminology and Notation}

A \emph{deterministic finite automaton (DFA)} is a
5-tuple
$\cD=(Q, \Sigma, \delta, q_0,F)$, where
$Q$ is a finite non-empty set of \emph{states},
$\Sig$ is a finite non-empty \emph{alphabet},
$\delta\colon Q\times \Sig\to Q$ is the \emph{transition function},
$q_0\in Q$ is the \emph{initial} state, and
$F\subseteq Q$ is the set of \emph{final} states.
We extend $\delta$ to a function $\delta \co Q \times \Sig^* \to Q$ inductively as follows:
for $q \in Q$, define $\delta(q,\eps) = q$, and for $x \in \Sig^*$ and $a \in \Sig$, define $\delta(q,xa) = \delta(\delta(q,x),a)$.
We extend it further to $\delta \co 2^Q \times \Sig^* \to 2^Q$ by setting $\delta(S,w) = \{ \delta(q,w) \mid q \in S\}$ for $S \subseteq Q$.
A~DFA $\cD$ \emph{accepts} a word $w \in \Sigma^*$ if ${\delta}(q_0,w)\in F$. The language accepted by $\cD$ 
is the set of all words that $\cD$ accepts, and
is denoted by $L(\cD)$. If $q$ is a state of $\cD$, then the language $L_q(\cD)$ of $q$ is the language accepted by the DFA $(Q,\Sigma,\delta,q,F)$. 
A state is \emph{empty} (or \emph{dead} or a \emph{sink state}) if its language is empty. Two states $p$ and $q$ of $\cD$ are 
\emph{indistinguishable} 
 if $L_p(\cD) = L_q(\cD)$. 
A state $q$ is \emph{reachable} if there exists $w\in\Sig^*$ such that $\delta(q_0,w)=q$.
A DFA $\cD$ is \emph{minimal} if it has the smallest number of states  and the smallest alphabet  among all DFAs accepting $L(\cD)$.
It is well known that a DFA is minimal if it uses the smallest alphabet, all of its states are reachable, and no two states are 
indistinguishable.
Two DFAs are \e{isomorphic} if (informally) the only difference between them is the names assigned to the states.

A \e{nondeterministic finite automaton (NFA)} is a 
5-tuple 
$\cN = (Q,\Sigma,\delta,q_0,F)$, where $\delta$ is now a function $\delta \co Q \times \Sig \to 2^Q$, and all other components are as in a DFA. Extending $\delta$ to a function $\delta \co 2^Q \times \Sig^* \to 2^Q$, the NFA $\cN$ accepts a word $w \in \Sig^*$ if $\delta(\{q_0\},w) \cap F \ne \emp$. As with DFAs, the language accepted by the NFA $\cN$ is the set of all accepted words.

Let $L$ be a language over $\Sig$. The \e{quotient} of $L$ by a word $x \in \Sig^*$ is the set $x^{-1}L = \{y \in \Sig^* \mid xy \in L\}$. In a DFA $\cD = (Q,\Sig,\delta,q_0,F)$, if $\delta(q_0,w) = q$, then $L_q(\cD) = w^{-1}L(\cD)$.


A \e{transformation} of a set $Q$ is a function $t \co Q \to Q$.
The \emph{image} of $q\in Q$ under the transformation $t$ is denoted by $qt$.
If $s,t$ are transformations of $Q$, their composition is denoted by $st$ and defined by
$q(st)=(qs)t$; that is, composition is performed from \e{left to right}.
The \e{preimage} of $q \in Q$ under the transformation $t$ is denoted by $qt^{-1}$, and is defined to be the set $qt^{-1} = \{ p \in Q \mid pt = q\}$.
This notation extends to sets: for $S \subseteq Q$, we have $St = \{ qt \mid q \in S\}$ and $St^{-1} = \{ p \in Q \mid pt \in S\}$.

For $k\ge 2$, a transformation $t$ of a set $P=\{q_0,q_1,\ldots,q_{k-1}\} \subseteq  Q $ is a \emph{$k$-cycle}
if $q_0t=q_1$, $q_1t=q_2$, \dots, $q_{k-2}t=q_{k-1}$, 
$q_{k-1}t=q_0$, 
and $qt = q$ for all $q \in Q \setminus P$.
This $k$-cycle is denoted by  $(q_0,q_1,\ldots,q_{k-1})$. 
A~2-cycle $(q_0,q_1)$ is a \emph{transposition}.
 The identity transformation 
of $Q$ is denoted by $\id$; while this notation omits the set $Q$, it can generally be inferred from context.
If $Q$ is a set of natural numbers (e.g., $Q = \{0,1,\dotsc,n-1\}$ for some $n$),
the notation $(_i^j \; q\to q+1)$ denotes a transformation that sends $q$ to $q+1$ for $i\le q\le j$ and is the identity for the remaining elements
of $Q$,
and  $(_i^j \; q\to q-1)$ is defined similarly.

In a DFA $\cD=(Q, \Sigma, \delta, q_0,F)$, each letter $a\in \Sig$ induces a transformation of the set of states $Q$, defined by $q \mapsto \delta(q,a)$ for $q \in Q$.
We denote this transformation by $\delta_a$.
Specifying the transformation $\delta_a$ induced by each letter $a \in \Sig$ completely specifies the transition function $\delta$, so we often define $\delta$ in this way.
We write $a \co t$ to mean $\delta_a = t$; for example, 
if $Q = \{0,1,\dotsc,n-1\}$, then
$a \co (0,1,\dotsc,n-1)$ means the transformation induced by $a$ in the DFA $\cD$ is the cycle $(0,1,\dotsc,n-1)$.
We extend the $\delta_a$ notation from letters to words: if $w = a_1\dotsb a_k$ for $a_1,\dotsc,a_k \in \Sig$, then $\delta_w = \delta_{a_1} \dotsb \delta_{a_k}$.

A \emph{dialect} of a  regular language $L$ is a language obtained from $L$
 by replacing or deleting letters of $\Sigma$ in the words of $L$.
 In this paper we use only dialects obtained by 
permuting the letters of $\Sig$.
 Thus, for example, if $L(a,b) =b^*(aab  \cup a)$, then $L(b,a)=a^*(bba \cup b)$.
 The notion of a dialect is also extended to DFAs. 

Henceforth we sometimes refer to state complexity as simply \e{complexity}, since we do not discuss other measures of complexity in this paper.
 

\section{Prefix Matching}

Let $T$ and $P$ be regular languages over an alphabet $\Sig$.
We compute the set 
$L$ of all the words of $T$ that are prefixed by words in $P$; that is, the language $L = \{ wx \mid w \in P, wx \in T \}=(P\Sig^*)\cap T$.
We want to find the worst-case state complexity of $L$.
\begin{theorem}
 For $m,n\ge 1$, 
if $\kappa(P) \le m$ and $\kappa(T) \le n$,
 then $\kappa((P\Sig^*)\cap T) \le mn$, and this bound is tight if the cardinality of $\Sig$ is at least 2.
 \end{theorem}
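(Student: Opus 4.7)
The plan is to split the argument into an upper bound, which follows from two standard DFA constructions chained together, and a matching lower bound, for which I would exhibit explicit binary witnesses.

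For the upper bound, I would start with a DFA $\cA$ for $P$ with at most $m$ states. Since $P\Sig^*$ is the right ideal generated by $P$, I would first convert $\cA$ into a DFA for $P\Sig^*$ by identifying all of $\cA$'s final states with a single absorbing accepting sink (any transition out of an originally final state is redirected to the sink, and any remaining transitions among former final states are absorbed into the sink). The resulting DFA has at most $m$ states and accepts exactly $P\Sig^*$. Then the standard direct-product construction with a DFA $\cB$ for $T$ of at most $n$ states yields a DFA for $(P\Sig^*)\cap T$ of at most $mn$ states, giving the upper bound.

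For the lower bound with $|\Sig|\ge 2$, the cleanest approach is to take $P_m$ to already be a right ideal of state complexity $m$, so that passing from $P_m$ to $P_m\Sig^*$ loses nothing, and then to arrange $T_n$ so that the product for $P_m \cap T_n$ realises all $mn$ states. Over $\Sig=\{a,b\}$, a natural attempt puts $P_m$ on states $\{0,\ldots,m-1\}$ with sole final state $m-1$, letting $a$ act as the truncated cycle $0\to 1\to\cdots\to m-1$ (absorbing at $m-1$) and letting $b$ fix every state; and puts $T_n$ on states $\{0,\ldots,n-1\}$ with sole final state $n-1$, letting $b$ act as the full cycle $(0,1,\ldots,n-1)$ and $a$ fix every state. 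Reachability of $(i,j)$ from $(0,0)$ in the product is witnessed by the word $b^j a^i$. For distinguishability, the unique accepting state of the product is $(m-1,n-1)$; since $a$ affects only the first coordinate and $b$ only the second, from any pair $(i,j)$ one can independently steer each coordinate, separating any two distinct pairs by a word of the form $a^{m-1-i}b^{n-1-j}$.

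The upper bound is essentially immediate from the chained construction; the main obstacle is choosing witnesses that simultaneously make $P_m$ a right ideal of state complexity exactly $m$, keep the action of the two letters independent enough to realise all $mn$ pairs, and admit a clean distinguishability argument. I would close by noting that at least two letters are necessary in the witness alphabet, since the unary bound recorded in Remark~\ref{rmk:unary} is strictly smaller than $mn$.
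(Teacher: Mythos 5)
Your proposal is correct and follows essentially the same route as the paper: the upper bound via the $m$-state DFA for the right ideal composed with the direct product, and a lower bound using a product of two counting DFAs in which each letter drives one coordinate independently, with reachability by $b^ja^i$ and distinguishability by the unique shortest accepted word in $a^*b^*$ (resp.\ $b^*a^*$). The only cosmetic difference is that the paper takes $P_m$ to be a cyclic mod-$m$ counting language and then passes to its generated right ideal, whereas you take $P_m$ to be that right ideal (the threshold language) directly; the resulting product DFA is the same up to exchanging the roles of $a$ and $b$.
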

 \begin{proof}
 The language $P\Sig^*$ is the right ideal generated by $P$. It is known that the state complexity of $P\Sig^*$ is at most $m$~\cite{BJL13}.
 Furthermore, it was shown in~\cite{YZS94} that the complexity of 
intersection is at most $mn$,
if the first input has complexity at most $m$ and the second has complexity at most $n$. 
Hence $mn$ is an upper bound on the complexity of $(P\Sig^*)\cap T$.

Next we find witnesses that meet this bound.
Let $T_n(a,b)$ be accepted by the DFA $\mathcal{D}_n(a,b) = (Q_n, \Sigma, \delta_T, 0, \{ n-1 \})$, where $Q_n=\{0,1,\dots,n-1\}$, $\Sig=\{a,b\}$ and $\delta_T$ is defined by the transformations $a \colon (0, 1, \ldots, n-1)$, and 
$b\colon \id$; 
see Figure~\ref{fig:regL}.
This DFA is minimal because the shortest word in $a^*$ accepted by state $q$ is $a^{n-1-q}$; this shortest word distinguishes $q$ from any other state.
\begin{figure}[ht!]
\unitlength 8.5pt
\begin{center}\begin{picture}(37,6)(0,4)
\gasset{Nh=1.8,Nw=3.8,Nmr=1.25,ELdist=0.4,loopdiam=1.5}
	{\small
\node(0)(1,7){0}\imark(0)
\node(1)(8,7){1}
\node(2)(15,7){2}
\node[Nframe=n](3dots)(22,7){$\dots$}
\node(n-2)(29,7){$n-2$}
\node(n-1)(36,7){$n-1$}\rmark(n-1)
\drawloop(0){$b$}
\drawedge(0,1){$a$}
\drawloop(1){$b$}
\drawedge(1,2){$a$}
\drawloop(2){$b$}
\drawedge(2,3dots){$a$}
\drawedge(3dots,n-2){$a$}
\drawloop(n-2){$b$}
\drawedge(n-2,n-1){$a$}
\drawedge[curvedepth= 3.0,ELdist=-1.0](n-1,0){$a$}
\drawloop(n-1){$b$}
	}
\end{picture}\end{center}
\caption{Minimal DFA $\cD_n(a,b)$ of $T_n(a,b)$.}
\label{fig:regL}
\end{figure}

Now let $P_m=P_m(a,b)=T_m(b,a)$ be the dialect of $T_m(a,b)$ with the roles of $a$ and $b$ interchanged.
Thus the DFA
$\mathcal{D}_m(b,a) = (Q_m, \Sigma, \delta_P, 0, \{ m-1 \})$, where 
$Q_m=\{0,1,\dots, m-1\}$ and $\delta_P$ is defined by 
$a\colon \id$, 
$b \colon (0, 1, \ldots, m-1)$, is the minimal DFA of $T_m(b,a)$.
This DFA is minimal because any state is distinguished from any other state by the shortest word in 
$b^*$ 
that it accepts.

To find $P_m\Sigma^*$, we concatenate the language $P_m$ with the language $\Sig^*$.
Note that once state $m-1$ is reached in the DFA $\cD'_m(b,a)$ recognizing $P_m\Sig^*$, every word is accepted. Thus the transition from $m-1$ to 0 is not needed, because it is replaced by a self-loop on state $m-1$ under $b$. Thus we obtain the DFA 
$\mathcal{D}'_m(b,a) = (Q_m, \Sigma, \delta_P', 0, \{ m-1 \})$ of Figure~\ref{fig:regMStar}, where $\delta_P'$ is defined by $a \colon \id$, $b \colon (_0^{m-2} q \rightarrow q + 1)$.

\begin{figure}[ht!]
\unitlength 8.5pt
\begin{center}\begin{picture}(37,8)(0,4)
\gasset{Nh=1.8,Nw=3.8,Nmr=1.25,ELdist=0.4,loopdiam=1.5}
	{\small
\node(0)(1,7){0}\imark(0)
\node(1)(8,7){1}
\node(2)(15,7){2}
\node[Nframe=n](3dots)(22,7){$\dots$}
\node(m-2)(29,7){$m-2$}
\node(m-1)(36,7){$m-1$}\rmark(n-1)
\drawloop(0){$a$}
\drawedge(0,1){$b$}
\drawloop(1){$a$}
\drawedge(1,2){$b$}
\drawloop(2){$a$}
\drawedge(2,3dots){$b$}
\drawedge(3dots,m-2){$b$}
\drawloop(m-2){$a$}
\drawedge(m-2,m-1){$b$}
\drawloop(m-1){$a,b$}
	}
\end{picture}\end{center}
\caption{Minimal DFA $\cD'_m(b,a)$ of 
$P_m\Sigma^*$.
}
\label{fig:regMStar}
\end{figure}

Our last task is to find a DFA accepting $(P_m\Sig^*)\cap T_n$ and prove that it is minimal and has $mn$ states.
To achieve this we find the direct product $\cD_L$ of $\cD'_m(b,a)$ and $\cD_n(a,b)$; an example of this product for 
$m=n=4$ 
is given in Figure~\ref{fig:exK}.
Let $\mathcal{D}_L=(Q_m \times Q_n, \Sigma, \delta_L, (0, 0), \{ (m-1, n-1) \})$, where $\delta_L((p, q), a) = (\delta_P'(p, a), \delta_T(q, a))$. 
Since DFA $\cD_L$ has $mn$ states, it remains to prove that every state if $\cD_L$ is reachable and every two states are distinguishable.

\begin{figure}[th]
\unitlength 9pt
\begin{center}\begin{picture}(35,20)(0,-3)
\gasset{Nh=1.8,Nw=3.5,Nmr=1.25,ELdist=0.4,loopdiam=1.5}
	{\small
\node(0'0)(2,15){$(0,0)$}\imark(0'0)
\node(1'0)(2,10){$(1,0)$}
\node(2'0)(2,5){$(2,0)$}
\node(3'0)(2,0){$(3,0)$}

\node(0'1)(12,15){$(0,1)$}
\node(1'1)(12,10){$(1,1)$}
\node(2'1)(12,5){$(2,1)$}
\node(3'1)(12,0){$(3,1)$}

\node(0'2)(22,15){$(0,2)$}
\node(1'2)(22,10){$(1,2)$}
\node(2'2)(22,5){$(2,2)$}
\node(3'2)(22,0){$(3,2)$}

\node(0'3)(32,15){$(0,3)$}
\node(1'3)(32,10){$(1,3)$}
\node(2'3)(32,5){$(2,3)$}
\node(3'3)(32,0){$(3,3)$}\rmark(3'3)

\drawedge(0'0,0'1){$a$}
\drawedge(0'1,0'2){$a$}
\drawedge(0'2,0'3){$a$}
\drawedge[curvedepth=2.0,ELside=r](0'3,0'0){$a$}

\drawedge(1'0,1'1){$a$}
\drawedge(1'1,1'2){$a$}
\drawedge(1'2,1'3){$a$}
\drawedge[curvedepth=2.0,ELside=r](1'3,1'0){$a$}

\drawedge(2'0,2'1){$a$}
\drawedge(2'1,2'2){$a$}
\drawedge(2'2,2'3){$a$}
\drawedge[curvedepth=2.0,ELside=r](2'3,2'0){$a$}

\drawedge(3'0,3'1){$a$}
\drawedge(3'1,3'2){$a$}
\drawedge(3'2,3'3){$a$}
\drawedge[curvedepth=1.6,ELside=r](3'3,3'0){$a$}

\drawedge(0'0,1'0){$b$}
\drawedge(1'0,2'0){$b$}
\drawedge(2'0,3'0){$b$}
\drawloop[loopangle=-90](3'0){$b$}

\drawedge(0'1,1'1){$b$}
\drawedge(1'1,2'1){$b$}
\drawedge(2'1,3'1){$b$}
\drawloop[loopangle=-90](3'1){$b$}

\drawedge(0'2,1'2){$b$}
\drawedge(1'2,2'2){$b$}
\drawedge(2'2,3'2){$b$}
\drawloop[loopangle=-90](3'2){$b$}

\drawedge(0'3,1'3){$b$}
\drawedge(1'3,2'3){$b$}
\drawedge(2'3,3'3){$b$}
\drawloop[loopangle=-90](3'3){$b$}

}
\end{picture}\end{center}
\caption{The direct product of 
$\mathcal{D}'_{4}(b,a)$ and $\mathcal{D}_{4}(a,b)$
for intersection.
}
\label{fig:exK}
\end{figure}

We observe that $\delta_L((0,0),a^q b^p) = (p,q)$, for all $0 \le p \le m-1$ and $0 \le q \le n-1$. Therefore, every state is reachable. We also observe that the minimal word
in $a^*b^*$
accepted by a state $(p,q)$ is $a^{n-1-q} b^{m-1-p}$, where $0 \le p \le m-1$ and $0 \le q \le n-1$. Therefore, each state in $Q_m \times Q_n$ has a unique minimal word
in $a^*b^*$; 
this makes all states pairwise distinguishable. Hence, $\cD_L$ is minimal, and has state complexity $mn$.
\qed
\end{proof}

\section{Suffix Matching}
Let $T$ and $P$ be regular languages over an alphabet $\Sig$.
We are now interested in 
the worst-case state complexity of 
the set $L$ of all the words of $T$ that end with words in $P$. More formally, $L = \{ xw \mid w \in P, xw \in T \}=(\Sig^*P)\cap T$.

\begin{proposition}
For $m,n \ge 2$,
if $\kappa(P) \le m$ and $\kappa(T) \le n$,
 then $\kappa((\Sig^*P)\cap T) \le 2^{m-1}n$.
\label{prop:suffix}
 \end{proposition}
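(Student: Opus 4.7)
The plan is to combine two well-known state-complexity bounds: one for the left ideal construction $\Sigma^*P$, and one for intersection. Specifically, I will first show that $\kappa(\Sigma^*P) \le 2^{m-1}$, and then invoke the standard product construction to get the bound $2^{m-1} \cdot n$ on the intersection with $T$.

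To bound $\kappa(\Sigma^*P)$, start from a minimal DFA $\cA = (Q, \Sigma, \delta, q_0, F)$ for $P$ with $|Q| \le m$. Construct an NFA $\cN$ for $\Sigma^*P$ by keeping all states and transitions of $\cA$ and replacing each transition $\delta(q_0, a)$ with the nondeterministic choice $\{q_0, \delta(q_0, a)\}$, so the initial state can be ``pumped'' by any prefix. It is immediate that $L(\cN) = \Sigma^*P$. Now apply the subset construction. The key observation is that in every reachable subset of $\cN$, the state $q_0$ is present, because $q_0$ has self-loops on every letter. Therefore every reachable subset has the form $\{q_0\} \cup S$ with $S \subseteq Q \setminus \{q_0\}$, and there are at most $2^{|Q|-1} \le 2^{m-1}$ such sets. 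Hence $\kappa(\Sigma^*P) \le 2^{m-1}$. (This bound is also stated in the left-ideal literature, e.g., \cite{BJL13}.)

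Once the bound $\kappa(\Sigma^*P) \le 2^{m-1}$ is in hand, the proof concludes by the standard product construction for intersection from~\cite{YZS94}: if two languages have state complexities at most $k$ and $n$ respectively, then their intersection has state complexity at most $kn$. Applying this with $k = 2^{m-1}$ and the language $T$ (of complexity at most $n$) gives $\kappa((\Sigma^*P) \cap T) \le 2^{m-1} \cdot n$, which is the claimed upper bound.

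There is essentially no obstacle here, since the proposition only asserts an upper bound and both ingredients are standard. The real work, which presumably occurs in the following theorem, will be to exhibit witnesses $P_m$ and $T_n$ meeting this bound: one must arrange that all $2^{m-1}$ subsets in the determinization of $\Sigma^*P_m$ are reachable, that no two are equivalent, and that after taking the product with the DFA of $T_n$ all $2^{m-1}n$ product states remain reachable and pairwise distinguishable. That tightness argument, rather than the present upper bound, will be the main technical step.
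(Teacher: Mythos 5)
Your proof is correct and follows essentially the same route as the paper: bound $\kappa(\Sigma^*P)$ by $2^{m-1}$ (the known left-ideal bound from~\cite{BJL13}) and then apply the product-construction bound for intersection. The only difference is that you additionally supply a self-contained derivation of the $2^{m-1}$ bound via the NFA with self-loops on $q_0$, which the paper simply cites; that derivation is sound, and your closing remark correctly identifies that the tightness argument is where the real work lies.
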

 \begin{proof}
 The language $\Sig^*P$ is the left ideal generated by $P$. It is known that the state complexity of this ideal is at most $2^{m-1}$~\cite{BJL13}.
 Furthermore, the complexity of intersection is at most the product of the state complexities of the two operands. Hence $2^{m-1}n$ is an upper bound on the complexity of $(\Sig^*P)\cap T$.
\qed
\end{proof}

Our next goal is to prove that this upper bound is tight. We describe witnesses 
$P_m$ and $T_n$
that meet the upper bound.
Let $T_n(a,b)$ be accepted by the DFA $\mathcal{D}_n(a,b) = (Q_n, \Sigma, \delta_T, 0, \{ n-1 \})$, where $Q_n=\{0,1,\dots,n-1\}$, $\Sig=\{a,b\}$ and $\delta_T$ is defined by the transformations $a \colon (0, 1, \ldots, n-1)$, $b \colon (1,2,\dots,n-1)$. 
See Figure~\ref{fig:Lsuffix}.
This DFA is minimal because the shortest word in $a^*$ accepted by state $q$ is $a^{n-1-q}$; this shortest word distinguishes $q$ from any other state.

\begin{figure}[ht!]
\unitlength 8.5pt
\begin{center}\begin{picture}(37,7)(0,3)
\gasset{Nh=1.8,Nw=3.8,Nmr=1.25,ELdist=0.4,loopdiam=1.5}
	{\small
\node(0)(1,7){0}\imark(0)
\node(1)(8,7){1}
\node(2)(15,7){2}
\node[Nframe=n](3dots)(22,7){$\dots$}
\node(n-2)(29,7){$n-2$}
\node(n-1)(36,7){$n-1$}\rmark(n-1)
\drawedge(0,1){$a$}
\drawedge(1,2){$a,b$}
\drawloop(0){$b$}
\drawedge(2,3dots){$a,b$}
\drawedge(3dots,n-2){$a,b$}
\drawedge(n-2,n-1){$a,b$}
\drawedge[curvedepth= 4.0,ELdist=-1.0](n-1,0){$a$}
\drawedge[curvedepth= 2.0,ELdist=-1.0](n-1,1){$b$}
	}
\end{picture}\end{center}
\caption{Minimal DFA $\cD_n(a,b)$ of $T_n(a,b)$.}
\label{fig:Lsuffix}
\end{figure}

It turns out 
that $\cD_n(a,b)$, and its dialect $\cD_m(b,a)$ shown in Figure~\ref{fig:Msuffix}, act as  witnesses 
 in the case of suffix matching.
We denote the language of 
the DFA $\cD_m(b,a)$
by $P_m(b,a)$.
Let $E = \Sig^{m-2}(a\Sig^{m-2})^*$.
Then 
$P_m(b,a)$ can be described 
by the regular expression 
$ 
(a \cup bEb)^* bE.
$
Now we have 
$$
\Sig^*P_m(b,a)=\Sig^*(a \cup bEb)^* bE = \Sig^*bE =\Sig^*b\Sig^{m-2}(a\Sig^{m-2})^*.$$
The new generator of the left ideal $\Sig^*P_m(b,a)$ is 
$G_m=b\Sig^{m-2}(a\Sig^{m-2})^*$.
It consists of any word of length $m-1$ beginning with  $b$, possibly followed by any number of words of length $m-1$ beginning with $a$.
An NFA accepting the left ideal is shown in Figure~\ref{fig:Nsuffix}.


\begin{figure}[ht!]
\unitlength 8.5pt
\begin{center}\begin{picture}(37,7)(0,3)
\gasset{Nh=1.8,Nw=3.8,Nmr=1.25,ELdist=0.4,loopdiam=1.5}
	{\small
\node(0)(1,7){0}\imark(0)
\node(1)(8,7){1}
\node(2)(15,7){2}
\node[Nframe=n](3dots)(22,7){$\dots$}
\node(m-2)(29,7){$m-2$}
\node(m-1)(36,7){$m-1$}\rmark(m-1)
\drawedge(0,1){$b$}
\drawedge(1,2){$a,b$}
\drawloop(0){$a$}
\drawedge(2,3dots){$a,b$}
\drawedge(3dots,m-2){$a,b$}
\drawedge(m-2,m-1){$a,b$}
\drawedge[curvedepth= 4.0,ELdist=-1.0](m-1,0){$b$}
\drawedge[curvedepth= 2.0,ELdist=-1.0](m-1,1){$a$}
	}
\end{picture}\end{center}
\caption{Minimal DFA $\cD_m(b,a)$ of $P_m(b,a)$.}
\label{fig:Msuffix}
\end{figure}


\begin{figure}[ht!]
\unitlength 8.5pt
\begin{center}\begin{picture}(37,7)(0,3)
\gasset{Nh=1.8,Nw=3.8,Nmr=1.25,ELdist=0.4,loopdiam=1.5}
	{\small
\node(0)(1,7){0}\imark(0)
\node(1)(8,7){1}
\node(2)(15,7){2}
\node[Nframe=n](3dots)(22,7){$\dots$}
\node(m-2)(29,7){$m-2$}
\node(m-1)(36,7){$m-1$}\rmark(m-1)
\drawedge(0,1){$b$}
\drawedge(1,2){$a,b$}
\drawloop(0){$a,b$}
\drawedge(2,3dots){$a,b$}
\drawedge(3dots,m-2){$a,b$}
\drawedge(m-2,m-1){$a,b$}
\drawedge[curvedepth= 2.0,ELdist=-1.0](m-1,1){$a,b$}
	}
\end{picture}\end{center}
\caption{NFA for $\Sig^*P_m(b,a)$.}
\label{fig:Nsuffix}
\end{figure}

Before we prove that the bound of Proposition~\ref{prop:suffix} is tight, we need  
a different characterization of 
the language $\Sig^*P_m(b,a)$. 
To describe a DFA for this language,
we will use binary $(m-1)$-tuples which we denote by $x =(x_1,\dots, x_{m-1})$.
\begin{definition}
Define the following DFA:
$$\cB_m(b,a)=( \{0,1\}^{m-1},\{a,b\},(0,\dots,0),\beta,\{ x\in \{0,1\}^{m-1} \mid x_1=1 \} ),$$
where 
\[ 
\beta((x_1,x_2,\dots,x_{m-2},x_{m-1}), \sigma)= 
\begin{cases}
(x_2,x_3,\dots, x_{m-1},x_1), & \text{if } \sigma = a;\\
(x_2,x_3,\dots, x_{m-1},1), & \text{if } \sigma = b.
\end{cases}
\]
In other words, the input $\sig=a$ shifts the tuple $x$ one position to the left cyclically, 
while $\sig = b$ shifts the tuple to the left, losing the first component and replacing $x_{m-1}$ by 1.
DFA $\cB_4(b,a)$ is shown in Figure~\ref{fig:B4}.

\begin{figure}[ht]
\unitlength 8.5pt
\begin{center}
  \begin{picture}(35,20)(-3,2)
    \gasset{Nh=1.8,Nw=4.5,Nmr=1.25,ELdist=0.4,loopdiam=1.5}
    {\footnotesize
    \node(001)(8,12){$(0,0,1)$}
    \node(010)(16,17){$(0,1,0)$}
    \node(100)(24,20){$(1,0,0)$}\rmark(100)
    \node(000)(0,12){$(0,0,0)$}\imark(000)
    \node(101)(24,14){$(1,0,1)$}\rmark(101)
    \node(011)(16,7){$(0,1,1)$}
    \node(110)(24,10){$(1,1,0)$}\rmark(110)
    \node(111)(24,4){$(1,1,1)$}\rmark(111)
    \drawloop(000){$a$}
    \drawloop(111){$a,b$}
  \drawedge(000,001){$b$}
 \drawedge(001,010){$a$}
\drawedge(010,100){$a$}
\drawedge[curvedepth= 4,ELdist=.4](110,101){$a,b$}  
\drawedge(001,011){$b$}
\drawedge(010,101){$b$}
\drawedge(011,111){$b$}
\drawedge(011,110){$a$}

\drawedge[curvedepth= -4,ELdist=-1.8](100,001){$a,b$}  
\drawedge[curvedepth= -3.5,ELdist=-1.8](101,011){$a,b$}
    }
  \end{picture}
\end{center}
\caption{The DFA $\cB_4(b,a)$ for $\Sig^*P_4(b,a)$.}
\label{fig:B4}
\end{figure}

%
%
%
\end{definition}

\begin{proposition}
\label{prop:reachsuffix}
All the states of $\cB_m(b,a)$ are reachable and pairwise distinguishable.
\end{proposition}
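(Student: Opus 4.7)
\medskip

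\noindent\textbf{Proof proposal.}
The plan is to handle reachability and distinguishability separately, treating each component of the tuple as a bit sitting in a shift register of length $m-1$, where both $a$ and $b$ shift left and only differ in what they load into the rightmost cell.

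For reachability, I would observe that the transition $\beta$ has the following global form: after reading any word $\sigma_1\cdots\sigma_k$ (with $k \le m-1$) from the initial state $(0,\dots,0)$, the resulting tuple is
\[
(\underbrace{0,\dots,0}_{m-1-k}, c_1, c_2, \dots, c_k),
\]
where $c_i$ is the bit loaded into the rightmost slot at step $i$; namely $c_i = 1$ when $\sigma_i = b$, and $c_i = x_1^{(i-1)}$ when $\sigma_i = a$, where $x_1^{(i-1)}$ is the first component of the state just before step $i$. The key observation is that during the first $m-2$ steps the original zeros have not yet been shifted out of position $1$, so $x_1^{(i-1)} = 0$ for $1 \le i \le m-1$. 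Therefore, to reach an arbitrary target $(y_1,\dots,y_{m-1})$, read the word $\sigma_1\sigma_2\cdots\sigma_{m-1}$ with $\sigma_i = b$ if $y_i = 1$ and $\sigma_i = a$ if $y_i = 0$; this yields $c_i = y_i$ for every $i$, and hence the target state.

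For distinguishability, let $x \neq x'$ and pick an index $i$ with $x_i \neq x'_i$. Since $a$ acts as the cyclic left shift, reading $a^{i-1}$ moves the original $i$-th component into position~$1$; the resulting states have first components $x_i$ and $x'_i$, which differ. Exactly one of these two states lies in the final set $\{x \mid x_1 = 1\}$, so $a^{i-1}$ distinguishes $x$ from $x'$.

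Neither part looks to be an obstacle: the main subtlety is the reachability argument, where one must justify that the ``leak'' from position $1$ back into the rightmost slot via the $a$-transitions is harmless while we are still in the first $m-1$ steps. I would phrase this explicitly as the invariant that, starting from $(0,\dots,0)$, position $1$ contains $0$ at the beginning of each of the first $m-1$ steps, so that an $a$-transition loads a $0$ exactly where a $0$ is wanted.
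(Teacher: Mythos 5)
Your proof is correct and takes essentially the same approach as the paper: for reachability both arguments encode the target tuple as a word over $\{a,b\}$ in which $b$ loads a $1$ and $a$ loads a $0$ (the paper phrases this as an induction on the binary value of the tuple, reaching $k$ and $k+1$ from $k/2$, while you give the explicit length-$(m-1)$ word together with the invariant that position $1$ stays $0$ throughout --- the same idea with different bookkeeping), and for distinguishability both use that $a^{i-1}$ is accepted from a state exactly when its $i$-th component is $1$. Your explicit statement of the invariant justifying that $a$-steps load a $0$ is a nice touch, but no new idea is involved.
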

\begin{proof}
Consider a state $(x_1,\dots,x_{m-1})$, and view it as the binary representation of a number $k$.
State $k=0$ is reachable by $\eps$ and $k=1$ by $b$.
If $k>1$ is even, it is reachable from $k/2$ by $a$, and $k+1$ is reachable from $k/2$ by $b$.
Thus all the tuples in $\{0,1\}^{m-1}$ are reachable.

We note that if a state $q = (x_1,\ldots,x_i,\ldots,x_{m-1})$ has $x_i = 1$, then $q$
accepts the word $a^{i-1}$. 
For each state $q$, define $\mathbf{A}(q) = \{ a^{i-1} \mid x_i = 1 \}$.
Since each state has a unique binary representation, each state has a unique $\mathbf{A}(q)$, which is a subset of all words accepted by $q$.
Therefore, if $p$ and $q$ are distinct 
states, they are pairwise distinguishable by words in $\mathbf{A}(p) \cup \mathbf{A}(q)$. 
\qed 
\end{proof}

In the example of Figure~\ref{fig:B4}, we have 
$\mathbf{A}(001)=\{aa\}$,
$\mathbf{A}(010)=\{a\}$,
$\mathbf{A}(011)=\{a,aa\}$,
$\mathbf{A}(100)=\{\eps\}$,
$\mathbf{A}(101)=\{\eps,aa\}$,
$\mathbf{A}(110)=\{\eps,a\}$,
$\mathbf{A}(111)=\{\eps,a,aa\}$.



Recall that the left ideal $\Sig^* P_m(b,a)$ is generated by the language $G_m=b\Sig^{m-2}(a\Sig^{m-2})^*$, that is, $\Sig^* P_m(b,a) = \Sig^* G_m$.
\begin{lemma}
\label{lem:suffix-iso}
DFA $\cB_m(b,a)$ is isomorphic to the minimal DFA of $\Sig^*P_m(b,a)$.
\end{lemma}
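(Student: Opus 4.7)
The plan is to prove that $L(\cB_m(b,a)) = \Sig^* P_m(b,a)$; together with Proposition~\ref{prop:reachsuffix}, which already shows that $\cB_m(b,a)$ is minimal (all states reachable and pairwise distinguishable), this yields the isomorphism with the minimal DFA of $\Sig^* P_m(b,a)$, since the minimal DFA is unique up to isomorphism. The approach is to relate $\cB_m(b,a)$ to the NFA of Figure~\ref{fig:Nsuffix} via the subset construction. Because state $0$ of that NFA has self-loops on both letters and is the initial state, every reachable subset must contain $0$; hence the reachable subsets are precisely those of the form $\{0\}\cup S$ with $S\subseteq\{1,\dots,m-1\}$, giving $2^{m-1}$ subsets, which matches the number of states of $\cB_m(b,a)$.

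The key step is to introduce the candidate bijection $\psi\co\{0,1\}^{m-1}\to\{T\subseteq\{0,1,\dots,m-1\}\mid 0\in T\}$ defined by $\psi(x_1,\dots,x_{m-1})=\{0\}\cup\{m-i\mid 1\le i\le m-1,\ x_i=1\}$, and then verify that $\psi$ is a DFA isomorphism between $\cB_m(b,a)$ and the (reachable part of the) subset-construction DFA. This amounts to three checks: $\psi(0,\dots,0)=\{0\}$, so initial states agree; the accepting states of $\cB_m(b,a)$ (those with $x_1=1$) correspond under $\psi$ to subsets containing $m-1$, the accepting state of the NFA; and $\psi$ commutes with both transition functions. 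Compatibility with $a$ amounts to observing that the NFA action $k\mapsto k+1$ (for $1\le k\le m-2$) together with $m-1\mapsto 1$ translates, under the indexing $k\leftrightarrow x_{m-k}$, into the cyclic left shift $x'_i=x_{i+1}$ for $i<m-1$ with wrap $x'_{m-1}=x_1$. Compatibility with $b$ is similar, except that the edge $0\to 1$ unconditionally injects state $1$ into the new subset, which matches the rule $x'_{m-1}=1$ in $\cB_m(b,a)$'s $b$-transition.

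Once $\psi$ is confirmed to be an isomorphism, $L(\cB_m(b,a))$ coincides with the language of the NFA, namely $\Sig^* P_m(b,a)$. Combined with Proposition~\ref{prop:reachsuffix}, $\cB_m(b,a)$ is then a minimal DFA for $\Sig^* P_m(b,a)$, and uniqueness of the minimal DFA up to isomorphism completes the argument. The main obstacle I anticipate is the careful transition bookkeeping, especially on $b$: the ``forced'' state $1$ in the new subset can arise from both the $0\to 1$ edge and (possibly) the $m-1\to 1$ edge, and one has to check that this is absorbed cleanly by $x'_{m-1}=1$ with no double counting, while the remaining coordinates are produced exactly by the shift $x'_i=x_{i+1}$.
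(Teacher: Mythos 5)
Your proposal is correct, but it takes a genuinely different route from the paper's. The paper argues directly about the language: it first shows that every state of $\cB_m(b,a)$ accepts the generator $G_m=b\Sig^{m-2}(a\Sig^{m-2})^*$, giving $\Sig^*G_m\subseteq L(\cB_m(b,a))$, and then proves the reverse inclusion by a positional argument on an accepted word $w=\sig_1\cdots\sig_k$ (inspecting $\sig_{k-(m-2)}$, repeatedly jumping back $m-1$ positions until a $b$ is found, and deriving a contradiction with acceptance if no such $b$ exists). You instead exhibit $\cB_m(b,a)$ as the determinization of the NFA of Figure~\ref{fig:Nsuffix} under the explicit bijection $\psi(x_1,\dots,x_{m-1})=\{0\}\cup\{m-i\mid x_i=1\}$, checking initial states, final states ($x_1=1$ corresponds exactly to $m-1$ being in the subset), and commutation with both transitions; your bookkeeping is right (on $a$ the only NFA edge into state $1$ is $m-1\to 1$, matching $x'_{m-1}=x_1$, while on $b$ the unconditional edge $0\to 1$ absorbs any contribution from $m-1\to 1$, matching $x'_{m-1}=1$ and the loss of $x_1$). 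Your approach is more structural: it explains \emph{why} the bit-vector DFA has the form it does (bit $x_i$ records whether NFA state $m-i$ is active) and sidesteps the paper's delicate modular-position argument. The cost is a dependence on the correctness of the NFA of Figure~\ref{fig:Nsuffix}, which the paper asserts but never proves; that NFA literally accepts $\Sig^*b\Sig^{m-2}(\Sig^{m-1})^*$ (the return edge $m-1\to 1$ carries both letters), so to be self-contained you should add the one-line observation that this language equals $\Sig^*b\Sig^{m-2}(a\Sig^{m-2})^*$. Also, your claim that the reachable subsets are \emph{precisely} those containing $0$ is premature at the point where you make it (only the inclusion is known there), but this does not affect the argument since $\psi$ is defined on all of $\{0,1\}^{m-1}$ and reachability of all states follows afterwards from Proposition~\ref{prop:reachsuffix}.
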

\begin{proof}
First we prove that 
each state $(x_1,\dotsc,x_{m-1})$ of $\cB_m(b,a)$ accepts $G_m$, and thus $\cB_m(b,a)$ accepts a superset of 
$\Sig^*G_m = \Sig^*P_m(b,a)$.
Let $w$ be an arbitary word from $G_m$.
Since $w$ begins with $b$, this letter ``loads'' a 1 into position $x_{m-1}$. 
Then this $b$ is followed by $m-2$ arbitrary letters, which 
shift
the 1 into position $x_1$.
If there is no more input, the word $w$ is accepted.
Otherwise, the next letter is an $a$.
This shifts the positions left cyclically, moving the 1 from position $x_1$ back into position $x_{m-1}$.
Following the $a$, we have $m-2$ arbitrary letters, which 
shift
the 1 to position $x_1$.
If there is no more input, the word $w$ is accepted; otherwise the next letter must be an $a$, and the behaviour just described repeats until there is no more input.
This shows that $G_m$ is accepted from every state.
Thus $\Sig^*G_m \subseteq L(\cB_m(b,a))$.

Next we prove that $L(\cB_m(b,a)) \subseteq \Sig^*G_m$.
If $w\in L(\cB_m
(b,a))
$, then $w$ has length at least $m-1$. 
Let $w = \sig_1\sig_2 \dotsb \sig_k$, where $\sig_i \in \Sig$.
Consider the prefix $\sig_1 \dotsb \sig_{k-(m-2)}$ of $w$. 
If $\sig_{k-(m-2)} = b$, then $w$ 
is in
$\Sig^*b\Sig^{m-2} \subseteq \Sig^*G_m$ and we are done.

If $\sig_{k-(m-2)} = a$, then $w$ 
is in
$\Sig^*a\Sig^{m-2}$.
Now our proof strategy is as follows: jump back $m-1$ letters and look at $\sig_{k-(m-2)-(m-1)}$.
If this letter is a $b$, then $w$ 
is in
$\Sig^*b\Sig^{m-2}a\Sig^{m-2} \subseteq \Sig^*G_m$ and we are done.
If it's an $a$, then $w$
is in
$\Sig^*(a\Sig^{m-2})^2$, and we can keep jumping back $m-1$ letters at a time until we find a $b$.

More formally, we claim there exists $\ell \ge 0$ such that $\sig_{k-(m-2)-\ell(m-1)} = b$, and for $0 \le i < \ell$ we have $\sig_{k-(m-2)-i(m-1)} = a$; thus 
$w$ is in
$\Sig^*b\Sig^{m-2}(a\Sig^{m-2})^\ell$, and we are done.

To see this, suppose the above claim is false.
We can write $k-(m-2) = \ell(m-1) + j$, where $\ell$ is the quotient upon dividing $k-(m-2)$ by $m-1$, and $j$ is the remainder with $0 \le j < m-1$.
Since the claim is false, we have $\sig_j = \sig_{k-(m-2)-\ell(m-1)} = a$. 
In fact, we have $\sig_{k-(m-2)-i(m-1)} = a$ for $0 \le i \le \ell$.
It follows that $w$ 
is in
 $\sig_1 \dotsb \sig_{j-1}(a\Sig^{m-2})^{\ell+1}$.
Since $j-1 < m-1$, the prefix $\sig_1 \dotsb \sig_{j-1}$ cannot lead to an accepting state.
Now, if we are in a non-accepting state, and we apply a word from the language $(a\Sig^{m-2})^*$, we will remain in a non-accepting state.
Thus $w$ is not accepted, which is a contradiction.
So the claim must be true, and this completes the proof. 
\qed
\end{proof}

To finally prove that $(\Sig^*P_m(b,a)) \cap T_n(a,b)$ meets the bound $2^{m-1}n$, we construct the direct product of
the DFAs $\cB_m(b,a)$ and $\cD_n(a,b)$.
We show that all $2^{m-1}n$ states in the direct product are reachable and pairwise distinguishable.

We will use the following lemma in the proof of reachability:
\begin{lemma}
\label{lem:prodreach}
If 
(a) 
DFAs $\cB = (P, \Sig, p_0, \beta, G)$ and $\cD = (Q, \Sig, q_0, \delta, F)$ are minimal DFAs, 
(b) 
 $\delta_\sig$
is bijective on $Q$ for all $\sig \in \Sig$, and 
(c) 
every state in $\{ p_0 \} \times Q$ is reachable in the direct
product of the DFAs $\cP = \cB \times \cD$, then every state in $P \times Q$ is reachable in $\cP$.
\end{lemma}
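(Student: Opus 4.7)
The plan is to exploit minimality of $\cB$ to obtain words that drive $p_0$ to any chosen $p \in P$, and then exploit the bijectivity of each $\delta_\sig$ on $Q$ to precompensate in the second coordinate. Fix an arbitrary target state $(p, q) \in P \times Q$. By minimality of $\cB$, every state of $P$ is reachable from $p_0$, so I can choose a word $w \in \Sig^*$ with $\beta(p_0, w) = p$. The transformation $\delta_w$ of $Q$ is a composition of the $\delta_\sig$'s, each of which is a bijection of $Q$ by hypothesis (b); hence $\delta_w$ is itself a bijection of $Q$, and in particular surjective.

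With this in hand, set $q' = \delta_w^{-1}(q)$, so that $\delta(q', w) = q$. By hypothesis (c), the state $(p_0, q')$ is reachable in the product $\cP$; choose a word $u$ witnessing this. Then the word $uw$ sends $(p_0, q_0)$ first to $(p_0, q')$ and then to $(\beta(p_0, w), \delta(q', w)) = (p, q)$, so $(p, q)$ is reachable in $\cP$. Since $(p, q)$ was arbitrary, every state of $P \times Q$ is reachable.

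The argument is essentially a one-liner and there is no real obstacle; the only observation needed is that the composition of the letter-induced bijections on $Q$ remains a bijection, so $\delta_w$ has an inverse, and therefore an appropriate ``starting'' second coordinate $q'$ can be named before pushing through $w$. Hypothesis (b) is essential here: if some $\delta_\sig$ collapsed two states of $Q$, then $\delta_w$ could fail to be surjective and the required preimage $q'$ need not exist. Hypothesis (a) is used only through the reachability half of minimality of $\cB$ (indistinguishability of states is not needed), and hypothesis (c) is used precisely to locate the auxiliary word $u$.
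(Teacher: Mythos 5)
Your proposal is correct and follows exactly the same route as the paper: pick a word $w$ driving $p_0$ to $p$ (by minimality of $\cB$), note that $\delta_w$ is a bijection of $Q$ so the preimage $q' = q\delta_w^{-1}$ exists, reach $(p_0,q')$ by hypothesis (c), and then apply $w$. No differences worth noting.
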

\begin{proof}
Suppose every state in $\{p_0\} \times Q$ is reachable. 
We will show that $(p,q)$ is reachable for all $p \in P$ and $q \in Q$. 
Let $w$ be a word over $\Sigma$ that such that 
$p_0\beta_w = p$; 
such a word exists since $\cB$ is minimal.
Since $\delta_\sig$ is bijective for all $\sig \in \Sigma$, 
the transformation $\delta_w$ is bijective and hence has an inverse.
So we may reach $(p,q)$ by first reaching 
$(p_0,q\delta_w^{-1})$ and then applying $\delta_w$.
\qed
\end{proof}

We can now prove the following theorem:
\begin{theorem}
 For $m,n\ge 2$,
if $\kappa(P) \le m$ and $\kappa(T) \le n$,
 then $\kappa((\Sig^*P)\cap T) \le 2^{m-1}n$, and this bound is tight if the cardinality of $\Sig$ is at least 2.
\label{thm:suffixtight}
 \end{theorem}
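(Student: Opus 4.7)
The upper bound is Proposition~\ref{prop:suffix}. To show tightness, I would work with the witnesses $P = P_m(b,a)$ and $T = T_n(a,b)$ set up earlier: by Lemma~\ref{lem:suffix-iso}, $\cB_m(b,a)$ is the minimal DFA for $\Sig^*P$ with $2^{m-1}$ states, so I would form the direct product $\cP = \cB_m(b,a) \times \cD_n(a,b)$ (which accepts $(\Sig^*P)\cap T$ and has $2^{m-1}n$ states) and argue that $\cP$ is minimal by verifying reachability and pairwise distinguishability of all its states.

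For reachability I would invoke Lemma~\ref{lem:prodreach}. Both factors are minimal (the first by Lemma~\ref{lem:suffix-iso} and Proposition~\ref{prop:reachsuffix}; the second as argued earlier), and in $\cD_n(a,b)$ both $\delta_a$ (the $n$-cycle $(0,1,\ldots,n-1)$) and $\delta_b$ (the $(n-1)$-cycle $(1,2,\ldots,n-1)$ fixing $0$) are bijective on $Q_n$. Every state $((0,\ldots,0), q)$ is reached by the word $a^q$, since $\beta_a$ fixes the all-zero tuple.

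For distinguishability, consider two distinct product states $(x,q)$ and $(y,r)$. When $q \ne r$, the word $w = b^{m-1} a^k$ with $k = (n-1 - q\delta_{b^{m-1}}) \bmod n$ distinguishes them: $\beta_{b^{m-1}}$ drives every state of $\cB_m(b,a)$ to the accepting sink $(1,\ldots,1)$; meanwhile $\delta_{b^{m-1}}$ is a bijection, so $q\delta_{b^{m-1}} \ne r\delta_{b^{m-1}}$, and the trailing $a^k$ places exactly one of them at $n-1$.

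The hard case---and the main obstacle---is $x \ne y$ with $q = r$, where $b^{m-1}$ would destroy the $\cB_m$ distinction. I would pick an index $i \in \{0,\ldots,m-2\}$ with $x_{i+1} \ne y_{i+1}$ (WLOG $x_{i+1}=1$, $y_{i+1}=0$) and first try $w = a^{i+k(m-1)}$: because $a^{m-1}$ is the identity on $\cB_m$, the first components after $w$ are $x_{i+1}$ and $y_{i+1}$, so they differ; this works when $\gcd(m-1,n)$ divides $n-1-q-i$. When this CRT-like condition fails, I would use a word of the form $W_0\,a\,W_1\,a\cdots a\,W_k$ with $|W_0|=i$ and $|W_p|=m-2$ for $1 \le p \le k$, placing the explicit $a$'s at positions $i+1,\, i+m,\, \ldots,\, i+1+(k-1)(m-1)$. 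A short induction on the shift-register dynamics of $\cB_m(b,a)$ shows the first component at time $i+k(m-1)$ equals $x_{i+1}$ starting from $x$ and $y_{i+1}$ starting from $y$, regardless of the $W_p$'s, so the $\cB_m$-distinction is preserved. Since $\delta_a$ and $\delta_b$ generate the symmetric group $S_n$ on $Q_n$ (the word $ab^{n-2}$ realizes the transposition $(0,n-1)$, which together with the $n$-cycle $\delta_a$ generates $S_n$), for $k$ sufficiently large one can choose the $W_p$'s so that $q\delta_w = n-1$, completing the distinguishing word.
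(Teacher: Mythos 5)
Your reachability argument and your Case $q \ne r$ match the paper's proof. For the hard case ($x \ne y$, $q = r$) you take a genuinely different route: the paper first normalizes to a pair of tuples differing in exactly one component with second coordinate $n-1$, and then uses only the two words $a^n$ and $ba^{n-2}$ --- both of which \emph{fix} $n-1$ in $\cD_n(a,b)$ --- so that the second component never needs to be steered and all the work goes into driving the differing index down to $1$. Your shift-register observation (that for a word of the form $W_0\,a\,W_1\,a \cdots a\,W_k$ with the stated block lengths, the first component of the $\cB_m(b,a)$-state at time $i+k(m-1)$ equals $x_{i+1}$ regardless of the free blocks) is correct and is an attractive way to decouple the two coordinates.

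However, the last step is a genuine gap. You justify ``for $k$ sufficiently large one can choose the $W_p$'s so that $q\delta_w = n-1$'' by noting that $\delta_a$ and $\delta_b$ generate $S_n$. That is not enough: generating $S_n$ as a group tells you nothing about which permutations (or which images of $q$) are realizable by words of the highly constrained shape you need --- total length exactly $i+k(m-1)$ with the letter $a$ forced at the positions $i+1+(p-1)(m-1)$. Indeed, your own first attempt already shows the danger: the unconstrained-looking family $a^{i+k(m-1)}$ fails to reach $n-1$ exactly when $\gcd(m-1,n) \nmid (n-1-q-i)$, even though $\delta_a$ alone generates a transitive group. What you actually need is a separate argument that the set $S_t \subseteq Q_n$ of second components reachable at time $t$ (over all choices of the free letters) eventually contains $n-1$ at a time of the required form; since $\delta_a$ and $\delta_b$ agree on every state except $0$ and $n-1$, the set $S_t$ grows at a free position only when it contains exactly one of $0$ and $n-1$, and showing it cannot stall forever at a proper nonempty subset requires an analysis of the forced-position spacing modulo $n$ (the case $(m-1)\mid n$ needs particular care). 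This can be made to work, but it is a substantive missing piece, comparable in length to the paper's own contradiction argument for the existence of the exponent $k$; as written, the proof is incomplete at precisely the point you identify as the main obstacle.
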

\begin{proof}
The upper bound follows from Proposition \ref{prop:suffix}.
To prove that the upper bound is tight, we show that all states in the direct product $\cB_m(b,a) \times \cD_n(a,b)$ are reachable and pairwise distinguishable.

\noin
{\bf Reachability.}
Let $B = \{0,1\}^{m-1}$ denote the state set of $\cB_m(b,a)$ and let $v_0$ denote the initial state of $\cB_m(b,a)$.
The initial state of the direct product is $(v_0,0)$.
Every state of the form $(v_0,q)$, where $0\le q \le n-1$, is reachable by $a^q$.
We observe that ${(\delta_T})_a = (0,1,\ldots,n-1)$ and ${(\delta_T})_b = (1,2,\ldots,n-1)$ (where $\delta_T$ is the transition function of $\cD_n(a,b)$) are both bijective on $\{ 0, 1, \ldots, n-1 \}$.
Therefore, by applying Lemma~\ref{lem:prodreach}, we see that every state in $B \times \{ 0, \ldots, n-1 \}$ is reachable.

\noin
{\bf Distinguishability.}
In this part of the proof, to simplify the notation, we simply write $w$ for the transformation induced by $w$ in the appropriate DFA. For example, if $u \in B$, then $uab^{n-2}$ is equivalent to $u\beta_a \beta_b^{n-2}$ or $u\beta_{ab^{n-2}}$.

First note the following facts about $\cB_m(b,a)$:
\bi
\item
The word $b^{m-1}$ sends 
all states
 to the final state $(1,1,\dotsc,1)$.
\item
The final state $(1,1,\dotsc,1)$ is fixed by all words in $\{a,b\}^*$.
\item
The letter $a$ permutes the states. Thus if $u$ and $v$ are distinct, then $ua$ and $va$ are distinct.
\item
Suppose $u = (u_1,u_2,\dotsc,u_{m-1})$ and $v = (v_1,v_2,\dotsc,v_{m-1})$ are states, and define $d(u,v)$ to be the largest integer $i$ such that $u_i \ne v_i$, or $0$ if the states are equal. 
If $d(u,v) = 1$, then $u$ and $v$ are distinguishable by $\eps$ (that is, one is final and one is non-final). 
\item
If $d(u,v) \ne 1$, then $b^{d(u,v)-1}$ sends $u$ to a state $u'$ and $v$ to a state $v'$ such that $d(u',v') = 1$. 
\ei

Now, let $(u,p)$ and $(v,q)$ be distinct states, where $u,v \in \{0,1\}^{m-1}$.

\noin
{\bf Case 1.} $p \ne q$. 
Without loss of generality, we can assume $u = v = (1,1,\dotsc,1)$; otherwise apply $b^{m-1}$. Choose a word $w$ that distinguishes $p$ and $q$ in $\cD_n(a,b)$; then $w$ distinguishes $(u,p)$ and $(v,q)$.

\noin
{\bf Case 2.} $p = q$ (and thus $u \ne v$).
We may assume without loss of generality that $u$ and $v$ differ in exactly one component, and that $p = n-1$.
Otherwise, first apply $b^{d(u,v)-1}$ to reach $(u',p')$ and $(v',p')$ such that $d(u',v') = 1$, and note that this implies $u'$ and $v'$ differ in exactly one component.
Then apply $a^{n-1-p'}$ to send $p'$ to $n-1$.

Suppose now that $u$ and $v$ differ in exactly one component and $p = n-1$.
Then $d(u,v)$ is the index of the component where $u$ and $v$ differ.
Furthermore, if we apply a word $w \in \{a,b\}^*$, then either $uw = vw$, or $uw$ and $vw$ differ in exactly one component and $d(uw,vw)$ is the index of this component.
So as long as $w$ does not  erase $u$ and $v$'s differing component, it can be used to shift the  differing component's index.

If $d(u,v) = 1$, then $(u,n-1)$ and $(v,n-1)$ are distinguishable by $\eps$.
So suppose $d(u,v) > 1$, and set $i = d(u,v)$.
Observe that:
\bi
\item
For all $k \ge 0$, we have $d(ua^k,va^k) \equiv i-k \pmod{m-1}$.
\item
For all $k \ge 0$, since $d(u,v) = i > 1$, we have $d(uba^k,vba^k) \equiv i-k-1 \pmod{m-1}$.
\ei
Since $a^n$ and $ba^{n-2}$ both fix $p = n-1$, it follows that:
\bi
\item
If we are in states $(u,n-1)$ and $(v,n-1)$ and apply $a^n$, we reach $(ua^n,n-1)$ and $(va^n,n-1)$ where $d(ua^n,va^n)$ is the unique element of $\{1,\dotsc,m-1\}$ equivalent to $i-n$ modulo $m-1$.
\item
If we are in states $(u,n-1)$ and $(v,n-1)$ and apply $ba^{n-2}$, we reach $(uba^{n-2},n-1)$ and $(vba^{n-2},n-1)$, where $d(uba^{n-2},vba^{n-2})$ is the unique element of $\{1,\dotsc,m-1\}$ equivalent to $i-(n-1)$ modulo $m-1$.
\ei
Let $x = a^n$ and $y = ba^{n-2}$. 
Apply $x^{i-1}$ to the states to reach $(ux^{i-1},n-1)$ and $(vx^{i-1},n-1)$, where $d(ux^{i-1},vx^{i-1})$ is the unique element of $\{1,\dotsc,m-1\}$ equivalent to $i-(i-1)n$ modulo $m-1$. 
We claim that we can now distinguish $(ux^{i-1},n-1)$ and $(vx^{i-1},n-1)$ by applying $y^k$ for some value $k \ge 0$.

We choose $k$ to be the least integer such that $d(ux^{i-1}y^k,vx^{i-1}y^k) = 1$.
Clearly if such a $k$ exists, then $y^k$ distinguishes the states, so we just have to show that $k$ exists.
Suppose for a contradiction that $k$ does not exist.
Observe then that
$d(ux^{i-1}y^\ell,vx^{i-1}y^\ell) > 1$ 
for all $\ell \ge 0$.
Otherwise, we can choose  a minimal $\ell$ so that 
$d(ux^{i-1}y^\ell,vx^{i-1}y^\ell) = 0$;
then we necessarily have $d(ux^{i-1}y^{\ell-1},vx^{i-1}y^{\ell-1}) = 1$, since the only way we can have $u'y = v'y$ is if $d(u',v') \le 1$. It follows then that we can take $k = \ell-1$.
Now, set $\ell = (i-1)(m-2)$.
Since 
$d(ux^{i-1}y^j,vx^{i-1}y^j) > 1$ 
for all $j \le \ell$, it follows that
$d(ux^{i-1}y^\ell,vx^{i-1}y^\ell)$ is the unique element of $\{1,\dotsc,m-1\}$ equivalent to $i-(i-1)n-\ell(n-1)$ modulo $m-1$. 
Indeed, each application of $y$ subtracts $n-1$ (modulo $m-1$) from the component where the bit tuples differ,
and since we always have $d(ux^{i-1}y^j,vx^{i-1}y^j) > 1$,
 the states are never 
mapped to the same state
 by the $b$ at the start of $y$.
But now, we have
{\small
\[ i-(i-1)n-\ell(n-1) = i - (i-1)n - (i-1)(m-2)(n-1) = i - (i-1)(n+(m-2)(n-1)). \]
}
Since $m-2 \equiv -1 \pmod{m-1}$, we have
\[ i-(i-1)n-\ell(n-1) \equiv i - (i-1)(n-n+1) \equiv i - (i-1) \equiv 1 \pmod{m-1}. \]
So in fact
$d(ux^{i-1}y^\ell,vx^{i-1}y^\ell) = 1$.
This is a contradiction, and 
so
the integer $k$ exists.
Thus if we set $w = x^{i-1}y^k$, the states $(u,n-1)$ and $(v,n-1)$ are distinguished by $w$ (note that both $x$ and $y$ fix the second component $n-1$). \qed
\end{proof}

\section{Factor Matching}
Let $T$ and $P$ be regular languages over an alphabet $\Sig$.
We want to find 
the worst-case state complexity of
the set $L$ of all the words of $T$ that have words of $P$ as factors. More formally, $L = \{ xwy \mid w \in P, xwy \in T \}=(\Sig^*P\Sig^*)\cap T$.
\begin{proposition}
\label{prop:factor}
 For $m,n\ge 3$, 
if $\kappa(P) \le m$ and $\kappa(T) \le n$,
 then $\kappa((\Sig^*P\Sig^*)\cap T) \le (2^{m-2}+1)n$.
  \end{proposition}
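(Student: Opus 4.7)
The plan is to mirror the proof of Proposition~\ref{prop:suffix} (the suffix case) almost verbatim, simply replacing the left-ideal bound by the analogous two-sided-ideal bound. Since $\Sig^* P \Sig^*$ is the two-sided ideal generated by $P$, it suffices to invoke a known state-complexity bound for two-sided ideals together with the standard product construction for intersection.

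First I would recall that if $\kappa(P) \le m$, then the two-sided ideal $\Sig^* P \Sig^*$ has state complexity at most $2^{m-2} + 1$. This is the counterpart, for two-sided ideals, of the bound $m$ for right ideals and $2^{m-1}$ for left ideals, and is exactly the result cited from~\cite{BJL13} in the preceding two sections. Concretely, given a minimal DFA for $P$, one builds an NFA for $\Sig^* P \Sig^*$ by adding a self-loop under every letter at the initial state and collapsing all final states to a single absorbing final state; determinizing yields a DFA whose reachable subsets are constrained enough to give at most $2^{m-2} + 1$ states.

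Next I would combine this with the standard upper bound for intersection from~\cite{YZS94}: if $\kappa(K) \le k$ and $\kappa(T) \le n$, then $\kappa(K \cap T) \le kn$, via the direct product construction. Applying this with $K = \Sig^* P \Sig^*$ and $k = 2^{m-2}+1$ gives
\[
\kappa((\Sig^* P \Sig^*) \cap T) \le (2^{m-2}+1)\, n,
\]
which is the desired bound. No obstacles are expected here; the proof is a two-line composition of known facts, exactly as in Proposition~\ref{prop:suffix}. The genuine work — constructing witnesses that meet this bound, proving reachability and distinguishability of all $(2^{m-2}+1)n$ states in the product DFA — lies in the subsequent tightness theorem, not in this proposition.
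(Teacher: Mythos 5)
Your proposal is correct and matches the paper's proof essentially verbatim: both cite the known bound of $2^{m-2}+1$ for the state complexity of the two-sided ideal $\Sig^*P\Sig^*$ from~\cite{BJL13} and compose it with the standard $kn$ bound for intersection. The paper's version is just as terse, deferring all the real work to the tightness argument, exactly as you anticipated.
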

 \begin{proof}
The language $\Sig^*P\Sig^*$ is the two-sided ideal generated by $P$. It is known that the state complexity of $\Sig^*P\Sig^*$ is at most $2^{m-2}+1$~\cite{BJL13}.
Thus the complexity of the intersection with $T$ is at most  $(2^{m-2}+1)n$.
\qed
\end{proof}
\smallskip

To prove the bound is tight, we construct a witness that meets the bound.
Let $T_n(a,b)$ be accepted by the DFA $\mathcal{D}_n(a,b) = (Q_n, \Sigma, \delta_T, 0, \{ n-1 \})$, where $Q_n=\{0,1,\dots,n-1\}$, $\Sig=\{a,b\}$ and $\delta_T$ is defined by the transformations $a \colon (0, 1, \ldots, n-1)$ 
and
 $b \colon (1,2,\dots,n-2)$.
This DFA is minimal because the shortest word in $a^*$ accepted by state $q$ is $a^{n-1-q}$. 

\begin{figure}[ht!]
\unitlength 8.5pt
\begin{center}\begin{picture}(37,7)(0,3)
\gasset{Nh=1.8,Nw=3.8,Nmr=1.25,ELdist=0.4,loopdiam=1.5}
	{\small
\node(0)(1,7){0}\imark(0)
\node(1)(8,7){1}
\node(2)(15,7){2}
\node[Nframe=n](3dots)(22,7){$\dots$}
\node(n-2)(29,7){$m-2$}
\node(n-1)(36,7){$m-1$}\rmark(m-1)
\drawedge(0,1){$b$}
\drawedge(1,2){$a,b$}
\drawloop(0){$a$}
\drawloop(n-1){$a$}
\drawedge(2,3dots){$a,b$}
\drawedge(3dots,m-2){$a,b$}
\drawedge(m-2,m-1){$b$}
\drawedge[curvedepth= 4.0,ELdist=-1.0](m-1,0){$b$}
\drawedge[curvedepth= 2.0,ELdist=-1.0](m-2,1){$a$}
	}
\end{picture}\end{center}
\caption{Minimal DFA $\cD_m(b,a)$ of $P_m(b,a)$.}
\label{fig:Mfactor}
\end{figure}


It turns out 
that $\cD_n(a,b)$ and its dialect $\cD_m(b,a)$ act as  witnesses in the case of factor matching.
We denote the language of $\cD_m(b,a)$ by $P_m(b,a)$; the DFA $\cD_m(b,a)$ is shown in Figure~\ref{fig:Mfactor}.
Let $E = \Sig^{m-3}(a\Sig^{m-3})^*$.
Then the language accepted by the DFA of Figure~\ref{fig:Mfactor} is denoted by the regular expression 
$$
P_m(b,a)=(a \cup bE ba^*b)^* bEba^*.
$$
Now we have 
$$
\Sig^*P_m(b,a)\Sig^*=\Sig^*(a \cup bE ba^*b)^* bEba^*\Sig^* = 
\Sig^*bE b\Sig^*.$$
The new generator of the two-sided ideal is $G_m=bEb= b \Sig^{m-3}(a\Sig^{m-3})^* b$.

Before we prove
that the bound is tight,
we describe a DFA for the language $\Sig^*P_m(b,a)\Sig^*$. We will use binary $(m-2)$-tuples which we denote by $x =(x_1,\dots, x_{m-2})$.
\begin{definition}
Define the following DFA:
$$\cC_m(b,a)=( \{0,1\}^{m-2}\cup \{f\},\{a,b\},(0,\dots,0),\gamma,\{ f \} ),$$
where $\gamma(f, \sig) = f$ for all $\sig \in \Sig$, and
\[ 
\gamma((x_1,x_2,\dots,x_{m-3},x_{m-2}), \sigma) = 
\begin{cases}
(x_2,x_3,\dots, x_{m-2},x_1), & \text{if } \sigma = a;\\
(x_2,x_3,\dots, x_{m-2},1), & \text{if } \sigma = b \text {, } x_1=0;\\
f, & \text{if } \sigma = b \text {, } x_1=1.
\end{cases}
\]
In other words, if $x\ne f$, input $\sig=a$ shifts $x$ one position to the left cyclically; input $\sig = b$ shifts the
tuple to the left, losing the leftmost component and replacing $x_{m-2}$ by 1 if $x_1=0$.
Finally,
$\gamma$ sends the state to $f$ if $x_1=1$  and $\sig = b$, and all inputs are the identity on $f$.
DFA 
$\cC_5(b,a)$
 is shown in Figure~\ref{fig:C5}.

\begin{figure}[ht]
\unitlength 8.5pt
\begin{center}
  \begin{picture}(35,20)(-3,2)
    \gasset{Nh=1.8,Nw=4.5,Nmr=1.25,ELdist=0.4,loopdiam=1.5}
    {\footnotesize
    \node(001)(8,12){$(0,0,1)$}
    \node(010)(16,17){$(0,1,0)$}
    \node(100)(24,20){$(1,0,0)$}
    \node(000)(0,12){$(0,0,0)$}\imark(000)
    \node(101)(24,14){$(1,0,1)$}
    \node(011)(16,7){$(0,1,1)$}
    \node(110)(24,10){$(1,1,0)$}
    \node(111)(24,4){$(1,1,1)$}
    \node(f)(32,12){$f$}\rmark(f)
    \drawloop(000){$a$}
    \drawloop(111){$a$}
\drawloop(f){$a,b$}
  \drawedge(000,001){$b$}
 \drawedge(001,010){$a$}
\drawedge(010,100){$a$}
\drawedge[curvedepth= 4,ELdist=.4](110,101){$a$}  
\drawedge[curvedepth= -5,ELdist=.4](100,001){$a$} 
\drawedge[curvedepth= -4,ELdist=.4](101,011){$a$} 
\drawedge(001,011){$b$}
\drawedge(010,101){$b$}
\drawedge(011,111){$b$}
\drawedge(011,110){$a$}

\drawedge(100,f){$b$}
\drawedge(101,f){$b$}
\drawedge(110,f){$b$}
\drawedge(111,f){$b$}

    }
  \end{picture}
\end{center}
\caption{The DFA $\cC_5
(b,a)
$ for $\Sig^*P_5(b,a)\Sig^*$.}
\label{fig:C5}
\end{figure}

\end{definition}

\begin{proposition}
\label{prop:reachfactor}
All the states of $\cC_m(b,a)$ are reachable and pairwise distinguishable.
\end{proposition}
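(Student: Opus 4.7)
The plan is to adapt the reachability-by-binary-representation argument from the proof of Proposition \ref{prop:reachsuffix}, and then to establish distinguishability using a short ``probe'' word that exploits the fact that $f$ is the unique final state of $\cC_m(b,a)$.

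For reachability, I would identify each binary $(m-2)$-tuple $(x_1,\dotsc,x_{m-2})$ with the integer $k = \sum_{i=1}^{m-2} x_i\,2^{m-2-i}$, and show by strong induction on $k \in \{0,\dotsc,2^{m-2}-1\}$ that the state of value $k$ is reachable. The base case $k=0$ is the initial state. For $k \ge 1$, the state of value $\lfloor k/2 \rfloor$ satisfies $\lfloor k/2 \rfloor < 2^{m-3}$, so its first component is $0$; the transition rules then give that $a$ sends it to value $2\lfloor k/2 \rfloor$, while $b$ sends it (without jumping to $f$, because $x_1 = 0$) to value $2\lfloor k/2 \rfloor + 1$, and one of these equals $k$ according to the parity of $k$. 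Finally, $f$ is reached from $(1,0,\dotsc,0)$ (the state of value $2^{m-3}$) by applying $b$.

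For distinguishability, the state $f$ is the only final state, so $\eps$ distinguishes $f$ from every binary tuple. For two distinct binary tuples $u,v$, pick any index $i$ with $u_i \ne v_i$. The word $a^{i-1}$ cyclically rotates each tuple so that the $i$-th coordinate becomes the first, and the subsequent $b$ drives a state to $f$ exactly when that first coordinate is $1$; hence $a^{i-1}b$ is accepted from exactly one of $u$ and $v$.

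I do not foresee a significant obstacle. The only mild point to verify in the reachability induction is that $\lfloor k/2 \rfloor < 2^{m-3}$ for all $k \le 2^{m-2}-1$, which guarantees that the predecessor tuple has first component $0$ and so $b$ performs a shift rather than jumping to $f$; this holds for every $m \ge 3$, so the argument applies uniformly across the range of parameters in the proposition.
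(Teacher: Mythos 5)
Your proposal is correct and takes essentially the same route as the paper: reachability via the binary-representation doubling argument, and distinguishability via the probe words $a^{i-1}b$ together with $\eps$ to separate $f$ from the binary states. The one detail you make explicit --- that the predecessor state of value $\lfloor k/2\rfloor$ has first component $0$, so $b$ performs a shift rather than jumping to $f$ --- is precisely the point the paper leaves implicit when it refers back to the reachability argument of Proposition~\ref{prop:reachsuffix}.
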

\begin{proof}
Consider a state $(x_1,\dots,x_{m-2})$, and view it as the binary representation of a number $k$.
Then $k$ is reachable as in the proof of Proposition~\ref{prop:reachsuffix},
and $f$ is reached by applying 
$b$
to any state that has $x_1=1$.

We note that if a state $q = (x_1,\ldots,x_i,\ldots,x_{m-2})$ has $x_i = 1$, then $q$
accepts the word $a^{i-1}b$. 
Define $\mathbf{Ab}(q) = \{ a^{i-1}b \mid x_i = 1 \}$. 
As in Proposition~\ref{prop:reachsuffix}, each 
binary (that is, non-$f$) 
state has a unique binary representation, and so each of these
states has a unique $\mathbf{Ab}(q)$, which is a subset of all words accepted by $q$.
Therefore, if $p$ and $q$ are distinct binary states, they are pairwise distinguishable by words in $\mathbf{Ab}(p) \cup \mathbf{Ab}(q)$.
We observe that $f$ is the only final state, and is therefore distinguishable from every other state by $\epsilon$.
\qed
\end{proof}

In the example of Figure~\ref{fig:C5}, we have 
$\mathbf{Ab}(001)=\{aab\}$,
$\mathbf{Ab}(010)=\{ab\}$,
$\mathbf{Ab}(011)=\{ab,aab\}$,
$\mathbf{Ab}(100)=\{b\}$,
$\mathbf{Ab}(101)=\{b,aab\}$,
$\mathbf{Ab}(110)=\{b,ab\}$,
$\mathbf{Ab}(111)=\{b,ab,aab\}$.

Recall that the two-sided ideal $\Sig^* P_m(b,a)\Sig^*$ is generated by the language $G_m=b\Sig^{m-3}(a\Sig^{m-3})^*b$, that is, $\Sig^* P_m(b,a) \Sig^* = \Sig^* G_m \Sig^*$.
\begin{lemma}
$\cC_m(b,a)$ 
is isomorphic to the minimal DFA of $\Sig^*P_m(b,a)\Sig^*$.
\end{lemma}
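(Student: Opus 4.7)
The plan is to reduce the problem to proving the language equality $L(\cC_m(b,a)) = \Sig^*G_m\Sig^*$, because Proposition~\ref{prop:reachfactor} already guarantees that $\cC_m(b,a)$ is minimal. I would then prove the two inclusions separately, following the strategy of Lemma~\ref{lem:suffix-iso}.

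For $\Sig^*G_m\Sig^* \subseteq L(\cC_m(b,a))$: since $f$ is an absorbing accepting state, it suffices to show that reading any $g \in G_m$ from an arbitrary binary state $(x_1,\dots,x_{m-2})$ drives the DFA to $f$. I would first handle the base case $g = b\Sig^{m-3}b$: if $x_1 = 1$, the leading $b$ immediately sends us to $f$; otherwise the leading $b$ deposits a $1$ at position $m-2$, and each of the following $m-3$ letters shifts positions $2,\dots,m-2$ one place to the left (both $a$ and $b$ do so, as long as the run stays in the binary part), so the deposited $1$ arrives at position $1$ just as the final $b$ triggers the transition to $f$. The general case $g = b\Sig^{m-3}(a\Sig^{m-3})^k b$ then follows by induction on $k$: every intervening $a$ cycles the bit from position $1$ back to position $m-2$, and the next block of $m-3$ arbitrary letters shifts it back to position $1$.

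For the reverse inclusion $L(\cC_m(b,a)) \subseteq \Sig^*G_m\Sig^*$: given $w = \sig_1 \dotsb \sig_\ell$ accepted, let $q^{(k)} = (x_1^{(k)},\dots,x_{m-2}^{(k)})$ denote the state after step $k$, and let $t$ be the first step at which $q^{(t)} = f$. Then $\sig_t = b$ and $x_1^{(t-1)} = 1$. Because the run stays outside $f$ for all $k < t$, both letters act as a left-shift on positions $2,\dots,m-2$, which yields the key identity $x_1^{(k)} = x_{m-2}^{(k-(m-3))}$ whenever $k - (m-3) \ge 0$. Applying this at $k = t-1$ gives $x_{m-2}^{(s)} = 1$ where $s = t-m+2$. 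This bit was set by $\sig_s$, and there are only two possibilities: either $\sig_s = b$ with $x_1^{(s-1)} = 0$, which immediately reveals $\sig_s \dotsb \sig_t$ as a factor of $w$ lying in $b\Sig^{m-3}b \subseteq G_m$ (the \emph{stopping} case); or $\sig_s = a$ with $x_1^{(s-1)} = 1$ (the \emph{continuing} case), which reduces the problem to the strictly smaller index $s-1$. Each continuing step prepends a block $a\Sig^{m-3}$ to the evolving factor, so once the stopping case eventually triggers at iteration $j$, the resulting factor of $w$ lies in $b\Sig^{m-3}(a\Sig^{m-3})^j b \subseteq G_m$.

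The main obstacle is proving termination of this backward iteration. My plan is to use the invariant that $x_1^{(k)} = 1$ forces $k \ge m-2$: the initial state is the all-zero tuple, and the shift identity propagates that zero up through every position in $m-3$ steps, so a $1$ cannot appear at position $1$ before time $m-2$. Since the iteration index drops by $m-2$ at each continuing step, it cannot continue past the point where this threshold is violated; at that moment the continuing case is impossible and the current letter is forced to be $b$, which finishes the stopping case and hence the proof. Combined with Proposition~\ref{prop:reachfactor}, this yields the desired isomorphism with the minimal DFA of $\Sig^*P_m(b,a)\Sig^*$.
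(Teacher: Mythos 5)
Your proposal is correct and follows essentially the same route as the paper: both directions are proved by tracking the $1$ that the leading $b$ of a word of $G_m$ loads into position $m-2$, and the reverse inclusion is handled by cutting the word at the first transition into $f$ and jumping backwards $m-2$ letters at a time until a $b$ is found. The only cosmetic difference is your termination argument, which uses the invariant that $x_1^{(k)}=1$ forces $k\ge m-2$ (a direct consequence of the all-zero initial state and the shift identity), whereas the paper derives a contradiction with acceptance; both amount to the observation that the backward jumps cannot run off the front of the word.
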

\begin{proof}
First we prove that each state of $\cC_m(b,a)$ accepts $G_m\Sig^*$, and thus $\cC_m(b,a)$ accepts $\Sig^*G_m\Sig^* = \Sig^*P_m(b,a)\Sig^*$.
Since $f$ accepts $\Sig^*$, it also accepts
$G_m\Sig^*$.  In binary states of the form $(x_1, x_2, \ldots, x_{m-2})$, applying $b$ ``loads'' a 1 into $x_{m-2}$. 
Then
after applying 
a word from
$\Sig^{m-3}$, the resulting state will either be a binary state where $x_1 = 1$, or $f$.  If the current
state is $f$, then no matter what inputs are applied, the word will be accepted, and hence $G_m\Sig^*$ is accepted. If
the current state is a binary state with $x_1 = 1$, then applying $a$ will cycle the 1 at $x_1$ to $x_{m-2}$, and
applying 
a word from
$\Sig^{m-3}$
will either shift the 1 back to $x_1$ or 
move
to $f$ if another 1 in the state is
shifted to $x_1$ and $b$ is applied. Therefore, applying 
a word from
$(a\Sig^{m-3})^*$ 
from a state where $x_1 = 1$ will result in
either a binary state where $x_1 = 1$ or $f$, and applying 
a word from
$b\Sig^*$ 
from one of those states will result in $f$, so
$G_m\Sig^*$ is accepted. Therefore, $\Sig^*G_m\Sig^* \subseteq L(\cC_m)$.

We now show that $L(\cC_m) \subseteq \Sig^*G_m\Sig^*$. First, we observe that every word in $L(\cC_m)$ has a length of
at least $m-1$ and at least two $b$s: a $b$ to load a 1 into $x_{m-2}$, $m-3$ letters to shift the 1 to $x_1$, and a $b$
to 
move
to $f$. Let $w = \sig_1 \ldots \sig_k$ be a word in $L(\cC_m)$, where each $\sig_i \in \Sig$. Suppose that
the $j$-th letter of $w$ is what first causes a transition to $f$; in other words, $(0, \ldots, 0)\gamma_{\sig_1 \ldots
\sig_{j-1}} \ne f$ and $(0, \ldots, 0)\gamma_{\sig_1 \ldots \sig_j} = f$. The remaining letters in $w$, $\sig_{j+1}
\ldots \sig_k$, do not matter since they cannot cause a transition away from $f$, so we only need to consider the prefix
$w_j = \sig_1 \ldots \sig_j$. 

Now the rest of the argument is similar to the proof of Lemma \ref{lem:suffix-iso}.
Letter $\sig_j$ of $w_j$ must be a $b$.
Look at letter $\sig_{j-1-(m-3)}$. If this letter is a $b$, then $w_j$ 
is in
 $\Sig^*b\Sig^{m-3}b$, and so $w$ 
is in
 $\Sig^*b\Sig^{m-3}b\Sig^* \subseteq \Sig^* G_m \Sig^*$, and we are done.
If the letter $\sig_{j-1-(m-3)}$ is an $a$, we keep jumping back $m-2$ letters at a time until we find a $b$.
In other words, we choose $\ell \ge 0$ as small as possible such that $\sig_{j-1-(m-3)-\ell(m-2)} = b$.
If no such $\ell$ exists, then as in the proof of Lemma \ref{lem:suffix-iso}, one can show that $w_j$ must 
be in
$\Sig^i(a\Sig^{m-3})^*b$ with $i < m-2$ and that $w$ is not accepted.
So $\ell$ must exist, and therefore $w_j$ 
is in
 $\Sig^*b\Sig^{m-3}(a\Sig^{m-3})^\ell b$, which implies $w \in \Sig^*G_m\Sig^*$.
\qed
\end{proof}

We can now prove the following theorem:
\begin{theorem}
For $m,n\ge 3$, 
if $\kappa(P) \le m$ and  $\kappa(T) \le n$,
then $\kappa((\Sig^*P\Sig^*)\cap T) \le (2^{m-2}+1)n$, 
and this bound is tight if the cardinality of $\Sig$ is at least 2.
\label{thm:factortight}
\end{theorem}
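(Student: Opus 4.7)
The upper bound is immediate from Proposition~\ref{prop:factor}. For tightness I would take as witnesses $P=P_m(b,a)$ (with DFA $\cD_m(b,a)$) and $T=T_n(a,b)$ (with DFA $\cD_n(a,b)$). By the preceding lemma, $\cC_m(b,a)$ is isomorphic to the minimal DFA of $\Sig^*P_m(b,a)\Sig^*$, so it suffices to show that the direct product $\cC_m(b,a)\times\cD_n(a,b)$, which has $(2^{m-2}+1)n$ states and recognizes $(\Sig^*P_m(b,a)\Sig^*)\cap T_n(a,b)$, is minimal.

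For reachability I would invoke Lemma~\ref{lem:prodreach}. Both component DFAs are minimal (for $\cC_m(b,a)$ by Proposition~\ref{prop:reachfactor}), and in $\cD_n(a,b)$ the letter $a$ induces the full cycle $(0,1,\ldots,n-1)$ while $b$ induces $(1,2,\ldots,n-2)$ fixing $0$ and $n-1$, both bijective on $Q_n$. The all-zero tuple $(0,\ldots,0)$ is a fixed point of $a$ in $\cC_m(b,a)$, so $((0,\ldots,0),q)$ is reached from the initial state by $a^q$; this covers $\{p_0\}\times Q_n$, and Lemma~\ref{lem:prodreach} then delivers reachability of all $(2^{m-2}+1)n$ states of the product.

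For distinguishability let $(u,p)$ and $(v,q)$ be distinct. The driving facts are that $f$ is the unique accepting state of $\cC_m(b,a)$ and is absorbing; that $a$ in $\cC_m(b,a)$ merely cyclically shifts a binary tuple and so never produces $f$; that both letters act bijectively on $Q_n$; and that $(f,n-1)$ is the sole accepting state of the product. I would split into three cases. First, if $u=v\ne f$, pick $w_1$ with $u\cdot w_1=f$ (exists by minimality of $\cC_m(b,a)$); bijectivity on $Q_n$ preserves $p\cdot w_1\ne q\cdot w_1$, and minimality of $\cD_n(a,b)$ then supplies a word distinguishing these images, which works in the product because $f$ is absorbing. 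Second, if $u=v=f$, any word distinguishing $p$ from $q$ in $\cD_n(a,b)$ distinguishes directly in the product. Third, if $u\ne v$, I first reduce to the case where exactly one first coordinate is $f$: if both are binary, choose an index $i$ with, say, $u_i=1$ and $v_i=0$ and apply $a^{i-1}b$, which sends $u$ to $f$ while keeping $v$ binary; then with states $(f,p')$ and $(v',q')$ where $v'$ is binary, apply $a^{n-1-p'}$, driving $p'$ to $n-1$ so that the first state becomes $(f,n-1)$ and is accepting, while $v'$ remains among the binary states, making the second state non-accepting.

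The bulk of the work is the distinguishability argument. The main obstacle I anticipate is cleanly organizing the cases, but the feature keeping the proof short is that the letter $a$ alone never creates $f$ in $\cC_m(b,a)$: once one coordinate of the product has landed on $f$, a power of $a$ can safely push its second coordinate to $n-1$ without risking that the other, still-binary state slides into $f$. This decoupling lets me bypass the delicate modular-arithmetic calculation that powered the distinguishability proof in Theorem~\ref{thm:suffixtight}.
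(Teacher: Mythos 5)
Your proposal is correct and follows essentially the same route as the paper: upper bound via Proposition~\ref{prop:factor}, reachability via Lemma~\ref{lem:prodreach} using the fixed point $(0,\dots,0)$ of $a$, and distinguishability by collapsing first coordinates to the absorbing state $f$ (via $b^{m-1}$ or via $a^{i-1}b$ at a differing index) and then driving the second coordinate to $n-1$ with powers of $a$, which is safe because $a$ never maps a binary state to $f$. The only difference is cosmetic --- you organize the cases by whether $u=v$ rather than by whether $p=q$ --- and both versions cover all pairs.
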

\begin{proof}
The upper bound follows from Proposition~\ref{prop:factor}.
To prove that the upper bound is tight, we show that all states in the direct product $\cC_m(b,a) \times \cD_n(a,b)$ are reachable and pairwise distinguishable.

\noin
\textbf{Reachability.}
Let $C = \{0,1\}^{m-2} \cup \{f\}$ denote the state set of $\cC_m(b,a)$, and let $v_0$ denote the initial state of
$\cC_m(b,a)$.  The initial state of the direct product is $(v_0, 0)$.  Every state of the form $(v_0, q)$, where $0 \le
q \le n-1$, is reachable by $a^q$.  We observe that $(\delta_T)_a = (0, 1, \ldots, n-1)$ and $(\delta_T)_b = (1, 2,
\ldots, n-2)$ (where $\delta_T$ is the transition function of $\cD_n(a,b)$) are both bijective on $\{0, 1, \ldots,
n-1\}$.  Therefore, by applying Lemma~\ref{lem:prodreach}, we see that every state in $C \times \{0, \ldots, n-1\}$ is
reachable.

\noin
\textbf{Distinguishability.}
As before, to simplify the notation,
we write $w$ for the transformation induced by $w$ in the relevant DFA. 

First, we note a few facts about $\cC_m(b,a)$:
\bi
\item The word $b^{m-1}$ sends every state to $f$.
\item Suppose $u$ and $v$ are states. Define the function $d(u,v)$ as follows:
\small{
  \[
    d(u,v) = \begin{cases}
      -1, & \text{if } u = v; \\
      0, & \text{if } u = f \text{ or } v = f; \\
      \min\{i \mid u_i \ne v_i\}, & \text{if } u = (u_1, \ldots, u_{m-2}) \text{ and } v = (v_1, \ldots, v_{m-2}). \\
    \end{cases}
  \]
}
\ei

Suppose we have two distinct states in $\cC_m(b,a) \times \cD_n(a,b)$: $(u, p)$ and $(v, q)$.

\noin
\textbf{Case 1.}
$p \ne q$. Assume that $u = v$; if not, apply $b^{m-1}$ to send both to $f$.
$(f, pb^{m-1})$ and $(f, qb^{m-1})$ can be distinguished by $a^{n-1-pb^{m-1}}$.

\noin
\textbf{Case 2.}
$p = q$ (so $u \ne v$). Assume that $d(u, v) = 0$; if not, apply $a^{d(u,v)-1}b$ to send either $u$ or $v$ to $f$.
Then we have the states $(ua^{d(u,v)-1}b, pa^{d(u,v)-1}b)$ and $(va^{d(u,v)-1}b, pa^{d(u,v)-1}b)$.
Let us define $p^\prime = pa^{d(u,v)-1}b$; the two states can be distinguished by $a^{n-1-p^\prime}$.
\qed
\end{proof}

\section{ Subsequence Matching }
Let $T$ and $P$ be regular languages over an alphabet $\Sig$. 
We are interested in finding 
the worst-case state complexity of
the set $L$ of all the words of $T$ that contain words in $P$ as subsequences.
The set of all words which contain words in $P$ as subsequences can be constructed using the shuffle operation, as $(\Sig^* \shu P)$.
Thus $L = (\Sig^* \shu P) \cap T$.

\begin{theorem}
For 
$m,n \ge 3$, 
if $\kappa(P) \le m$ and $\kappa(T) \le n$, then
$\kappa((\Sig^* \shu P) \cap T) \le (2^{m-2}+1)n$, and this bound is tight if $|\Sig| \ge m-1$.
\end{theorem}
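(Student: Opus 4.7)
The plan is to follow the same template as the proofs of Theorems~\ref{thm:suffixtight} and~\ref{thm:factortight}, with the modifications dictated by the all-sided ideal structure. For the upper bound: the language $\Sig^*\shu P$ is the all-sided ideal generated by $P$, which by~\cite{BJL13} has state complexity at most $2^{m-2}+1$ when $\kappa(P)\le m$; intersecting with $T$ then multiplies by at most $n$, giving the stated bound.

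For tightness I would work over $\Sig=\{a_1,\dots,a_{m-1}\}$, since $m-1$ letters are known from~\cite{BJL13} to be necessary to reach the bound $2^{m-2}+1$ for all-sided ideals. Choose $P_m$ to be a known witness for this all-sided ideal bound, so that the minimal DFA $\cA_m$ of $\Sig^*\shu P_m$ has exactly $2^{m-2}+1$ states; these states can be indexed by subsets of an ``interior'' set of size $m-2$ (tracking which positions in $P_m$'s DFA are currently reachable via some subsequence of the input so far) together with a single absorbing final state $f$ entered once $P_m$ has been fully matched. For $T_n$, I would use a DFA analogous to those of the previous sections in which each of the $m-1$ letters acts as a bijection on $Q_n=\{0,\dots,n-1\}$ (for example with one letter acting as the full $n$-cycle $(0,1,\dots,n-1)$ and the others as suitably chosen permutations), so that Lemma~\ref{lem:prodreach} applies and so that the $T_n$-coordinates can be distinguished.

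The heart of the argument is then to show that the direct product $\cA_m\times\cD_n$ is minimal. Reachability reduces, via Lemma~\ref{lem:prodreach}, to reaching the slice $\{v_0\}\times Q_n$ (obtainable by powers of a single letter acting as a full $n$-cycle on $Q_n$) combined with reachability of every subset state in $\cA_m$ alone, which is part of the witness property. Distinguishability splits into the same two cases as in Theorems~\ref{thm:suffixtight} and~\ref{thm:factortight}: if the $T_n$-coordinates of two product states differ, first collapse both first coordinates to $f$ using a uniform word (a long block in a single letter, analogous to $b^{m-1}$ in the factor case) and then distinguish by a word that separates the two $T_n$-states; if only the first coordinates differ, find a word that sends exactly one of them to $f$ while leaving the other non-final, and choose its length so that it acts as the identity (or at least uniformly) on the common $T_n$-coordinate.

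The main obstacle is the second distinguishability case. Because each letter $a_i$ in $\cA_m$ acts ``sparingly''---there are $m-1$ letters and each one performs essentially one useful transition---separating two distinct non-final subset states $S,S'$ in $\cA_m$ requires a carefully constructed word whose $\cA_m$-action sends exactly one of $S,S'$ to $f$. Simultaneously, this word must act on $T_n$ in a controlled way (ideally fixing the common coordinate) so that the separation survives the intersection. Threading this needle, by exploiting the asymmetric transitions in $\cA_m$ while respecting the permutational action on $Q_n$, is where the bulk of the technical work lies; as in the factor case, this is handled by a careful inductive or position-by-position argument on the indices in which $S$ and $S'$ differ.
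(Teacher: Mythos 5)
Your overall architecture (upper bound by composing the ideal bound with intersection; a witness over $m-1$ letters; reachability via Lemma~\ref{lem:prodreach}; a two-case distinguishability argument) matches the paper's, but the proposal has a genuine gap: you never specify the witnesses, and you explicitly defer the crucial step---distinguishing two product states $(S,q)$ and $(S',q)$ with $S\ne S'$---to ``a careful inductive or position-by-position argument'' that you do not carry out. That step is exactly where your plan is in danger: if all $m-1$ letters act as nontrivial permutations of $Q_n$, then a word over $\Sig$ that separates $S$ from $S'$ in the ideal DFA also moves the $T_n$-coordinate, and steering that coordinate to the final state $n-1$ afterwards may inadvertently send $S'$ into the absorbing final class as well. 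You correctly identify this coupling problem but do not resolve it, so the proof is not complete as written.

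The paper sidesteps the difficulty entirely by \emph{decoupling} the alphabet. It takes $P_m$ to be Okhotin's $(m-2)$-letter witness for $\Sig^*\shu P$ augmented with one extra letter $b$ acting as the identity on $P_m$'s DFA, and takes $T_n$ so that every $a_i$ acts as the identity on $Q_n$ while $b$ acts as the $n$-cycle $(0,1,\dotsc,n-1)$. Consequently $b$ fixes every state of the subset DFA $\cS_m$ for the shuffle, and the $a_i$ fix the $T_n$-coordinate; reachability is then immediate (words over the $a_i$ reach all $(s,0)$, then $b^*$ reaches all $(s,q)$), and both distinguishability cases reduce to one coordinate at a time with no interaction to control. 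Two smaller points: the paper credits the bound $2^{m-2}+1$ for $\Sig^*\shu P$ to Okhotin~\cite{Okh10}, whose witness uses $m-2$ letters (not $m-1$); the requirement of $m-1$ letters is specific to the \emph{combined} operation and is proved in a separate proposition, so your claim that $m-1$ letters are already necessary for the ideal alone is not accurate. If you adopt the identity-action choice for the $a_i$ on $T_n$ (a special case of the ``suitably chosen permutations'' you allow), your sketch collapses to the paper's short proof.
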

\begin{proof}
Okhotin~\cite{Okh10} proved that if $\kappa(P) \le m$, then $(\Sig^* \shu P)$ has state complexity at most $2^{m-2}+1$, and this bound is tight. 
Okhotin's witness is the DFA $(Q_m,\Sig,\delta,0,\{m-1\})$, where $\Sig = \{a_1,\dotsc,a_{m-2}\}$ and $a_i \colon (i \to m-1)(0 \to i)$; the alphabet size $m-2$ cannot be reduced.

It follows that if $\kappa(P) \le m$ and $\kappa(T) \le n$, then the state complexity of $(\Sig^* \shu P) \cap T$ is at most $(2^{m-2}+1)n$.

 We define $P_m$ as a slight modification of Okhotin's witness, with $m-1$ letters instead of $m-2$.
Define $\cD_m = (Q_m,\Sig,\delta,0,\{m-1\})$ where $\Sig = \{a_1,\dotsc,a_{m-2},b\}$, $a_i \colon (i \to m-1)(0 \to i)$ as before, and $b \colon \id$. 
See Figure~\ref{fig:subsequence}.
Let $P_m$ be the language of $\cD_m$.

\begin{figure}[ht]
\unitlength 8.5pt
\begin{center}
  \begin{picture}(20,16)(0,-1)
    \gasset{Nh=1.8,Nw=3.5,Nmr=1.25,ELdist=0.4,loopdiam=1.5}
    \node(0)(0,6){0}\imark(0)
    \node(1)(10,12){1}
    \node(2)(10,6){2}
    \node(3)(10,0){3}
    \node(4)(20,6){4}\rmark(4)
    \drawedge(0,1){$a_1$}
    \drawedge(0,2){$a_2$}
    \drawedge(0,3){$a_3$}
    \drawedge(1,4){$a_1$}
    \drawedge(2,4){$a_2$}
    \drawedge(3,4){$a_3$}
    \drawloop(0){$b$}
    \drawloop(1){$a_2,a_3,b$}
    \drawloop(2){$a_1,a_3,b$}
    \drawloop(3){$a_1,a_2,b$}
    \drawloop(4){$a_1,a_2,a_3,b$}
  \end{picture}
\end{center}
\caption{DFA $\cD_5$ of $P_5$ for subsequence matching.}
\label{fig:subsequence}
\end{figure}

\begin{figure}[ht!]
\unitlength 8.5pt
\begin{center}\begin{picture}(37,8)(-0.5,4)
\gasset{Nh=1.8,Nw=3.5,Nmr=1.25,ELdist=0.4,loopdiam=1.5}
\node(0)(1,7){0}\imark(0)
\node(1)(8,7){1}
\node(2)(15,7){2}
\node[Nframe=n](3dots)(22,7){$\dots$}
\node(n-2)(29,7){$n-2$}
\node(n-1)(36,7){$n-1$}\rmark(n-1)
\drawloop(0){$\Sig \setminus \{b\}$}
\drawedge(0,1){$b$}
\drawloop(1){$\Sig \setminus \{b\}$}
\drawedge(1,2){$b$}
\drawloop(2){$\Sig \setminus \{b\}$}
\drawedge(2,3dots){$b$}
\drawedge(3dots,n-2){$b$}
\drawloop(n-2){$\Sig \setminus \{b\}$}
\drawedge(n-2,n-1){$b$}
\drawloop(n-1){$\Sig \setminus \{b\}$}
\drawedge[curvedepth=4, ELdist=-1.1](n-1,0){$b$}
\end{picture}\end{center}
\caption{DFA $\cA_n$ of $T_n$ for subsequence matching.}
\label{fig:subsequence2}
\end{figure}

For $T_n$ we use the language of the DFA $\cA_n = (Q_n,\Sig,\delta',0,\{n-1\})$ where $a_i \colon \id$ for $1 \le i \le m-2$ and $b \colon (0,1,\dotsc,n-1)$.
See Figure~\ref{fig:subsequence2}.

Let $\cS_m$ be a minimal DFA for the shuffle $(\Sig^* \shu P_m)$ with state set $S$ and initial state $s_0$.
Note that all states of $\cS_m$ are reachable from $s_0$, and pairwise distinguishable from each other, using words over $\{a_1,\dotsc,a_{m-2}\}$ (that is, without using $b$). This follows from the fact that our DFA $\cD_m$ for $P_m$ was constructed using Okhotin's witness as a base.

Consider the direct product $\cS_m \times \cA_n$, which recognizes $(\Sig^* \shu P_m) \cap T_n$. The states of the direct product have the form $(s,q)$ where $s$ is a state of $\cS_m$ and $q$ is a state of $\cA_n$.
The initial state of the direct product is $(s_0,0)$. By words over $\{a_1,\dotsc,a_{m-2}\}$
we can reach all states of the form $(s,0)$ for $s \in S$. Then by words over $b^*$ we reach all states $(s,q)$ for all $s \in S$ and $q \in Q$. So all $(2^{m-2}+1)n$ states of the direct product are reachable.

For distinguishability, consider two distinct states $(s,q)$ and $(s',q')$. The final state set of the direct product is $\{(s_F,n-1) \mid s_F \text{ is final in } \cS_m\}$. 
Suppose $q \ne q'$. 
Since $\cS_m$ is minimal, it has at most one empty state. Hence one of $s$ or $s'$ can be mapped to a final state by some word $w$ over $\{a_1,\dotsc,a_{m-2}\}$. If we have states $(s,q)$ and $(s',q')$ with one of $s$ or $s'$ final, then a word in $b^*$ distinguishes the states.

Now suppose $q = q'$; then we must have $s \ne s'$. 
Apply $b^{n-1-q}$ to reach states $(s,n-1)$ and $(s',n-1)$.
By minimality of $\cS_m$, there is a word over $\{a_1,\dotsc,a_{m-2}\}$ that distinguishes $s$ and $s'$; this word also distinguishes $(s,n-1)$ and $(s',n-1)$. \qed
\end{proof}

The following proposition shows that the alphabet size of our witness cannot be reduced: 
an alphabet of
$m-1$ letters is optimal for this operation.

\begin{proposition}
Let $P$ and $T$ be regular languages with $\kappa(P) \le m$ and $\kappa(T) \le n$, both over an alphabet $\Sig$ of size less than $m-1$.
If $n \ge 1$, then $\kappa((\Sig^* \shu P) \cap T) < (2^{m-2}+1)n$.
\end{proposition}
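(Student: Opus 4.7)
The plan is to split on whether $\kappa(\Sig^* \shu P)$ attains Okhotin's bound $2^{m-2}+1$. If $\kappa(\Sig^* \shu P) < 2^{m-2}+1$, the standard product bound for intersection immediately gives $\kappa((\Sig^* \shu P) \cap T) \le \kappa(\Sig^* \shu P) \cdot n < (2^{m-2}+1)n$, so I would dispense with this case in one line. Otherwise $\kappa(\Sig^* \shu P) = 2^{m-2}+1$, and Okhotin's lower bound on the alphabet size needed for this maximum, combined with the hypothesis $|\Sig| \le m-2$, forces $|\Sig| = m-2$ exactly.

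The central step is to prove that the initial state $s_0$ of the minimal DFA $\cS$ of $\Sig^* \shu P$ has no incoming transitions at all, so $s_0$ is reached only by $\eps$. Since $\Sig^* \shu P$ is an all-sided ideal, every quotient satisfies $L \subseteq w^{-1}L$, hence $w^{-1}L = L$ (equivalently $\delta_\cS(s_0, w) = s_0$) can happen only when the first letter $a$ of $w$ is itself a self-loop at $s_0$; it therefore suffices to rule out self-loops at $s_0$. For this I would return to the NFA for $\Sig^* \shu P$ obtained from the minimal DFA of $P$ by adjoining a self-loop on every letter at every state. Its initial subset is $\{0\}$, its transitions are monotone, and a careful accounting in the spirit of Okhotin's own upper-bound argument shows that producing $2^{m-2}$ distinct reachable non-final subsets---each of the form $\{0\} \cup S$ with $S$ disjoint from the unique final state of the DFA of $P$---forces every letter $\sigma \in \Sig$ to satisfy $\delta_P(0, \sigma) \ne 0$; if some letter had $\delta_P(0,\sigma) = 0$ it would be ``wasted'' at $\{0\}$, and too few reachable non-final subsets could be produced from an alphabet of size only $m-2$. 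Since a self-loop at $s_0$ in $\cS$ corresponds exactly to $\delta_P(0, \sigma) = 0$, no such self-loop exists.

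From here the bound follows quickly. In the direct product $\cS \times \cT$ recognising $(\Sig^* \shu P) \cap T$, the $n-1$ states $(s_0, q)$ with $q$ different from the initial state of $\cT$ are all unreachable, so the minimal DFA for the intersection has at most $(2^{m-2}+1)n - (n-1) = 2^{m-2} n + 1$ states, strictly less than $(2^{m-2}+1)n$ (for $n \ge 2$; the $n=1$ corner case reduces to Okhotin's unary claim for $T = \Sig^*$ or is trivial for $T = \emp$).

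The hard part will be the structural rigidity claim---that tightness of Okhotin's bound at $|\Sig| = m-2$ forces every letter to map the initial state of the minimal DFA of $P$ away from itself. Justifying this rigorously requires revisiting Okhotin's upper-bound proof carefully enough to see that each of the $m-2$ letters must independently contribute a distinct singleton $\{0\} \cup \{\delta_P(0, \sigma)\}$ to the reachable non-final subsets, so that any letter with $\delta_P(0, \sigma) = 0$ would collapse the reachable-subset count below $2^{m-2}$.
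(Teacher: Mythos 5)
Your main argument is the paper's argument entered through the contrapositive. The paper reasons: to reach every product state $(\{0\},q)$ with $q\ne 0$, some letter must fix the initial subset $\{0\}$, which leaves at most $m-3$ letters to reach the $m-2$ sets $\{0,p\}$ ($p\ne 0$ non-final); since the subset transitions satisfy $S\sigma'=S\cup S\sigma$ and hence only grow, each $\{0,p\}$ can be entered only by a single letter directly from $\{0\}$, so one of them is unreachable. You reason: if all $2^{m-2}$ subsets avoiding the (necessarily unique) final state are reachable, then all $m-2$ letters are consumed sending $\{0\}$ to distinct sets $\{0,p\}$, so no letter fixes $\{0\}$, so the product states $(s_0,q)$ with $q$ non-initial are unreachable. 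The key facts are identical (monotonicity of the NFA subset construction, reduction to a single final state, the count of the $m-2$ pair-sets), and your version even yields the slightly sharper quantitative conclusion that at least $n-1$ product states are lost.

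The one genuine weak point is your $n=1$ parenthetical. Your closing inequality $(2^{m-2}+1)n-(n-1)<(2^{m-2}+1)n$ needs $n\ge 2$, and the $n=1$ case does \emph{not} ``reduce to Okhotin's unary claim'': taking $T=\Sig^*$ (complexity $1$) and $|\Sig|=m-2$ --- which the hypothesis $|\Sig|<m-1$ permits --- Okhotin's witness $P$ gives $\kappa((\Sig^*\shu P)\cap T)=2^{m-2}+1=(2^{m-2}+1)\cdot 1$, so the strict inequality actually fails. This is really a defect of the statement (the hypothesis should be $n\ge 2$) rather than of your main argument, and the paper's own proof makes the same silent assumption: its opening step, that some letter must induce a self-loop on $\{0\}$ in order to reach the states $(\{0\},q)$ with $q\ne 0$, is vacuous when $n=1$. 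For $n\ge 2$ your proof is complete and correct.
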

\begin{proof}
To prove this, we need to understand the structure of the minimal DFA 
for $\Sig^* \shu P$.
Okhotin~\cite{Okh10} proved that if 
$\cD = (Q,\Sig,\delta,q_0,F)$ recognizes $P$, then the NFA $\cN = (Q,\Sig,\Delta,q_0,F)$, where $\Delta(q,\sig) = \{q,q\delta_{\sig}\}$, recognizes $\Sig^* \shu P$. 
We can obtain 
a minimal DFA
$\cS$ 
for $\Sig^* \shu P$
by determinizing and minimizing this NFA.
It follows that we can view the states of $\cS$ as subsets of $Q$, and, if $S$ is a subset of $Q$,
 then
$S \Delta_\sig = S \cup S\delta_\sig$
 in $\cS$.
To simplify the notation, write $\sig'$ for 
$\Delta_{\sig}$ 
and $\sig$ for $\delta_\sig$; so the previous equation can be written as $S\sig' = S \cup S\sig$.

Consider the direct product of $\cS$ with an arbitrary $n$-state DFA. Assume without loss of generality that
 the DFA $\cD$ for $P$ has state set $Q_m$ and initial state $0$, and the arbitary $n$-state DFA 
has state set $Q_n$ and initial state $0$.
Then 
the initial state of the direct product is $(\{0\},0)$. 
The only way we can reach states of the form $(\{0\},q)$ with $q \ne 0$ is if $\{0\}\sig' = \{0\}$ for some letter $\sig$.
In other words, at least one letter must induce a self-loop on $\{0\}$, or else the maximal number of states in the direct product is not reachable.
Our alphabet has size strictly less than $m-1$, and one of the letters in our alphabet 
induces
 a self-loop on $\{0\}$, so there are at most $m-3$ letters that do not induce a self-loop on $\{0\}$.

Now, we mimic Okhotin's argument from Lemma 4.4 in~\cite{Okh10}. 
Notice that in the NFA $\cN$, we have $S \subseteq S\sig'$ for all $\sig \in \Sig$.
Thus every reachable subset of states in this NFA contains the initial state $0$.
Additionally, if two subsets $S$ and $T$ in the NFA $\cN$ both contain a final state, then they are indistinguishable in the DFA $\cS$, since from these sets we can only reach other sets containing a final state.
If $\cN$ has $k$ final states, then there are $2^{m-k-1}$ sets that contain $0$ but do not contain a final state, and the remaining sets are indistinguishable. It follows there are at most $2^{m-k-1}+1$ indistinguishability equivalence classes. 
If $k \ge 2$, this is strictly less than the upper bound. 
Thus we may assume that $k = 1$, that is, there is a unique accepting state.
To reach the upper bound, \e{all} sets which do not contain the accepting state must be reachable.

Consider 
subsets of states
in $\cN$ of the form $\{0,p\}$ for $p \ne 0$ and $p$ non-final; there are $m-2$ such sets, since there is
only one
 accepting state.
Since $S \subseteq S\sig'$ for all $\sig \in \Sig$, the only way we can reach a set $\{0,p\}$ is by a self-loop on $\{0,p\}$, or by a direct transition from a smaller set. But the only smaller reachable set is the initial set $\{0\}$. So if $\{0,p\}$ is reachable, then it is reachable by a direct transition from $\{0\}$. 

Now, we know one letter induces a self-loop on $\{0\}$, so it is not useful for reaching states of the form $\{0,p\}$.
We have at most $m-3$ letters that do not induce a self-loop on $\{0\}$, so we can reach at most $m-3$ sets of the form $\{0,p\}$. Since there are $m-2$ such sets, at least one set must be unreachable, and thus the upper bound on the state complexity of $(\Sig^* \shu P) \cap T$ cannot be reached. \qed
\end{proof}

\section{Matching a Pattern Consisting of a Single Word}

We now consider the case where 
the pattern
$P$ consists of a single nonempty word $w$. 
Note that if the state complexity of 
$P = \{w\}$ 
is $m$, then $w$ is of length $m-2$.

Throughout this entire section, we fix $w = a_1 \dotsb a_{m-2}$, where $a_i \in \Sig$ for $1 \le i \le m-2$.
Let $w_0 = \eps$ and for $1 \le i \le m-2$, let $w_i = a_1 \dotsb a_i$.
We write $W = \{w_0,w_1,\dotsc,w_{m-2}\}$ for the set of all prefixes of $w$.

\subsection{Matching a Single Prefix}
\label{sec:prefixword}
\begin{theorem}
Suppose $m  \ge 3$ and $n \ge 2$.
If $w$ is a non-empty word, $\kappa(\{w\}) \le m$ and $\kappa(T) \le n$
then we have
\[ \kappa(w\Sig^* \cap T) \le \begin{cases}
m+n-1, & \text{if $|\Sig| \ge 2$;} \\
m+n-2, & \text{if $|\Sig| = 1$.}
\end{cases} \]
Furthermore, these upper bounds are tight.
\end{theorem}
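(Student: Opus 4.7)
The plan is to analyse the direct product of the minimal DFA for $w\Sigma^*$ with any $n$-state DFA for $T$, and to exploit the rigid structure of the prefix-matching automaton to sharply bound the reachable distinguishable states. The minimal DFA for $w\Sigma^*$ has $m-2$ prefix-tracking states $0,1,\ldots,m-3$ (reached only by the proper prefixes $w_0,\ldots,w_{m-3}$), one matched state $m-2$ reached by $w$ itself (absorbing and final), and, when $|\Sigma|\ge 2$, one dead state $d$ reached as soon as a letter deviates from $w$ (absorbing and non-final). In the unary case $d$ is unreachable.

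For the upper bound I would inspect reachable states in the product. Since $i\in\{0,\ldots,m-3\}$ is reached in the first factor only after reading exactly $w_i$, the only reachable pair with first coordinate $i$ is $(i,\delta_T(q_0,w_i))$, giving at most $m-2$ prefix pairs. There are at most $n$ reachable matched pairs $(m-2,q)$, and all pairs $(d,q)$ accept the empty language and so collapse to a single indistinguishability class in the minimal DFA. Summing yields at most $(m-2)+n+1=m+n-1$ states in the general case and $(m-2)+n=m+n-2$ in the unary case.

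For tightness when $|\Sigma|\ge 2$ I would take $w=a^{m-2}$ over $\{a,b\}$ together with $T_n$, the DFA on $\{0,\ldots,n-1\}$ with $a\colon(0,1,\ldots,n-1)$, $b\colon\id$, initial state $0$, and final state $n-1$. Reachability is routine: the prefix pairs $(i,i\bmod n)$ are reached by $a^i$; all $n$ matched pairs $(m-2,q)$ are then reached by further powers of $a$, since $a$ cycles $T_n$; and the dead class is reached by $b$ from the initial state. Matched pairs are pairwise separated by powers of $a$ using minimality of $T_n$, and a prefix pair is separated from the dead class and from any matched pair by the letter $b$, which sends a prefix pair into the dead class while a matched pair persists and can still reach the unique final state $(m-2,n-1)$.

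The main obstacle will be separating two prefix pairs $(i,i\bmod n)$ and $(j,j\bmod n)$ with $i<j\le m-3$, since their second coordinates traverse the cycle modulo $n$. I would apply $a^{m-2-j+k}$, where $k\in\{0,\ldots,n-1\}$ is chosen so that the longer pair lands at $(m-2,n-1)$; the shorter pair then either stays in the prefix region when $k<j-i$ and is rejected, or it reaches a matched pair whose second coordinate differs from $n-1$ — the latter because $k\ge j-i$ combined with $j-i\equiv 0\pmod{n}$ would force $j-i\ge n>k$, a contradiction. Finally, for the unary tightness bound, restricting the same $T_n$ to $\{a\}$ and using the same $w$ yields a length-$(m-2)$ tail of prefix pairs feeding into an $n$-cycle of matched pairs, $m+n-2$ in total and no dead state; distinguishability on the cycle is immediate from the cyclic action of $a$, and tail states are pairwise separated and separated from cycle states by their strictly monotone shortest accepted length.
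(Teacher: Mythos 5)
Your upper bound and your $|\Sig|\ge 2$ witness are essentially correct and follow the same route as the paper: the paper constructs the $(m+n-1)$-state DFA for $w\Sig^*\cap T$ explicitly (a chain of $m-2$ prefix states feeding into an arbitrary $n$-state DFA for $T$, plus one dead state), which is exactly your product analysis written out; and your binary witness $w=a^{m-2}$ with $a\co(0,1,\dots,n-1)$, $b\co\id$ is the paper's witness except that you put the final state of $T_n$ at $n-1$ rather than at $r-1$ where $r=\delta(0,a^{m-2})$. Your case analysis on $k$ correctly handles the one delicate point in the binary case, namely separating two prefix pairs whose $T$-coordinates could collide modulo $n$, and the letter $b$ (via the dead state) separates prefix pairs from matched pairs regardless of residues.

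The genuine gap is the unary witness. Keeping the final state of $T_n$ at $n-1$ over $\Sig=\{a\}$ gives $L=w\Sig^*\cap T_n=\{a^t : t\ge m-2,\ t\equiv n-1 \pmod n\}$, and this has complexity $m+n-2$ only when $n$ divides $m-2$. Concretely, for $m=3$, $n=2$ you get $w=a$ and $T=\{a^t : t \text{ odd}\}$, so $L=T$ and $\kappa(L)=2<3=m+n-2$. The flawed step is the claim that tail states are separated from cycle states by their ``strictly monotone shortest accepted length'': the shortest word accepted from the tail state reached by $a^i$ has length $s-i$, where $s$ is the least accepted exponent that is $\ge m-2$, and for $i=m-3$ this length lies in $\{1,\dots,n-1\}$ whenever $n\nmid m-2$; in that case the tail state reached by $a^{m-3}$ is not merely unseparated by this argument but genuinely indistinguishable from a cycle state, since its language is a full residue class with no effective length threshold. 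The paper repairs this by taking the final state of the unary $T_n$ to be $r-1$ with $r=\delta(0,a^{m-2})$, i.e., by aligning the accepted residue class with $m-3\pmod n$; then the shortest word accepted from every tail state has length at least $n$, strictly exceeding the shortest word accepted from any cycle state, and all $m+n-2$ quotients are distinct. Your binary construction survives the analogous collision only because $b$ is available to kill prefix states; that escape hatch disappears in the unary case, so the final state must be placed as the paper does.
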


\begin{remark}
\label{rmk:unary}
When $|\Sig| = 1$ (that is, $P$ and $T$ are languages over a unary alphabet), the tight upper bound $m+n-2$ actually holds in \e{all eight cases} we consider in this paper. This is because if $L$ is a language over  a unary alphabet $\Sig$, then the ideals $L\Sig^*$, $\Sig^*L$, $\Sig^*L\Sig^*$ and $\Sig^* \shu L$ coincide; thus the prefix, suffix, factor and subsequence matching cases coincide. Furthermore, if $\Sig = \{a\}$ and $L$ is non-empty, then we have $L\Sig^* = a^i\Sig^*$, where $a^i$ is the shortest word in $L$. Thus the single-word and multi-word cases coincide as well.
\end{remark}

\begin{proof}
We first derive upper bounds for the two cases of $|\Sig|$.

\noin {\bf Upper Bounds:}
Let  $\mathcal{D}_T = (Q, \Sigma, \delta, q_0, F_{ T})$, where
$Q=\{q_0,\dots,q_{n-1}\}$, be a  DFA accepting $T$.
Let $P = \{w\}$ and let the minimal DFA of $P$ be $\mathcal{D}_P = (W \cup \{\emp\}, \Sigma, \alpha, w_0, \{w_{m-2}\})$.
Here $w_{m-2}$ is the only final state, and $\emp$ is the empty state. 
Define $\alpha$ as follows: for $0 \le i \le m-2$, we set
\[ \alpha(w_i,  a) =
\begin{cases}
  w_{i+1}, & \text{if }  a=a_i ; \\
  \emp, & \text{otherwise. } 
\end{cases}
\]
Also define $\alpha(\emp, a) = \emp$ for all $ a \in \Sig$.
Let the state reached by $w$ in $\cD_T$ be $q_r=\delta(q_0,w)$; we construct a DFA $\cD_{ L}$ that accepts $ L = (w\Sig^*)\cap T$.
As shown in Figure~\ref{fig:word}, let 
$\mathcal{D}_L = (Q \cup (W \setminus \{ w_{m-2} \}) \cup \{\emp\}, \Sigma, \beta, w_0,  F_T)$, where  $\beta$ is defined as follows: for $q\in Q \cup (W \setminus \{ w_{m-2} \}) \cup \{\emp\}$ and $ a\in\Sig$,

\[ \beta(q,  a) =
\begin{cases}
  \delta(q,  a), & \text{if } q \in Q; \\
  \alpha(q,  a), & \text{if } q \in W \setminus \{ 
w_{m-2},
w_{m-3}\}; \\
  q_r, & \text{if } q = w_{m-3}, \text{ and }  a=a_{m-2}; \\
  \emp, & \text{otherwise. } 
\end{cases}
\]
%
\begin{figure}[ht]
\unitlength 8.5pt
\begin{center}
  \begin{picture}(28,16)(0,-4)
  
\gasset{Nh=5.0,Nw=34,Nmr=1.25,ELdist=0.4,loopdiam=1.5}
\node[dash={.5}0](box1)(14,-1){}
\node[Nframe=n](name)(14,-2.5){\small Arbitrary DFA $\cD_{ T}$; the $q_{i_j}$ are not necessarily distinct.
}

    \gasset{Nh=1.8,Nw=3.5,Nmr=1.25,ELdist=0.4,loopdiam=1.5}
    {\small
      \node(p0)(0,11){$w_0$}\imark(p0)
      \node(p1)(7,11){$w_1$}
    \node[Nframe=n](p3dots)(14,11){$\dots$}
      \node(m-4)(21,11){$w_{m-4}$}
      \node(m-3)(28,11){$w_{m-3}$}
      \node(m-1)(14,3.5){$\emp$}
       \node(0)(0,0){$q_0$}
       \node(1)(6,0){$q_{i_1}$}
        \node(2)(13,0){$q_{i_2}$}
    \node(r)(28,0){$q_r$}
    \node[Nframe=n](3dots2)(21,0){$\dots$}
    \drawedge(p0,p1){$a_1$}
    \drawedge(p1,p3dots){$a_2$}
    \drawedge(p3dots,m-4){$a_{m-4}$}
    \drawedge(m-4,m-3){$a_{m-3}$}
    \drawedge(m-3,r){$a_{m-2}$}
      \drawedge[curvedepth=- 2,ELdist=-3](p0,m-1){$\Sig\setminus \{a_1\}$}
     \drawedge[curvedepth= 2,ELdist=.2](m-3,m-1){$\Sig\setminus \{ a_{m-2} \}$}
   \drawedge[curvedepth= -1.5,ELdist=0.3](p1,m-1){$\Sig\setminus\{a_2\}$}
 \drawedge[curvedepth= 1.5,ELdist=-4.8](m-4,m-1){$\Sig\setminus\{a_{m-3}\}$}
 \drawedge(1,2){$a_{ 2}$}
   \drawedge(0,1){$a_1$}
 \drawedge(3dots2,r){$a_{m-2}$}
  \drawedge(2,3dots2){$a_3$}
 \drawloop(m-1){$\Sig$}
  }
  \end{picture}
\end{center}
\caption{DFA 
$\cD_L$
for matching a single prefix.
The final state set $F_T$ is a subset of the states from the arbitrary DFA $\cD_T$; final states are not marked on the diagram.
}
\label{fig:word}
\end{figure}
Recall that in a DFA $\cD$, if state $q$ is reached from the initial state by a word $u$, then the language of $q$ is equal to the quotient of $L(\cD)$ by $u$.
Thus the language of state $q_r$ is the quotient of $T$ by $w$,
that is, the set $w^{-1}T = \{ y \in \Sig^* \mid wy \in T\}$.
The DFA $\cD_L$ accepts a word $x$ if and only if it has the form $wy$ for $y \in w^{-1}T$; we need the prefix $w$ to reach the arbitrary DFA $\cD_T$, and $w$ must be followed by a word that sends $q_r$ to an accepting state, that is, a word $y$ in the language $w^{-1}T$ of $q_r$.
So $L = \{ wy \mid y \in w^{-1}T \} = \{ wy \mid y \in \Sig^*, wy \in T\} = w\Sig^* \cap T$.
That is,
$ L$ is the set of all words of $T$ that begin with $w$, as required. It follows that the state complexity of $ L$ is less than or equal to $m+n-1$. If $|\Sig|=1$, all the $\Sig\setminus \{a_i\}$ are empty and state $\emp$ is not needed. Hence the state complexity of $ L$ is less than or equal to $m+n-2$ in this case.

\noin
{\bf Lower Bound, $|\Sig|=1$:} 
Let $m\ge 3$ and $P_m(a)=\{a^{m-2}\}$.
Let 
$n \ge 2$, 
and let $T_n(a)$ be the language of the DFA 
$\cD_n(a)=(Q_n, \{a\}, \delta_1,0,\{{r-1}\})$, where 
$\delta_1$ is defined by $a\colon (0,1,\dots,n-1)$, and
$r = \delta_1(0, a^{m-2})$.
Let $\cD_{ L}$ be the DFA shown in Figure~\ref{fig:L1}
for the language $ L = P_m(a)\Sig^* \cap T_n(a)$.
Obviously $\cD_{ L}$ has $m+n-2$ states and they are all  reachable. Since the shortest word accepted from any state is distinct from that of any other state, all the states are pairwise distinguishable.
Hence $P_m(a)$ and $T_n(a)$ constitute witnesses that meet the required bound. 
\begin{figure}[ht]
\unitlength 8.5pt
\begin{center}
  \begin{picture}(37,10)(-2.3,3.5)
    \gasset{Nh=1.8,Nw=4.5,Nmr=1.25,ELdist=0.4,loopdiam=1.5}
    {\footnotesize
      \node(p0)(0,11){$0'$}\imark(p0)
      \node(p1)(5.5,11){$1'$}
    \node[Nframe=n](p3dots)(11,11){$\dots$}
      \node(m-4)(16.5,11){${(m-4)'}$}
      \node(m-3)(22,11){${(m-3)'}$}
      \node(0)(0,7){$0$}
      \node(1)(5.5,7){$1$}
    \node[Nframe=n](3dots)(11,7){$\dots$}
        \node(r-1)(16.5,7){${r-1}$}\rmark(r-1)
      \node(r)(22,7){$r$}
    \node[Nframe=n](3dots2)(27.5,7){$\dots$}
      \node(n-1)(33,7){${n-1}$}
      \drawedge(p0,p1){$a$}
    \drawedge(p1,p3dots){$a$}
    \drawedge(p3dots,m-4){$a$}
    \drawedge(m-4,m-3){$a$}
    \drawedge(m-3,r){$a$}
    \drawedge(0,1){$a$}
    \drawedge(1,3dots){$a$}
    \drawedge(3dots,r-1){$a$}
    \drawedge(r-1,r){$a$}
    \drawedge(r,3dots2){$a$}
    \drawedge(3dots2,n-1){$a$}
    {\gasset{curvedepth=3}
      \drawedge(n-1,0){$a$}
   }
    }
  \end{picture}
\end{center}
\caption{Minimal DFA of $ L$ for the case $|\Sig|=1$.}\label{fig:L1}
\end{figure}

\begin{figure}[ht]
\unitlength 8.5pt
\begin{center}
  \begin{picture}(37,18)(-2.3,1)
    \gasset{Nh=1.8,Nw=4.5,Nmr=1.25,ELdist=0.4,loopdiam=1.5}
    {\footnotesize
    \node(m-1)(11,15){$\emp$}
      \node(p0)(0,11){$0'$}\imark(p0)
      \node(p1)(5.5,11){$1'$}
    \node[Nframe=n](p3dots)(11,11){$\dots$}
      \node(m-4)(16.5,11){${(m-4)'}$}
      \node(m-3)(22,11){${(m-3)'}$}
      \node(0)(0,5){$0$}
      \node(1)(5.5,5){$1$}
    \node[Nframe=n](3dots)(11,5){$\dots$}
        \node(r-1)(16.5,5){${r-1}$}\rmark(r-1)
      \node(r)(22,5){$r$}
    \node[Nframe=n](3dots2)(27.5,5){$\dots$}
      \node(n-1)(33,5){${n-1}$}
     \drawedge(p0,p1){$a$}
     \drawedge[curvedepth= 1.5,ELdist=.6](p0,m-1){$b$}
    \drawedge(p1,p3dots){$a$}
    \drawedge(p1,m-1){$b$}
    \drawedge(p3dots,m-4){$a$}
    \drawedge(m-4,m-1){$b$}
    \drawedge(m-4,m-3){$a$}
    \drawedge(m-3,r){$a$}
    \drawedge[curvedepth= -1.5,ELdist=-1.4](m-3,m-1){$b$}
    \drawedge(0,1){$a$}
    \drawedge(1,3dots){$a$}
    \drawedge(3dots,r-1){$a$}
    \drawedge(r-1,r){$a$}
    \drawedge(r,3dots2){$a$}
    \drawedge(3dots2,n-1){$a$}
      \drawedge[curvedepth= 4.5,ELdist=.6](n-1,0){$a$}
      \drawloop(0){$b$}
      \drawloop(1){$b$}
      \drawloop(r-1){$b$}
       \drawloop(n-1){$b$}
      \drawloop(m-1){$b$}
      \drawloop[loopangle=-90](r){$b$}
    }
  \end{picture}
\end{center}
\caption{Minimal DFA of $ L$ for the 
prefix case with
$|\Sig|>1$.
}
\label{fig:L2}
\end{figure}

\noin{\bf Lower Bound, $|\Sig|\ge 2$:}
Let 
$m\ge3$ 
and $P_m(a,b)=\{a^{m-2}\}$. 
Let 
$n\ge2$ 
and let
$T_n(a,b)$ be the language of the DFA 
$\cD_n(a,b)=(Q_n,\{a,b\},\delta_2,0,\{r-1\})$ where
$\delta_2$ is defined by $a \colon (0, 1, \ldots, n-1)$ and $b \colon \id$, and $r = \delta_2(0, a^{m-2})$.
Construct the DFA $\cD_{ L}$ 
for the language ${ L} = P_m(a,b)\Sig^* \cap T_n(a,b)$
as is shown in Figure~\ref{fig:L2}.
It is clear that all the states are reachable and distinguishable by their shortest accepted words.
 \qed
 \end{proof}

\subsection{Matching a Single Suffix }

Let $w,x,y,z \in \Sig^*$. We introduce some notation:
\bi
\item
$x \prec_p y$ means $x$ is a proper prefix of $y$, and $x \preceq_p y $ means $x$ is a prefix of $y$.
\item
$x \succ_s y$ means $x$ has $y$ as a proper suffix, and $x \succeq_s y$ means $x$ has $y$ as a suffix.
\item
If $x \succeq_s y$ and $y \preceq_p z$, we say $y$ is a \e{bridge} from $x$ to $z$ or that $y$ \e{connects} $x$ to $z$.
We also denote this by $x \to y \to z$.
\item
$x \sto y \sto z$ means that $y$ is the \e{longest} bridge from $x$ to $z$.
That is, $x \to y \to z$, and whenever $x \to w \to z$ we have $|w| \le |y|$.
Equivalently, $y$ is the longest suffix of $x$ that is also a prefix of $z$.
\ei
We will readily use the following properties of these relations:
\bi
\item
For $a \in \Sig$, we have $x \preceq_p y \iff ax \preceq_p ay$.
\item
For $a \in \Sig$, we have 
$x \succeq_s y \iff xa \succeq_s ya$.
\item
If $x \ne \eps$ and $y$ starts with $a \in \Sig$ and $x \preceq_p y$, then $x$ starts with $a$.
\item
If $y \ne \eps$ and $x$ ends with $a \in \Sig$ and  
$x \succeq_s y$, 
then $y$ ends with $a$.
\item
If $x \preceq_p z$ and $y \preceq_p z$ and $|x| \le |y|$, then $x \preceq_p y$.
\item
If $z \succeq_s x$ and $z \succeq_s y$ and $|x| \ge |y|$, then $x \succeq_s y$.
\ei

\begin{proposition}
\label{prop:suffixword}
If the state complexity of $\{w\}$ is $m$, then the state complexity of $\Sig^*w$ is $m-1$. 
\end{proposition}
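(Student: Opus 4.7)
The plan is to exhibit the classical Knuth--Morris--Pratt/Aho--Corasick style suffix automaton for $\Sigma^* w$, show it has exactly $m-1$ states, and then prove it is minimal. Since $\{w\}$ has state complexity $m$, the word $w = a_1 \cdots a_{m-2}$ has length $m-2$, so the set $W = \{w_0, w_1, \ldots, w_{m-2}\}$ of prefixes of $w$ has size $m-1$. This set will serve as the state set.

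For the upper bound I would define the DFA $\mathcal{A} = (W, \Sigma, \beta, w_0, \{w_{m-2}\})$ by letting $\beta(w_i, a)$ be the unique longest $w_j \in W$ such that $w_i a \to w_j \to w$, in the bridge notation just introduced before the proposition. The transition is well-defined because $w_0 = \varepsilon$ is always a bridge from $w_i a$ to $w$. A straightforward induction on $|x|$ shows that $\beta(w_0, x)$ is precisely the longest prefix of $w$ that is a suffix of $x$; the inductive step amounts to proving that the longest prefix of $w$ which is a suffix of $xa$ equals the longest suffix of $\beta(w_0,x) \cdot a$ that is still a prefix of $w$, which is a direct consequence of the prefix/suffix properties bulleted in the paper. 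From this characterization, $\mathcal{A}$ accepts $x$ iff the longest such prefix is $w_{m-2} = w$, i.e., iff $w$ is a suffix of $x$. Hence $L(\mathcal{A}) = \Sigma^* w$, giving $\kappa(\Sigma^* w) \leq m-1$.

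For the matching lower bound I would verify reachability and distinguishability of the $m-1$ states of $\mathcal{A}$. Reachability is immediate: reading $w_i$ from $w_0$ lands in $w_i$, because $w_i$ is certainly a prefix of $w$ that is a suffix of itself, and no strictly longer prefix of $w$ can be a suffix of a string of length $i$. For distinguishability, given $0 \le i < j \le m-2$, set $y = a_{j+1} \cdots a_{m-2}$; then $w_j y = w$, which is accepted from $w_j$, whereas $|w_i y| = i + (m-2-j) < m-2 = |w|$, so $w_i y$ is too short to end in $w$ and is rejected from $w_i$. Therefore the $m-1$ states are pairwise inequivalent and $\kappa(\Sigma^* w) \geq m-1$.

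I do not anticipate any real obstacle. The only place requiring slightly careful bookkeeping is the inductive proof that $\beta(w_0, x)$ equals the longest prefix of $w$ that is a suffix of $x$; beyond that, reachability and distinguishability each reduce to a one-line length argument using the bridge notation already fixed in the section.
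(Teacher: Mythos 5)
Your proposal is correct and takes essentially the same route as the paper: the same prefix-set automaton with the longest-bridge (failure-function) transitions, the same inductive characterization of the extended transition function, and the same reachability and distinguishability arguments. The one step you defer as "careful bookkeeping" --- that the longest prefix of $w$ that is a suffix of $xa$ equals the longest suffix of $\beta(w_0,x)\,a$ that is a prefix of $w$ --- is exactly where the paper spends the bulk of its proof, and it does go through using the bulleted prefix/suffix properties.
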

\begin{proof}

Let $\cA =( W, \Sig, \delta_\cA, w_0, \{w_{m-2}\} )$ be the DFA with transitions defined as follows: for all $a \in \Sig$ and $w_i \in W$, we have $w_ia \sto \delta_\cA(w_i,a) \sto w$.
That is, $\delta_\cA(w_i,a)$ is defined to be the maximal-length bridge from $w_ia$ to $w$, or equivalently, the longest suffix of $w_ia$ that is also a prefix of $w$.
Note that if $a = a_{i+1}$, then $\delta_\cA(w_i,a) = w_{i+1}$.

 We observe that every state $w_i\in W$ is reachable from $w_0$ by the word $w_i$, and that
each
 state $w_i$ is distinguished from
all
other states 
by $a_{i+1}\cdots a_{m-2}$.
It remains to be shown that 
$\Sig^*w = L(\cA)$.
In the following, for convenience, we simply write $\delta$ rather than $\delta_\cA$.

We claim that for $x \in \Sig^*$, we have $w_ix \sto \delta(w_i,x) \sto w$. That is, the defining property of the transition function extends nicely to words. Recall that the extension of $\delta$ to words is defined inductively in terms of the behavior of $\delta$ on letters, so it is not immediately clear that this property carries over to words.

We prove this claim by induction on $|x|$. 
If $x = \eps$, this is clear. 
Now suppose $x = ya$ for some $y \in \Sig^*$ and $a \in \Sig$, and that $w_iy \sto \delta(w_i,y) \sto w$. 
Let $\delta(w_i,y) = w_j$ and let $\delta(w_i,x)= \delta(w_j, a) = w_k$. 
We want to show that $w_ix \sto w_k \sto w$.

First we show that $w_ix \to w_k \to w$.
We know $w_k \preceq_p w$, so it remains to show that $w_ix \succeq_s w_k$.
Since $w_k = \delta(w_i,x) = \delta(w_j,a)$, by definition we have $w_ja \sto w_k \sto w$. 
Since $\delta(w_i,y) = w_j$, we have $w_iy \sto w_j \sto w$.
In particular, $w_iy \succeq_s w_j$ and thus $w_ix = w_iya \succeq_s w_ja$.
Thus $w_ix \succeq_s w_ja \succeq_s w_k$ as required.

Next, we show that whenever $w_ix \to w_\ell \to w$, we have $|w_\ell| \le |w_k|$.
If $w_\ell = \eps$, this is immediate, so suppose $w_\ell \ne \eps$.
Since $w_ix = w_iya \succeq_s w_\ell$, and $w_\ell$ is non-empty, it follow that $w_\ell$ ends with $a$.
Thus $w_\ell = w_{\ell-1}a$.
Since $w_iya \succeq_s w_{\ell-1}a$, we have $w_iy \succeq_s w_{\ell-1}$.
Additionally, $w_{\ell-1} \preceq_p w$, so $w_iy \to w_{\ell-1} \to w$.
Since $w_iy \sto w_j \sto w$, we have $|w_{\ell-1}| \le |w_j|$.
Since $w_iy \succeq_s w_j$ and $w_iy \succeq_s w_{\ell-1}$ and $|w_j| \ge |w_{\ell-1}|$, we have $w_j \succeq_s w_{\ell-1}$.
Thus $w_ja \succeq_s w_{\ell-1}a = w_\ell$.
It follows that $w_ja \to w_\ell \to w$.
But
recall that $\delta(w_i, x) = \delta(w_j, a) = w_k$, so
$w_ja \sto w_k \sto w$,
and
$|w_\ell| \le |w_k|$ as required.

Now, we show that $\cA$ accepts the language $\Sig^*w$.
Suppose $x \in \Sig^*w$ and write $x = yw$. The initial state of $\cA$ is $w_0 = \eps$.
We have $yw \sto \delta(\eps,yw) \sto w$, that is, $\delta(\eps,yw)$ is the longest suffix of $yw$ that is also a prefix of $w$.
But this longest suffix is simply $w$ itself, which is the final state. So $x$ is accepted.
Conversely, suppose $x \in \Sig^*$ is accepted by $\cA$. Then $\delta(\eps,x) = w$, and thus $x \sto w \sto w$ by definition. In particular, this means $x \succeq_s w$, and so $x \in \Sig^*w$.
\qed
\end{proof}

Our next goal is to establish an upper bound on the state complexity of $\Sig^*w \cap T$.
The upper bound in this case is quite complicated to derive.
Suppose $w$ has state complexity $m$ and $T$ has state complexity 
at most
 $n$, 
for $m \ge 3$ and $n \ge 2$.
Let $\cA$ be the 
$(m-1)$-state
DFA for $\Sig^*w$ defined in Proposition~\ref{prop:suffixword}, and let $\cD$ be an $n$-state DFA for $T$ with state set $Q_n$, transition function $\alpha$, and final state set $F$.
The direct product $\cA \times \cD$ with final state set $\{w\} \times F$ recognizes $\Sig^*w \cap T$.
We claim that this direct product has at most $(m-1)n-(m-2)$ reachable and pairwise distinguishable states,
and thus the state complexity of $\Sig^*w \cap T$ is at most $(m-1)n-(m-2)$.

Since $\cA$ has $m-1$ states and $\cD$ has $n$ states, there are at most $(m-1)n$ reachable states.
It will suffice show that for each word $w_i$ with $1 \le i \le m-2$, there exists a word $w_{f(i)} \ne w_i$ and a state $p_i \in Q_n$ such that $(w_i,p_i)$ is  indistinguishable from $(w_{f(i)},p_i)$.
This gives $m-2$ states that are each indistinguishable from another state, establishing the upper bound.

We choose $f(i)$ so that $w_i \sto w_{f(i)} \sto w_{i-1}$. In other words, $w_{f(i)}$ is the longest suffix of $w_i$ that is also a \e{proper} prefix of $w_i$.
To find $p_i$, first observe that there exists a non-final state $q \in Q_n$ and a state $r \in Q_n$ such that $\alpha(r,w) = q$. Indeed, if no such states existed, then for all states $r$, the state $\alpha(r,w)$ would be final. Thus we would have $\Sig^*w \subseteq T$, and the complexity of $\Sig^*w \cap T = \Sig^*w$ would be $m-1$, which is lower than our upper bound
since we are assuming $n \ge 2$.
Now, set $p_i = \alpha(r,w_i)$, and note that $\alpha(p_i,a_{i+1}) = p_{i+1}$, and $\alpha(p_i,a_{i+1} \dotsb a_{m-2}) = q$.

\begin{lemma}
\label{lem:suffixword_equal}
If $i < m-2$ and $a \ne a_{i+1}$, or if $i = m-2$, then 
$\delta_\cA(w_i,a) = \delta_\cA(w_{f(i)},a)$.
\end{lemma}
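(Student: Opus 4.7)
The plan is to unfold the definitions and show directly that $\delta_\cA(w_i,a)$ and $\delta_\cA(w_{f(i)},a)$ are both equal to the longest prefix of $w$ that is a suffix of $w_{f(i)}a$. The key observation is that $w_{f(i)}a$ is a suffix of $w_ia$ (since $w_{f(i)}$ is a suffix of $w_i$), so any suffix of $w_{f(i)}a$ is automatically a suffix of $w_ia$; the work is entirely in the reverse inclusion.

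First I would set $w_j = \delta_\cA(w_i,a)$, so that $w_j$ is the longest prefix of $w$ occurring as a suffix of $w_ia$. The crucial preliminary step is to show that $|w_j| \le i$, and this is exactly where the case assumption is used. In the case $i < m-2$ and $a \ne a_{i+1}$, the only prefix of $w$ of length $i+1$ is $w_{i+1} = w_ia_{i+1}$, which ends in $a_{i+1} \ne a$, and so cannot be a suffix of the word $w_ia$ (whose last letter is $a$); hence $|w_j| \le i$. In the case $i = m-2$, every prefix of $w$ has length at most $m-2 = i$, so the bound is automatic.

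Next I would handle the trivial subcase $w_j = \eps$ separately ($\eps$ is always in the candidate set for $w_{f(i)}a$), and then focus on $|w_j| \ge 1$. Since $w_j$ is a non-empty suffix of $w_ia$ it ends in $a$, so $w_j = w_{k-1}a$ for some $k-1$ with $0 \le k-1 \le i-1$, where $w_{k-1}$ is a prefix of $w$ of length $k-1 < i$. Stripping the trailing $a$, the statement that $w_j$ is a suffix of $w_ia$ becomes: $w_{k-1}$ is a suffix of $w_i$. Thus $w_{k-1}$ is a proper prefix of $w_i$ which is also a suffix of $w_i$, and the maximality in the definition of $w_{f(i)}$ gives $|w_{k-1}| \le |w_{f(i)}|$.

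Then I would invoke the listed properties of $\preceq_p$ and $\succeq_s$: since both $w_{k-1}$ and $w_{f(i)}$ are prefixes of $w$ with $|w_{k-1}| \le |w_{f(i)}|$, we get $w_{k-1} \preceq_p w_{f(i)}$; since both are suffixes of $w_i$ with $|w_{k-1}| \le |w_{f(i)}|$, we get $w_{k-1} \succeq_s w_{f(i)}$ reversed, i.e.\ $w_{f(i)} \succeq_s w_{k-1}$. Hence $w_{k-1}a$ is a suffix of $w_{f(i)}a$, and $w_j = w_{k-1}a$ is a prefix of $w$ that is a suffix of $w_{f(i)}a$. Combined with $S_2 \subseteq S_1$ (which gives the easy inequality $|\delta_\cA(w_{f(i)},a)| \le |w_j|$), this forces $\delta_\cA(w_{f(i)},a) = w_j = \delta_\cA(w_i,a)$. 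The main obstacle is purely notational: keeping straight the two suffix/prefix relations and their interaction with the definition of $w_{f(i)}$; once the length bound $|w_j| \le i$ is in hand, the rest is bookkeeping.
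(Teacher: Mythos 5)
Your proof is correct, and it rests on the same two pillars as the paper's own argument: the easy inequality $|\delta_\cA(w_{f(i)},a)| \le |\delta_\cA(w_i,a)|$ comes from $w_{f(i)}a$ being a suffix of $w_ia$, and the hard inequality comes from the maximality in the definition of $f(i)$. The difference is purely organizational, and in your favour. The paper proves the hard direction by a six-way case split on $|w_j|$ (where $w_j = \delta_\cA(w_i,a)$) and on whether $f(i)=0$, with two of the cases shown to be vacuous; you instead isolate the single place where the hypothesis ``$a \ne a_{i+1}$ or $i = m-2$'' is actually used --- namely to force $|w_j| \le i$ --- and then run one uniform argument: strip the trailing $a$ from a non-empty $w_j$ to get $w_j = w_{j-1}a$, observe that $w_i \to w_{j-1} \to w_{i-1}$, conclude $j-1 \le f(i)$ by maximality, and transfer the suffix relation (via the listed length-comparison properties of $\succeq_s$) to get $w_{f(i)}a \succeq_s w_j$, whence $|w_j| \le |\delta_\cA(w_{f(i)},a)|$. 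This subsumes the paper's cases 2 through 5 in one stroke and makes it transparent that the range $f(i)+1 < |w_j| < i+1$ (the paper's case 5) simply cannot occur. One small point worth a word: when you place $w_{j-1}$ as a prefix of $w_{i-1}$ you are implicitly using $i \ge 1$; this is harmless since $f(i)$ is only defined for $1 \le i \le m-2$, but it should be stated.
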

\begin{proof}
Let $w_j = \delta_\cA(w_i,a)$, so that $w_ia \sto w_j \sto w$.
Let $w_k = \delta_\cA(w_{f(i)},a)$, so that $w_{f(i)}a \sto w_k \sto w$.
We claim $j = k$.
To see that $j \ge k$, note that $w_i \succeq_s w_{f(i)}$, so $w_ia \succeq_s w_{f(i)}a \succeq_s w_k$.
Thus $w_ia \to w_k \to w$, but $w_ia \sto w_j \sto w$, which implies $|w_k| \le |w_j|$ and so $j \ge k$.
To see that $j \le k$, we consider  six cases:
\bi
\item
$w_j = \eps$. Then $j = 0$, so clearly $j \le k$.
\item
$w_j = a$. 
Then $w_{f(i)}a \to w_j \to w$.
Since $w_{f(i)}a \sto w_k \sto w$, we have $|w_j| \le |w_k|$ and thus $j \le k$.
\item
$f(i) = 0$ and $|w_j| \ge 2$. 
Since $|w_j| \ge 2$, we can write $w_j = w_{j-1}a_{j}$ with $w_{j-1}$ non-empty. 
Since $w_ia \succeq_s w_{j-1}a_j$, we have $w_i \succeq_s w_{j-1}$.
Now, note that $w_j = \delta_\cA(w_i,a)$ has length at most $i+1$, and this length is attained if and only if 
$i < m-2$ and
$a = a_{i+1}$.
We are assuming 
that either
$a \ne a_{i+1}$
or $i = m-2$;
in either case $|w_j| \le i$.
This means $j-1 \le i-1$ and it follows that $w_{j-1} \preceq_p w_{i-1}$.
Thus $w_i \to w_{j-1} \to w_{i-1}$.
Since $w_i \sto w_{f(i)} \sto w_{i-1}$, it follows that $j-1 \le f(i) = 0$, implying $j \le 1$.
This contradicts the assumption that $|w_j| \ge 2$, so this case cannot occur.
\item
$f(i) > 0$ and $2 \le |w_j| \le f(i)+1$. 
Since $w_i \sto w_{f(i)} \sto w_{i-1}$, we have $w_i \succeq_s w_{f(i)}$, and thus $w_ia \succeq_s w_{f(i)}a$.
Also, since $w_ia \sto w_j \sto w$ we have  $w_ia \succeq_s w_j$.
 Since $|w_{f(i)}a| = f(i)+1 \ge |w_j|$, it follows that $w_{f(i)}a \succeq_s w_j$.
Then $w_{f(i)}a \to w_j \to w$, but we have $w_{f(i)}a \sto w_k \sto w$, so $|w_j| \le |w_k|$ and thus $j \le k$.
\item
$f(i) > 0$ and $f(i)+1 < |w_j| < i+1$. 
Since $w_ia \succeq_s w_j$ and $w_j$ is non-empty, we can write $w_j = w_{j-1}a$.
Then $w_i \succeq_s w_{j-1}$.
Also, since $j < i+1$ we have $j-1 < i$, and so $w_{j-1} \preceq_p w_{i-1}$.
It follows that $w_i \to w_{j-1} \to w_{i-1}$.
Since $w_i \sto w_{f(i)} \sto w_{i-1}$ we have $j-1 \le f(i)$, and thus $j \le f(i)+1$. 
This contradicts the assumption that $j > f(i)+1$, so this case cannot occur.
\item
$|w_j| \ge i+1$. 
If $i = m-2$, this is impossible.
If $i < m-2$, this can only occur if $a = a_{i+1}$, but we are assuming $a \ne a_{i+1}$. So this case cannot occur.
\ei
This shows that $j = k$, and thus $w_j = w_k$. 
That is, $\delta_\cA(w_i,a) = \delta_\cA(w_{f(i)},a)$. \qed
\end{proof}

\begin{lemma}
\label{lem:suffixword_next}
If $i < m-2$, then $\delta_\cA(w_{f(i)},a_{i+1}) = w_{f(i+1)}$.
\end{lemma}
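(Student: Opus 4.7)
The plan is to unfold both sides as ``longest suffix of $X$ that is a prefix of $Y$'' conditions and then equate lengths, noting that both quantities are prefixes of $w$ (so equal length forces equality). Let $s = \delta_\cA(w_{f(i)}, a_{i+1})$; by definition $s$ is the longest suffix of $w_{f(i)} a_{i+1}$ that is a prefix of $w$, while $w_{f(i+1)}$ is the longest suffix of $w_{i+1}$ that is a prefix of $w_i$ (equivalently, a proper prefix of $w_{i+1}$). This is essentially the classical KMP failure-function recursion: the longest proper border of $w_{i+1}$ is obtained from the longest border of $w_i$ by appending $a_{i+1}$; iterated fallbacks are not needed here because we are in the ``match'' case rather than the ``mismatch'' case.

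To show $|s| \le |w_{f(i+1)}|$, I will use that $w_{f(i)}$ is a suffix of $w_i$, so $w_{f(i)} a_{i+1}$ is a suffix of $w_{i+1}$, and hence $s$ is a suffix of $w_{i+1}$ as well as a prefix of $w$. The length bound $|s| \le f(i) + 1 \le i < |w_{i+1}|$ shows $s$ is a \emph{proper} prefix of $w_{i+1}$, hence a prefix of $w_i$; maximality of $w_{f(i+1)}$ then gives the inequality. For the reverse direction, I will show that $w_{f(i+1)}$ itself is a suffix of $w_{f(i)} a_{i+1}$. The empty case is immediate; otherwise, as a nonempty suffix of $w_{i+1} = w_i a_{i+1}$, the word $w_{f(i+1)}$ must end in $a_{i+1}$, so write $w_{f(i+1)} = u\, a_{i+1}$ with $u = w_{f(i+1)-1}$. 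Cancelling the trailing $a_{i+1}$ shows $u$ is a suffix of $w_i$, and $|u| = f(i+1) - 1 \le i - 1$ together with $u$ being a prefix of $w_i$ shows $u$ is a proper prefix of $w_i$. Maximality of $w_{f(i)}$ then yields $|u| \le f(i)$, and since two suffixes of a common word are nested according to their lengths, $u$ is a suffix of $w_{f(i)}$. Thus $w_{f(i+1)} = u\, a_{i+1}$ is a suffix of $w_{f(i)} a_{i+1}$ and a prefix of $w$, giving $|w_{f(i+1)}| \le |s|$ by the definition of $s$.

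The main obstacle is the bookkeeping between the ``suffix of $w_i$'' and ``prefix of $w$ (or of $w_i$)'' relations, which must be juggled carefully at each step. The arithmetic is driven by two basic facts from the preamble of this section --- two suffixes (resp.\ prefixes) of a common word are comparable by length --- together with the observation that appending $a_{i+1}$ preserves both the suffix-of and prefix-of relationships.
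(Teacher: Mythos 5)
Your proof is correct and follows essentially the same route as the paper's: a two-way length comparison between $\delta_\cA(w_{f(i)},a_{i+1})$ and $w_{f(i+1)}$, using the maximality defining $\delta_\cA$ and $f$, the decomposition $w_{f(i+1)} = w_{f(i+1)-1}a_{i+1}$ to get $f(i+1)-1 \le f(i)$, and suffix-nesting to conclude $w_{f(i)}a_{i+1} \succeq_s w_{f(i+1)}$. The only difference is cosmetic (the paper isolates $f(i+1) \le f(i)+1$ as a preliminary step and applies the nesting inside $w_{i+1}$ rather than inside $w_i$ after cancelling the trailing letter).
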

\begin{proof}
First we prove the following fact: $f(i+1) \le f(i)+1$.
If $f(i+1) = 0$, this is immediate, so assume $f(i+1) > 0$.
Since $f(i+1) > 0$, the word $w_{f(i+1)}$ is non-empty and thus $w_{f(i+1)}$ ends with $a_{i+1}$.
We can write $w_{f(i+1)} = w_{f(i+1)-1}a_{i+1}$.
Since $w_{i+1} \sto w_{f(i+1)} = w_{f(i+1)-1}a_{i+1} \sto w_i$,
in particular we have 
$w_{i+1} = w_ia_{i+1} \succeq_s w_{f(i+1)-1}a_{i+1}$, and so $w_i \succeq_s w_{f(i+1)-1}$.
Also, since $w_{f(i+1)-1}a_{i+1} \preceq_p w_i$, we have $w_{f(i+1)-1} \preceq_p w_{i-1}$.
It follows that $w_i \to w_{f(i+1)-1} \to w_{i-1}$.
Since $w_i \sto w_{f(i)} \sto w_{i-1}$, we have $f(i+1)-1 \le f(i)$.
Thus $f(i+1) \le f(i)+1$ as required.

Now, let $\delta_\cA(w_{f(i)},a_{i+1}) = w_j$.
Then $w_{f(i)}a_{i+1} \sto w_j \sto w$.
We have $w_i \sto w_{f(i)} \sto w_{i-1}$, and thus $w_i \succeq_s w_{f(i)}$.
Thus $w_ia_{i+1} = w_{i+1} \succeq_s w_{f(i)}a_{i+1} \succeq_s w_j$.
Also, since $f(i) < i$ and $j \le f(i)+1$, we have $j \le i$.
This implies $w_j \preceq_p w_i$.
It follows that $w_{i+1} \to w_j \to w_i$.
Since $w_{i+1} \sto w_{f(i+1)} \sto w_i$, we have $|w_j| \le |w_{f(i+1)}|$.

We noted above that $w_{i+1} \succeq_s w_{f(i)}a_{i+1}$, and we also have $w_{i+1} \succeq_s w_{f(i+1)}$.
Since $|w_{f(i)}a_{i+1}| = f(i)+1 \ge f(i+1) = |w_{f(i+1)}|$, it follows that $w_{f(i)}a_{i+1} \succeq_s w_{f(i+1)}$.
Hence $w_{f(i)}a_{i+1} \to w_{f(i+1)} \to w$.
Since $w_{f(i)}a_{i+1} \sto w_j \sto w$, we have $|w_{f(i+1)}| \le |w_j|$.
So $|w_j| = |w_{f(i+1)}|$, but both words are prefixes of $w$, so in fact $w_j = w_{f(i+1)}$ as required. \qed
\end{proof}

We can now establish the upper bound.
\begin{proposition}
Suppose $m \ge 3$ and $n \ge 2$.
If $w$ is non-empty, $\kappa(\{w\}) \le m$, and $\kappa(T) \le n$, then we have $\kappa(\Sig^*w \cap T) \le (m-1)n-(m-2)$.
\end{proposition}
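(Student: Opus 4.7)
The plan is to show that the direct product $\cA \times \cD$, which recognizes $\Sig^* w \cap T$ and has $(m-1)n$ states, has at most $(m-1)n - (m-2)$ indistinguishability classes. As the paragraph preceding Lemma~\ref{lem:suffixword_equal} indicates, this reduces to proving two things: first, that for each $i \in \{1, \dotsc, m-2\}$ the state $(w_i, p_i)$ is indistinguishable from $(w_{f(i)}, p_i)$; and second, that these $m-2$ identifications genuinely remove $m-2$ equivalence classes from the total.

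For the first task, I will prove $(w_i, p_i) \sim (w_{f(i)}, p_i)$ by strong induction on the length of the input word $u$. For $u = \eps$, both states are non-final: if $i < m-2$ then neither $w_i$ nor $w_{f(i)}$ equals $w$ (since $f(i) < i < m-2$), while if $i = m-2$ we use $p_{m-2} = \alpha(r, w) = q \notin F$ together with $f(m-2) < m-2$. For the inductive step, write $u = au'$ and split into two cases. If $i < m-2$ and $a = a_{i+1}$, then by the definition of $p_{i+1}$ and by Lemma~\ref{lem:suffixword_next}, the two states move respectively to $(w_{i+1}, p_{i+1})$ and $(w_{f(i+1)}, p_{i+1})$, which form another pair of the same shape and are indistinguishable on $u'$ by the inductive hypothesis. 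Otherwise (either $a \ne a_{i+1}$ with $i < m-2$, or $i = m-2$), Lemma~\ref{lem:suffixword_equal} yields $\delta_\cA(w_i, a) = \delta_\cA(w_{f(i)}, a)$; since the second coordinates are both $p_i$, the two states transition to exactly the same direct-product state, and any continuation $u'$ is equally accepting on both.

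For the second task, I will exhibit a forest structure on the direct-product state set. Let $G$ be the graph with vertex set $W \times Q_n$ and edges $\{(w_i, p_i),\, (w_{f(i)}, p_i)\}$ for $i = 1, \dotsc, m-2$. The decisive property is $f(i) < i$: projecting each edge onto its first coordinate gives $m-2$ distinct edges of the ``parent tree'' on $W$ rooted at $w_0$, and a short calculation (ruling out $i = f(j),\, j = f(i)$ for $i \ne j$) confirms these $m-2$ edges of $G$ are themselves distinct. Since both endpoints of the $i$-th edge share second coordinate $p_i$, edges of $G$ at distinct levels $p \in Q_n$ are vertex-disjoint, while the edges at any fixed level project to a subset of the parent-tree edges and hence form a forest. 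Therefore $G$ is a forest on $(m-1)n$ vertices with $m-2$ edges, giving exactly $(m-1)n - (m-2)$ connected components. As the indistinguishability relation is coarser than the equivalence relation generated by the edges of $G$, the number of indistinguishability classes, and hence $\kappa(\Sig^* w \cap T)$, is at most $(m-1)n - (m-2)$.

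The main obstacle, I expect, is the ``independence'' issue addressed by the forest argument. Without it one might worry that several identified pairs share a state or close a cycle, collapsing into a single larger class and yielding a reduction of fewer than $m-2$ classes. The strict descent $f(i) < i$ makes the first-coordinate projection a tree, and the level-by-level decomposition then lifts this acyclicity to the direct product. Step 1 is mostly bookkeeping once the two lemmas are in place, fitting neatly into the dichotomy they encode; the combinatorial counting in Step 2 is the place where the bound $(m-1)n - (m-2)$, rather than some larger quantity, is actually achieved.
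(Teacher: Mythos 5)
Your proof is correct. Its core---showing $(w_i,p_i)$ and $(w_{f(i)},p_i)$ are indistinguishable by combining Lemmas~\ref{lem:suffixword_equal} and~\ref{lem:suffixword_next}---is the same argument as the paper's, merely organized as an induction on the length of a candidate distinguishing word rather than on $m-2-i$; the two inductions unfold identically, since the only nontrivial transition sends the pair indexed $i$ to the pair indexed $i+1$ and every other letter merges the two states outright. Where you genuinely add something is the counting step. The paper asserts that exhibiting $m-2$ states, each indistinguishable from some other state, establishes the bound $(m-1)n-(m-2)$; as a general principle this is not valid (two states that are merely indistinguishable from \emph{each other} would give two such states yet remove only one class), so an independence argument is needed. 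Your forest argument supplies it: the strict descent $f(i)<i$ makes the projected edges $\{i,f(i)\}$ distinct and acyclic, the shared second coordinate $p_i$ keeps edges at different levels vertex-disjoint, and hence the $m-2$ identifications form a forest and remove exactly $m-2$ equivalence classes. (An equivalent shortcut: since $f(i)<i$, each of the $m-2$ distinct states $(w_i,p_i)$ is equivalent to a state with strictly smaller first-coordinate index, so none of them can be the minimal representative of its class, leaving at most $(m-1)n-(m-2)$ classes.) Either way, you correctly discharge a step the paper leaves implicit.
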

\begin{proof}
It suffices to prove that 
states $(w_i,p_i)$ and $(w_{f(i)},p_i)$ are indistinguishable for $1 \le i \le m-2$.
We proceed by induction on the value $m-2-i$.

The base case is $m-2-i = 0$, that is, $i =m-2$.
Our states are $(w_{m-2},p_{m-2})$ and $(w_{f(m-2)},p_{m-2})$. 
By Lemma \ref{lem:suffixword_equal}, we have $\delta_\cA(w_{m-2},a) = \delta_\cA(w_{f(m-2)},a)$ for all $a \in \Sig$. Thus non-empty words cannot distinguish the states. But recall that $p_{m-2} = q$ is a non-final state, so the states we are trying to distinguish are both non-final, and thus the empty word does not distinguish the states either. So these states are indistinguishable.

Now, suppose $m-2-i > 0$, that is, $i < m-2$.
Assume that states $(w_{i+1},p_{i+1})$ and $(w_{f(i+1)},p_{i+1})$ are indistinguishable.
We want to show that $(w_i,p_i)$ and $(w_{f(i)},p_i)$ are indistinguishable.
Since $f(i) < i < m-2$, both states are non-final, and thus the empty word cannot distinguish them.
By Lemma \ref{lem:suffixword_equal}, if $a \ne a_{i+1}$. then $\delta_\cA(w_i,a) = \delta_\cA(w_{f(i)},a)$ for all $a \in \Sig$. So only words that start with $a_{i+1}$ can possibly distinguish the states.
But by Lemma \ref{lem:suffixword_next}, letter $a_{i+1}$ sends the states to $(w_{i+1},p_{i+1})$ and $(w_{f(i+1)},p_{i+1})$, which are indistinguishable by the induction hypothesis.
Thus the states cannot be distinguished. \qed
\end{proof}

This establishes an upper bound of $(m-1)n - (m-2)$ on the state complexity of $\Sig^* w \cap T$.
Next, we prove this bound is tight.
\begin{theorem}
Suppose $m \ge 3$ and $n \ge 2$. 
There exists a non-empty word $w$ and a language $T$, with $\kappa(\{w\}) \le m$ and $\kappa(T) \le n$, such that $\kappa(\Sig^* w \cap T) = (m-1)n - (m-2)$.
\end{theorem}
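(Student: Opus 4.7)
The plan is to exhibit concrete witnesses $w$ and $T_n$ whose product DFA has exactly $(m-1)n-(m-2)$ distinguishable reachable states. I would take $w = a^{m-2}$, so that $W = \{\eps, a, \dotsc, a^{m-2}\}$ and the function $f$ from the upper-bound analysis simplifies to $f(i) = i-1$ for $1 \le i \le m-2$. For the text, take $T_n$ to be the language of the DFA $\cD_T = (Q_n, \{a,b\}, \alpha, 0, \{n-1\})$ with $a \co (0,1,\dotsc,n-1)$ and $b \co (1,2,\dotsc,n-1)$, reusing the dialect from Section~4. The DFA $\cA$ of Proposition~\ref{prop:suffixword} then has $a$ advancing $w_i \to w_{i+1}$ (with $w_{m-2}$ fixed under $a$) and $b$ sending every state back to $w_0 = \eps$.

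First I would form the direct product $\cA \times \cD_T$ with final states $\{w_{m-2}\} \times \{n-1\}$, note that it has at most $(m-1)n$ states, and then select a starting index $r \in Q_n$ for the upper-bound's reference state so that $q = \alpha(r, a^{m-2})$ is non-final; since there are $n-1 \ge 1$ non-final states, such an $r$ exists, and the states $p_i = (r+i) \bmod n$ satisfy the hypotheses needed in the upper-bound proof.

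Next, for reachability I would show that every state $(w_i, p) \in W \times Q_n$ is reachable from $(w_0, 0)$. The strategy is: apply $a^p$ from $(w_0, 0)$ to reach $(w_{\min(p, m-2)}, p)$ in the simple case, and in general compose blocks of the form $a^\ell b$, using the fact that $b$ acts as a permutation on $Q_n$ fixing $0$ while resetting the $\cA$-coordinate to $w_0$; combined with a long enough $a^p$ prefix this lets us position the two coordinates independently. The bijectivity of $a$ and $b$ on $Q_n$ enables an appeal to Lemma~\ref{lem:prodreach} once we reach every $(w_0, p)$.

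For distinguishability, the upper-bound proof already supplies the $m-2$ pairs $(w_i, p_i) \sim (w_{i-1}, p_i)$. To establish tightness I would verify two things: (a)~the transitive closure under product transitions of these $m-2$ pairs yields exactly $m-2$ merges (that is, the pairs are independent), and (b)~every other pair of product states is separated by some word. For (b) I would argue that states with distinct $\cD_T$-coordinates are separated by words in $a^*$ that reach the accepting coordinate $n-1$ while keeping the $\cA$-component in $\{w_0, \dotsc, w_{m-2}\}$, and that states with equal $\cD_T$-coordinate but $\cA$-coordinates that are not matched by an $f$-pair are separated by a suitable word $a^\ell b a^k$ which probes the difference before the two $\cA$-coordinates synchronize at $w_0$.

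The main obstacle is part~(a): when $n < m-1$ the residues $p_i = (r+i) \bmod n$ collide, so several indistinguishable pairs may live in the same $\cD_T$-fiber, and one must rule out chains such as $(w_{i-1}, p) \sim (w_i, p) \sim (w_{i+1}, p)$. I would handle this by observing that, within a single fiber $p$, the pairs that appear are precisely $\{(w_{j-1}, p), (w_j, p)\}$ for those indices $j \equiv p - r \pmod n$, and these indices are spaced exactly $n$ apart in $\{1, \dotsc, m-2\}$, hence are never consecutive; so no transitive collapse occurs and the number of equivalence classes is exactly $(m-1)n - (m-2)$.
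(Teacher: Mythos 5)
Your choice of witness does not work: with $w=a^{m-2}$ and the Section~4 dialect for $T_n$ (single final state $n-1$, $a\co(0,\dotsc,n-1)$, $b\co(1,\dotsc,n-1)$), the product DFA has far fewer than $(m-1)n-(m-2)$ distinguishable states once $n\ge 3$. The problem is that in your $\cA$ the letter $b$ sends \emph{every} state to $w_0=\eps$, so if two product states $(a^i,c)$ and $(a^j,c)$ lie in the same column, any word containing a $b$ maps them to the \emph{same} state; hence they can only be separated by words in $a^*$. The state $(a^i,c)$ accepts $a^\ell$ iff $\ell\ge m-2-i$ and $\ell\equiv n-1-c\pmod n$, so its $a^*$-profile is $S_c\cap[m-2-i,\infty)$ where $S_c=\{\ell\ge 0:\ell\equiv n-1-c\pmod n\}$. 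Column $c$ therefore contributes only $1+|S_c\cap\{0,\dotsc,m-3\}|$ equivalence classes, and summing over $c$ gives exactly $n+(m-2)$ classes in total, which equals $(m-1)n-(m-2)$ only when $n=2$. Concretely, for $m=n=3$ your product has the merges $(\eps,0)\sim(a,0)$ and $(\eps,1)\sim(a,1)$, leaving $4$ classes instead of the required $5$. This also shows that the claim underpinning your part~(a) --- that the only indistinguishable pairs are the $m-2$ pairs produced by the map $f$ --- is false for this witness: whole runs of states within a column collapse, not just adjacent $f$-pairs.

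The paper avoids this trap by decoupling the two roles of the letters: it takes $w=b^{m-2}$ and lets $T$ be accepted by a DFA in which $b$ acts as the \emph{identity} on $Q_n$ and the final state set is $\{0,\dotsc,n-2\}$ (all but one state final). Then two same-column states $(b^i,q)$ and $(b^j,q)$ with $i<j$ and $q\ne n-1$ are separated by $b^{m-2-j}$ alone, since this word drives the second component nowhere and lands exactly one of the two first components on $w=b^{m-2}$ while the column $q$ is accepting; only the single non-accepting column $n-1$ collapses, and it collapses entirely, giving precisely the loss of $m-2$ states. If you want to salvage your approach, you need a witness in which probing the $\cA$-coordinate does not require moving the $T$-coordinate into its unique final state --- making almost all of $T$'s states final, as the paper does, is the natural fix.
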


\begin{proof}
Let $\Sig = \{a,b\}$ and let $w = b^{m-2}$.
Let $\cA$ be the DFA for $\Sig^* w$.
Let $T$ be the language accepted by the DFA $\cD$ with state set $Q_n$, alphabet $\Sig$, initial state $0$, final state set $\{0,\dotsc,n-2\}$, and transformations $a \co (0,\dotsc,n-1)$ and $b \co \id$.

\begin{figure}[ht!]
\unitlength 8.5pt
\begin{center}\begin{picture}(36,28)(-2,-2)
\gasset{Nh=1.8,Nw=3.5,Nmr=1.25,ELdist=0.4,loopdiam=1.5}
\node(b00)(0 ,24){$\eps,0$}\imark(b00)
\node(b01)(6 ,24){$\eps,1$}
\node(b02)(12,24){$\eps,2$}
\node(b03)(18,24){$\eps,3$}
\node(b04)(24,24){$\eps,4$}
\node(b00')(30,24){$\eps,0$}
\node[Nframe=n](e0)(36,24){}
\node(b10)(0 ,16){$b,0$}
\node(b11)(6 ,16){$b,1$}
\node(b12)(12,16){$b,2$}
\node(b13)(18,16){$b,3$}
\node(b14)(24,16){$b,4$}
\node(b10')(30,16){$b,0$}
\node[Nframe=n](e1)(36,16){}
\node(b20)(0 ,8){$b^2,0$}
\node(b21)(6 ,8){$b^2,1$}
\node(b22)(12,8){$b^2,2$}
\node(b23)(18,8){$b^2,3$}
\node(b24)(24,8){$b^2,4$}
\node(b20')(30,8){$b^2,0$}
\node[Nframe=n](e2)(36,8){}
\node(b30)(0 ,0){$b^3,0$}\rmark(b30)
\node(b31)(6 ,0){$b^3,1$}\rmark(b31)
\node(b32)(12,0){$b^3,2$}\rmark(b32)
\node(b33)(18,0){$b^3,3$}\rmark(b33)
\node(b34)(24,0){$b^3,4$}
\node(b30')(30,0){$b^3,0$}\rmark(b30')
\node[Nframe=n](e3)(36,0){}
\drawedge[ELpos=30](b00,b10){$b$}
\drawedge[ELpos=30](b10,b20){$b$}
\drawedge[ELpos=30](b20,b30){$b$}
\drawloop[loopangle=-90](b30){$b$}
\drawedge[ELpos=30](b01,b11){$b$}
\drawedge[ELpos=30](b11,b21){$b$}
\drawedge[ELpos=30](b21,b31){$b$}
\drawloop[loopangle=-90](b31){$b$}
\drawedge[ELpos=30](b02,b12){$b$}
\drawedge[ELpos=30](b12,b22){$b$}
\drawedge[ELpos=30](b22,b32){$b$}
\drawloop[loopangle=-90](b32){$b$}
\drawedge[ELpos=30](b03,b13){$b$}
\drawedge[ELpos=30](b13,b23){$b$}
\drawedge[ELpos=30](b23,b33){$b$}
\drawloop[loopangle=-90](b33){$b$}
\drawedge[ELpos=30](b04,b14){$b$}
\drawedge[ELpos=30](b14,b24){$b$}
\drawedge[ELpos=30](b24,b34){$b$}
\drawloop[loopangle=-90](b34){$b$}
\drawedge[ELpos=30](b00',b10'){$b$}
\drawedge[ELpos=30](b10',b20'){$b$}
\drawedge[ELpos=30](b20',b30'){$b$}
\drawloop[loopangle=-90](b30'){$b$}
\drawedge(b00,b01){$a$}
\drawedge(b01,b02){$a$}
\drawedge(b02,b03){$a$}
\drawedge(b03,b04){$a$}
\drawedge(b04,b00'){$a$}
\drawedge[dash={0.2}0](b00',e0){$a$}
\drawedge(b10,b01){$a$}
\drawedge(b11,b02){$a$}
\drawedge(b12,b03){$a$}
\drawedge(b13,b04){$a$}
\drawedge(b14,b00'){$a$}
\drawedge[dash={0.2}0](b10',e1){$a$}
\drawedge[ELpos=20,ELdist=-1](b20,b01){$a$}
\drawedge[ELpos=20,ELdist=-1](b21,b02){$a$}
\drawedge[ELpos=20,ELdist=-1](b22,b03){$a$}
\drawedge[ELpos=20,ELdist=-1](b23,b04){$a$}
\drawedge[ELpos=20,ELdist=-1](b24,b00'){$a$}
\drawedge[dash={0.2}0](b20',e2){$a$}
\drawedge[ELpos=15,ELdist=-1](b30,b01){$a$}
\drawedge[ELpos=15,ELdist=-1](b31,b02){$a$}
\drawedge[ELpos=15,ELdist=-1](b32,b03){$a$}
\drawedge[ELpos=15,ELdist=-1](b33,b04){$a$}
\drawedge[ELpos=15,ELdist=-1](b34,b00'){$a$}
\drawedge[dash={0.2}0](b30',e3){$a$}

\end{picture}\end{center}
\caption{DFA $\cA \times \cD$ for matching a single suffix, with $m = 5$ and $n = 5$. Column 0 is duplicated to make the diagram cleaner; the actual DFA 
contains only
one copy of this column.}
\end{figure}

We show that $\cA \times \cD$ has $(m-1)n - (m-2)$ reachable and pairwise distinguishable states.
For reachability, for $0 \le i \le m-2$ and $0 \le q \le n-1$, we can reach $(b^i,q)$ from the initial state $(\eps,0)$ by the word $a^qb^i$.
For distinguishability, note that all $m-1$ states in column $n-1$ are indistinguishable, and so collapse to one state under the indistinguishability relation.
Indeed, given states $(b^i,n-1)$ and $(b^j,n-1)$, if we apply $a$ both states are sent to $(\eps,0)$, and if we apply $b$ we simply reach another pair of non-final states in column $n-1$.
Hence at most $(m-1)n - (m-2)$ of the reachable states are pairwise distinguishable.
Next consider $(b^i,q)$ and $(b^j,q)$ with $i < j$ and $q \ne n-1$.
We can distinguish these states by $b^{m-2-j}$.
So pairs of states in the same column are distinguishable, with the exception of states in column $n-1$.
For pairs of states in different columns, consider $(b^i,p)$ and $(b^j,q)$ with $p < q$.
If $q \ne n-1$, then by $a^{n-1-q}$ we reach $(\eps,n-1+p-q)$ and $(\eps,n-1)$. These latter states are distinguished by $w = b^{m-2}$.
If $q = n-1$, then $(b^i,p)$ and $(b^j,n-1)$ are distinguished by $b^{m-2-i}$.
Hence there are $(m-1)n - (m-2)$ reachable and pairwise distinguishable states. \qed
\end{proof}

\subsection{Matching a Single Factor}

\begin{proposition}
\label{prop:factorword}
If the state complexity of $\{w\}$ is $m$, then the state complexity of $\Sig^* w \Sig^*$ is $m-1$.
\end{proposition}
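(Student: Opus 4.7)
The plan is to construct an $(m-1)$-state DFA $\cB$ for $\Sig^*w\Sig^*$ by a small modification of the DFA $\cA$ from Proposition~\ref{prop:suffixword}: keep the state set $W = \{w_0,\dots,w_{m-2}\}$, the initial state $w_0$, and the final state $w_{m-2}$, and keep the transition function of $\cA$ on every state in $W \setminus \{w_{m-2}\}$; but replace the transitions out of $w_{m-2}$ by self-loops, setting $\delta_\cB(w_{m-2},a) = w_{m-2}$ for all $a \in \Sig$. This makes $w_{m-2}$ a sink final state and immediately gives the upper bound $m-1$.

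Next, I would verify that $L(\cB) = \Sig^*w\Sig^*$. The key fact from Proposition~\ref{prop:suffixword} is that $\delta_\cA(w_0,y) = w_{m-2}$ if and only if $y$ ends with $w$. Since $\cB$ and $\cA$ agree on all transitions out of states in $W \setminus \{w_{m-2}\}$, the runs of $\cA$ and $\cB$ on a common input $x$ coincide up to (and including) the moment they first enter $w_{m-2}$. Hence, on any input $x$, the DFA $\cB$ enters $w_{m-2}$ precisely at the shortest prefix of $x$ that ends with $w$, if such a prefix exists, and then remains there forever; this is exactly the acceptance condition for $\Sig^*w\Sig^*$.

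Finally, I would prove minimality. Reachability is immediate: reading $w_i$ from $w_0$ brings $\cB$ to state $w_i$, since $\delta_\cB(w_j,a_{j+1}) = w_{j+1}$ for every $j < m-2$. For distinguishability, take any pair $w_i,w_j$ with $i < j \le m-2$. If $j = m-2$, then $\eps$ separates them, because $w_{m-2}$ is the unique final state. Otherwise the word $u = a_{j+1}\cdots a_{m-2}$ is accepted from $w_j$ by the chain $w_j \to w_{j+1} \to \cdots \to w_{m-2}$ but not from $w_i$. The main observation, which I anticipate to be the only slightly subtle point, is the monotonicity bound $|\delta_\cB(w_k,a)| \le k+1$, which holds because $\delta_\cB(w_k,a)$ is a suffix of $w_k a$ when $k<m-2$ and equals $w_{m-2}$ when $k=m-2$. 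Applying this bound $(m-2-j)$ times starting from $w_i$ shows that after reading $u$ the DFA $\cB$ ends in some $w_\ell$ with $\ell \le i + (m-2-j) < m-2$, so $u$ is rejected from $w_i$. Thus $\cB$ has $m-1$ reachable, pairwise distinguishable states, and its state complexity is exactly $m-1$.
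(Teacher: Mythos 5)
Your proposal is correct and follows essentially the same route as the paper: the identical construction (take the DFA $\cA$ for $\Sig^*w$ and turn $w_{m-2}$ into an accepting sink), the same language-equality argument via the observation that the runs of the two automata agree up to the first visit to $w_{m-2}$, and the same distinguishing words $a_{j+1}\cdots a_{m-2}$. The only difference is that you spell out the length bound $|\delta(w_k,a)|\le k+1$ justifying why that word is rejected from $w_i$, a detail the paper leaves implicit.
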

\begin{proof}
Let $\cA =( W, \Sig, \delta_\cA, w_0, \{w_{m-2}\} )$ be the DFA with transitions defined as follows: for all $a \in \Sig$ and $w_i \in W$, we have $w_ia \sto \delta_\cA(w_i,a) \sto w$.
Recall from Proposition \ref{prop:suffixword} that $\cA$ recognizes $\Sig^* w$.
We modify $\cA$ to obtain a DFA $\cA'$ that accepts $\Sig^* w \Sig^*$ as follows.
Let $\cA' =( W, \Sig, \delta_{\cA'}, w_0, \{w_{m-2}\} )$, where $\delta_{\cA'}$ is defined as follows for each $a \in \Sig$:
$\delta_{\cA'}(w_i,a) = \delta_\cA(w_i,a)$ for $i < m-2$, and $\delta_{\cA'}(w_{m-2},a) = w_{m-2}$.
Note that $\cA'$ is minimal: state $w_i$ can be reached by the word $w_i$, and states $w_i$ and $w_j$ with $i < j$ are distinguished by $a_{j+1} \dotsb a_{m-2}$. It remains to show that $\cA'$ accepts $\Sig^*w\Sig^*$.

To simplify the notation, we write $\delta'$ instead of $\delta_{\cA'}$ and $\delta$ instead of $\delta_\cA$.
Suppose $x$ is accepted by $\cA'$. 
Write $x = yz$, where $y$ is the shortest prefix of $x$ such that $\delta'(\eps,y) = w_{m-2}$. 
Since $y$ is minimal in length, for every proper prefix $y'$ of $y$, we have $\delta'(\eps,y') = w_i$ for some $i < m-2$.
It follows that $\delta'(\eps,y) = \delta(\eps,y)$  by the definition of $\delta'$.
So $\delta(\eps,y) = w_{m-2}$, and hence $y$ is accepted by $\cA$.
It follows that $y \in \Sig^* w$. 
This implies $x = yz \in \Sig^* w \Sig^*$.

Conversely, suppose $x \in \Sig^* w \Sig^*$. Write $x = ywz$ with $y$ minimal.
Since $yw \in \Sig^*w$, we have $\delta(\eps,yw) = w_{m-2}$.
Furthermore, $yw$ is the shortest prefix of $x$ such that $\delta(\eps,yw) = w_{m-2}$, since if there was a shorter prefix then $y$ would not be minimal.
This means that $\delta(\eps,yw) = \delta'(\eps,yw)$ by the definition of $\delta'$.
So $\delta'(\eps,ywz) = w_{m-2}$ and hence $x = ywz$ is accepted by $\cA'$. \qed
\end{proof}

Fix $w$ with state complexity $m$, and let $\cA$ and $\cA'$ be the DFAs for $\Sig^*w$ and $\Sig^* w \Sig^*$, respectively, as described in the proof of Proposition \ref{prop:factorword}. Fix $T$ with state complexity 
at most
 $n$, and let $\cD$ be an $n$-state DFA for $T$ with state set $Q_n$ and final state set $F$. The direct product DFA $\cA' \times \cD$ with final state set $\{w\} \times F$ recognizes $\Sig^* w \Sig^* \cap T$. Since $\cA' \times \cD$ has $(m-1)n$ states, this gives an upper bound of $(m-1)n$ on the state complexity of $\Sig^* w \Sig^* \cap T$. We claim that this upper bound is tight.

\begin{theorem}
Suppose $m \ge 3$ and $n \ge 2$. 
There exists a non-empty word $w$ and a language $T$, with $\kappa(\{w\}) \le m$ and $\kappa(T) \le n$, such that $\kappa(\Sig^* w\Sig^* \cap T) = (m-1)n$.
\end{theorem}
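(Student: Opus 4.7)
The plan is to take $w = b^{m-2}$ over $\Sig = \{a, b\}$, together with the same text witness used in the preceding suffix-matching theorem: the DFA $\cD = (Q_n, \Sig, \alpha, 0, \{0, 1, \dotsc, n-2\})$ with $a \colon (0, 1, \dotsc, n-1)$ and $b \colon \id$. By Proposition~\ref{prop:factorword}, the DFA $\cA'$ recognizing $\Sig^* w \Sig^*$ has $m-1$ states $w_0, w_1, \dotsc, w_{m-2}$ where $b$ shifts $w_i$ to $w_{i+1}$ (with a self-loop at the unique accepting sink $w_{m-2}$), while $a$ sends every non-sink state back to $w_0$ and fixes $w_{m-2}$. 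The product $\cA' \times \cD$ has $(m-1)n$ states and accepting set $\{w_{m-2}\} \times \{0, 1, \dotsc, n-2\}$, so it suffices to verify that every state is reachable and every pair is distinguishable.

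For reachability I would simply note that since $b$ acts as the identity in $\cD$, the word $a^q b^i$ sends the initial state $(w_0, 0)$ to $(w_i, q)$, so every one of the $(m-1)n$ product states is reached.

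For distinguishability I would split into cases on two distinct states $(w_i, p)$ and $(w_j, q)$. When $i = j$, I apply $b^{m-2-i}$ to move both into the sink row, reducing to distinguishing $(w_{m-2}, p)$ from $(w_{m-2}, q)$ inside $\{w_{m-2}\} \times Q_n$; since $n-1$ is the unique non-final state of $\cD$, an appropriate power of $a$ (or $\eps$ itself, if one of $p, q$ already equals $n-1$) does the job. When $i \ne j$, say $i < j$, apply $b^{m-2-j}$ to reach $(w_{i + m - 2 - j}, p)$ in a non-sink row and $(w_{m-2}, q)$ in the sink row. If $q \ne n-1$ the states already differ on $\eps$; if $q = n-1$, appending a single $a$ sends the first to $(w_0, (p+1) \bmod n)$, which is non-final, while the second goes to $(w_{m-2}, 0)$, which is final.

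The main obstacle I expect is the sink-row subcase: once both states lie in $\{w_{m-2}\} \times Q_n$, the $\cA'$-coordinate is frozen and contributes nothing, so distinguishability must be extracted purely from the cyclic $a$-action of $\cD$ and the uniqueness of the non-final state $n-1$. This is precisely why the choice of $T$ matters (in contrast to the suffix case, where the sink row did not exist and a whole column collapsed, costing the $-(m-2)$ term), and it is the step most sensitive to the witness.
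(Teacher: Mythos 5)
Your proposal is correct and follows essentially the same route as the paper: identical witnesses ($w=b^{m-2}$ and the $n$-state DFA with $a\co(0,\dotsc,n-1)$, $b\co\id$, final set $\{0,\dotsc,n-2\}$), the same reachability word $a^qb^i$, and a distinguishability argument built from the same ingredients (pushing states into the sink row with powers of $b$ and exploiting the unique non-final state $n-1$ via powers of $a$). The only cosmetic difference is that you split cases on equality of the $\cA'$-coordinates while the paper splits on equality of the $\cD$-coordinates; both case analyses go through.
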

\begin{proof}
Let $\Sig = \{a,b\}$ and let $w = b^{m-2}$.
Let $\cA'$ be the DFA for $\Sig^* w \Sig^*$.
Let $T$ be the language accepted by the DFA $\cD$ with state set $Q_n$, alphabet $\Sig$, initial state $0$, final state set $\{0,\dotsc,n-2\}$, and transformations $a \co (0,\dotsc,n-1)$ and $b \co \id$.

\begin{figure}[ht!]
\unitlength 8.5pt
\begin{center}\begin{picture}(36,28)(-2,-2)
\gasset{Nh=1.8,Nw=3.5,Nmr=1.25,ELdist=0.4,loopdiam=1.5}
\node(b00)(0 ,24){$\eps,0$}\imark(b00)
\node(b01)(6 ,24){$\eps,1$}
\node(b02)(12,24){$\eps,2$}
\node(b03)(18,24){$\eps,3$}
\node(b04)(24,24){$\eps,4$}
\node(b00')(30,24){$\eps,0$}
\node[Nframe=n](e0)(36,24){}
\node(b10)(0 ,16){$b,0$}
\node(b11)(6 ,16){$b,1$}
\node(b12)(12,16){$b,2$}
\node(b13)(18,16){$b,3$}
\node(b14)(24,16){$b,4$}
\node(b10')(30,16){$b,0$}
\node[Nframe=n](e1)(36,16){}
\node(b20)(0 ,8){$b^2,0$}
\node(b21)(6 ,8){$b^2,1$}
\node(b22)(12,8){$b^2,2$}
\node(b23)(18,8){$b^2,3$}
\node(b24)(24,8){$b^2,4$}
\node(b20')(30,8){$b^2,0$}
\node[Nframe=n](e2)(36,8){}
\node(b30)(0 ,0){$b^3,0$}\rmark(b30)
\node(b31)(6 ,0){$b^3,1$}\rmark(b31)
\node(b32)(12,0){$b^3,2$}\rmark(b32)
\node(b33)(18,0){$b^3,3$}\rmark(b33)
\node(b34)(24,0){$b^3,4$}
\node(b30')(30,0){$b^3,0$}\rmark(b30')
\node[Nframe=n](e3)(36,0){}
\drawedge[ELpos=30](b00,b10){$b$}
\drawedge[ELpos=30](b10,b20){$b$}
\drawedge(b20,b30){$b$}
\drawloop[loopangle=-90](b30){$b$}
\drawedge[ELpos=30](b01,b11){$b$}
\drawedge[ELpos=30](b11,b21){$b$}
\drawedge(b21,b31){$b$}
\drawloop[loopangle=-90](b31){$b$}
\drawedge[ELpos=30](b02,b12){$b$}
\drawedge[ELpos=30](b12,b22){$b$}
\drawedge(b22,b32){$b$}
\drawloop[loopangle=-90](b32){$b$}
\drawedge[ELpos=30](b03,b13){$b$}
\drawedge[ELpos=30](b13,b23){$b$}
\drawedge(b23,b33){$b$}
\drawloop[loopangle=-90](b33){$b$}
\drawedge[ELpos=30](b04,b14){$b$}
\drawedge[ELpos=30](b14,b24){$b$}
\drawedge(b24,b34){$b$}
\drawloop[loopangle=-90](b34){$b$}
\drawedge[ELpos=30](b00',b10'){$b$}
\drawedge[ELpos=30](b10',b20'){$b$}
\drawedge(b20',b30'){$b$}
\drawloop[loopangle=-90](b30'){$b$}
\drawedge(b00,b01){$a$}
\drawedge(b01,b02){$a$}
\drawedge(b02,b03){$a$}
\drawedge(b03,b04){$a$}
\drawedge(b04,b00'){$a$}
\drawedge[dash={0.2}0](b00',e0){$a$}
\drawedge(b10,b01){$a$}
\drawedge(b11,b02){$a$}
\drawedge(b12,b03){$a$}
\drawedge(b13,b04){$a$}
\drawedge(b14,b00'){$a$}
\drawedge[dash={0.2}0](b10',e1){$a$}
\drawedge[ELpos=20,ELdist=-1](b20,b01){$a$}
\drawedge[ELpos=20,ELdist=-1](b21,b02){$a$}
\drawedge[ELpos=20,ELdist=-1](b22,b03){$a$}
\drawedge[ELpos=20,ELdist=-1](b23,b04){$a$}
\drawedge[ELpos=20,ELdist=-1](b24,b00'){$a$}
\drawedge[dash={0.2}0](b20',e2){$a$}
\drawedge(b30,b31){$a$}
\drawedge(b31,b32){$a$}
\drawedge(b32,b33){$a$}
\drawedge(b33,b34){$a$}
\drawedge(b34,b30'){$a$}
\drawedge[dash={0.2}0](b30',e3){$a$}

\end{picture}\end{center}
\caption{DFA $\cA' \times \cD$ for matching a single factor, with $m = 5$ and $n = 5$. Column 0 is duplicated to make the diagram cleaner; the actual DFA 
contains only
one copy of this column.}
\end{figure}

We show that $\cA' \times \cD$ has $(m-1)n$ reachable and pairwise distinguishable states.
For reachability, for $0 \le i \le m-2$ and $0 \le q \le n-1$, we can reach $(b^i,q)$ from the initial state $(\eps,0)$ by the word $a^qb^i$.
For distinguishability,
suppose we have states $(b^i,q)$ and $(b^j,q)$ in the same column $q$, with $i < j$.
By $b^{m-2-j}$ we reach $(b^{m-2+i-j},q)$ and $(w,q)$, with $b^{m-2+i-j} \ne w$.
Then by $a$ we reach $(\eps,qa)$ and $(w,qa)$, which are distinguishable by a word in $a^*$.
For states in different columns, suppose we have $(b^i,p)$ and $(b^j,q)$ with $p < q$.
By a sufficiently long word in $b^*$, we reach $(w,p)$ and $(w,q)$.
These states are distinguishable by $a^{n-1-q}$.
So all reachable states are pairwise distinguishable. \qed
\end{proof}

\subsection{Matching a Single Subsequence}

\begin{proposition}
\label{prop:subseqword}
If the state complexity of $\{w\}$ is $m$, then the state complexity of $\Sig^* \shu w$ is $m-1$.
\end{proposition}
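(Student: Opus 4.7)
The plan is to construct an explicit $(m-1)$-state DFA accepting $\Sig^*\shu w$ and then verify that it is minimal, which pins the state complexity at exactly $m-1$. Define $\cA'' = (W, \Sig, \delta_{\cA''}, w_0, \{w_{m-2}\})$, where for each $a \in \Sig$ and each $w_i \in W$ we set $\delta_{\cA''}(w_i, a) = w_{i+1}$ if $i < m-2$ and $a = a_{i+1}$, and $\delta_{\cA''}(w_i, a) = w_i$ otherwise. Intuitively, $w_i$ records that the longest prefix of $w$ matched so far as a subsequence has length $i$; once $w_{m-2}$ is reached, the DFA loops there on every letter.

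To verify $L(\cA'') = \Sig^*\shu w$, I would prove by induction on $|x|$ that $\delta_{\cA''}(w_0, x) = w_k$, where $k$ is the largest integer such that $w_k$ is a subsequence of $x$. The base case $x = \eps$ gives $k = 0$. For the inductive step, reading an additional letter $a$ either extends the longest matched prefix by one symbol (exactly when $k < m-2$ and $a = a_{k+1}$) or leaves the longest matched prefix length unchanged, which matches the behaviour of $\delta_{\cA''}$. Consequently, $x$ is accepted iff $k = m-2$, i.e., iff $w$ is a subsequence of $x$.

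For minimality, reachability is immediate: each state $w_i$ is reached from $w_0$ by the word $w_i$ itself. For distinguishability, I claim that for $i < j$ the word $u_j = a_{j+1}a_{j+2}\cdots a_{m-2}$ (interpreted as $\eps$ when $j = m-2$) distinguishes $w_i$ from $w_j$. From $w_j$, reading $u_j$ walks deterministically through $w_{j+1}, w_{j+2}, \dotsc, w_{m-2}$ and accepts (or $w_j$ is already accepting when $j = m-2$). From $w_i$, each input letter can advance the state by at most one position, so reading the $m-2-j$ letters of $u_j$ cannot produce the $m-2-i > m-2-j$ advances needed to reach $w_{m-2}$, so $u_j$ is rejected.

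The construction and the verification that $L(\cA'') = \Sig^*\shu w$ are elementary, so I do not expect any significant obstacle. The only place requiring a small amount of care is the distinguishability step, where one must note that even accounting for accidental matches of $a_{i+1}$ along $u_j$, no input of length strictly less than $m-2-i$ can push $w_i$ all the way to $w_{m-2}$, since each input letter contributes at most one advancing transition.
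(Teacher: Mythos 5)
Your proposal is correct and follows essentially the same route as the paper: the same $(m-1)$-state DFA on the prefixes of $w$ (advance on $a_{i+1}$, self-loop otherwise), the same reachability and distinguishing words $a_{j+1}\dotsb a_{m-2}$. Your correctness argument via the exact invariant ``the current state is the longest prefix of $w$ occurring as a subsequence of the input read so far'' is a slightly tidier packaging than the paper's two-inclusion argument, but it is the same underlying idea.
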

\begin{proof}
Define a DFA $\cA = (W,\Sig,\delta_\cA,\eps,\{w\})$ where $\delta_\cA(w_i,a_{i+1}) = w_{i+1}$, and $\delta_\cA(w_i,a) = w_i$ for $a \ne a_{i+1}$. Note that $\cA$ is minimal: state $w_i$ is reached by word $w_i$ and states $w_i,w_j$ with $i < j$ are distinguished by $a_{j+1} \dotsb a_{m-2}$. We claim that $\cA$ recognizes $\Sig^* \shu w$.

Write $\delta$ rather than $\delta_\cA$ to simplify the notation.
Suppose $x \in \Sig^* \shu w$. Then we can write $x = x_0a_1x_1a_2 x_2 \dotsb a_{m-2} x_{m-2}$, where $x_0,\dotsc,x_{m-2} \in \Sig^*$. We claim that $\delta(\eps,x_0a_1x_1\dotsb a_ix_i) = w_j$ for some $j \ge i$. We proceed by induction on $i$. The base case $i = 0$ is trivial.

Now, suppose that $i > 0$ and 
$\delta(\eps,x_0a_1x_1 \dotsb a_{i-1}x_{i-1}) = w_j$
for some $j \ge i-1$.
Then
$\delta(\eps,x_0a_1x_1 \dotsb a_{i}x_{i}) = \delta(w_j,a_ix_i)$.
We consider two cases:
\bi
\item
If $j = i-1$, we have $\delta(w_{i-1},a_ix_i) = \delta(w_i,x_i) = w_k$ for some $k$ with $k \ge i$, as required.
\item
If $j > i-1$, we have $\delta(w_i,a_ix_i) = w_k$  for some $k$ with $k \ge i$, as required.
\ei
This completes the inductive proof. It follows then that 
$\delta(\eps,x) = w_{m-2} = w$, and so $x$ is accepted by $\cA$.
Conversely, if $x$ is accepted by $\cA$, then it is clear from the definition of the transition function that the letters $a_1,a_2,\dotsc,a_{m-2}$ must occur within $x$ in order, and so $x \in \Sig^* \shu w$. \qed
\end{proof}

Fix $w$ with state complexity $m$, and let $\cA$ be the DFA for $\Sig^* \shu w$ described in the proof of Proposition \ref{prop:subseqword}. Fix $T$ with state complexity 
at most
$n$, and let $\cD$ be an $n$-state DFA for $T$ with state set $Q_n$ and final state set $F$. The direct product DFA $\cA \times \cD$ with final state set $\{w\} \times F$ recognizes $(\Sig^* \shu w)\cap T$. Since $\cA \times \cD$ has $(m-1)n$ states, this gives an upper bound of $(m-1)n$ on the state complexity of $(\Sig^* \shu w) \cap T$. We claim that this upper bound is tight.

\begin{theorem}
Suppose $m \ge 3$ and $n \ge 2$. 
There exists a non-empty word $w$ and a language $T$, with $\kappa(\{w\}) \le m$ and $\kappa(T) \le n$, such that $\kappa((\Sig^* \shu w) \cap T) = (m-1)n$.
\end{theorem}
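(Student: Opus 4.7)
The plan is to reuse essentially the same witnesses as in the factor-matching case and verify that all $(m-1)n$ states of the direct product are reachable and pairwise distinguishable. Specifically, take $\Sig=\{a,b\}$, $w=b^{m-2}$, and let $T$ be the language of the DFA $\cD=(Q_n,\Sig,\delta,0,\{0,\dots,n-2\})$ with $a\colon(0,1,\dots,n-1)$ and $b\colon\id$. Let $\cA$ be the minimal DFA for $\Sig^*\shu w$ from Proposition~\ref{prop:subseqword}; since $w$ uses only the letter $b$, one checks directly from the definition that in $\cA$ the letter $a$ induces the identity on every state, while $b$ sends $b^i\mapsto b^{i+1}$ for $i<m-2$ and fixes $b^{m-2}=w$. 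Then $\cA\times\cD$ has $(m-1)n$ states, and once I show they are all reachable and pairwise distinguishable, the matching lower bound follows, confirming the tight bound $(m-1)n$.

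For reachability I would simply note that from the initial state $(\eps,0)$, the word $a^qb^i$ reaches $(b^i,q)$ for every $0\le i\le m-2$ and $0\le q\le n-1$, because $a$ acts as the identity on $\cA$ while cycling $\cD$, and then $b$ advances the first coordinate without affecting the second. This mirrors what was done in the factor witness proof.

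For distinguishability I would argue as in that proof, being careful about the column $q=n-1$ since $a$ no longer resets the first coordinate. The final state set is $\{(w,q):0\le q\le n-2\}$. For two states $(b^i,q)$ and $(b^j,q)$ in the same column with $i<j$: if $q\ne n-1$, the word $b^{m-2-j}$ sends the second to $(w,q)$ (final) and the first to $(b^{m-2+i-j},q)$ (non-final, since $m-2+i-j<m-2$). If $q=n-1$, first apply $a$ to move both states into column $0$, where the second coordinate is now non-accepting in $\cD$ but the argument just given applies with $q=0$. For two states $(b^i,p),(b^j,q)$ with $p<q$: apply a long enough power of $b$ to push both first coordinates to $w$, reaching $(w,p)$ and $(w,q)$; then $a^{n-1-q}$ sends these to $(w,n-1-(q-p))$ (a final state) and $(w,n-1)$ (non-final), distinguishing them.

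The only conceptual wrinkle, which is the main thing to check carefully, is that the self-loop action of $a$ on $\cA$ does not obstruct separating states in the ``dead'' column $q=n-1$: unlike the factor case, $a$ cannot simultaneously advance the $\cA$-coordinate and move out of column $n-1$, so one must explicitly exploit that $a$ moves $n-1$ back to $0$ while preserving the first coordinate, and then reuse the same-column argument. Everything else is routine given the machinery already developed for the factor witness.
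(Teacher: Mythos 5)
Your proposal is correct and follows essentially the same route as the paper: the same witnesses ($w=b^{m-2}$ and the $n$-state cyclic DFA with final states $\{0,\dots,n-2\}$), the same reachability words $a^qb^i$, and the same distinguishability arguments, with your explicit treatment of the column $q=n-1$ being just a spelled-out version of the paper's ``distinguishable by a word in $a^*$'' step.
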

\begin{proof}
Let $\Sig = \{a,b\}$ and let $w = b^{m-2}$.
Let $\cA$ be the DFA for $\Sig^* \shu w$.
Let $T$ be the language accepted by the DFA $\cD$ with state set $Q_n$, alphabet $\Sig$, initial state $0$, final state set $\{0,\dotsc,n-2\}$, and transformations $a \co (0,\dotsc,n-1)$ and $b \co \id$.

\begin{figure}[ht!]
\unitlength 8.5pt
\begin{center}\begin{picture}(36,28)(-2,-2)
\gasset{Nh=1.8,Nw=3.5,Nmr=1.25,ELdist=0.4,loopdiam=1.5}
\node(b00)(0 ,24){$\eps,0$}\imark(b00)
\node(b01)(6 ,24){$\eps,1$}
\node(b02)(12,24){$\eps,2$}
\node(b03)(18,24){$\eps,3$}
\node(b04)(24,24){$\eps,4$}
\node(b00')(30,24){$\eps,0$}
\node[Nframe=n](e0)(36,24){}
\node(b10)(0 ,16){$b,0$}
\node(b11)(6 ,16){$b,1$}
\node(b12)(12,16){$b,2$}
\node(b13)(18,16){$b,3$}
\node(b14)(24,16){$b,4$}
\node(b10')(30,16){$b,0$}
\node[Nframe=n](e1)(36,16){}
\node(b20)(0 ,8){$b^2,0$}
\node(b21)(6 ,8){$b^2,1$}
\node(b22)(12,8){$b^2,2$}
\node(b23)(18,8){$b^2,3$}
\node(b24)(24,8){$b^2,4$}
\node(b20')(30,8){$b^2,0$}
\node[Nframe=n](e2)(36,8){}
\node(b30)(0 ,0){$b^3,0$}\rmark(b30)
\node(b31)(6 ,0){$b^3,1$}\rmark(b31)
\node(b32)(12,0){$b^3,2$}\rmark(b32)
\node(b33)(18,0){$b^3,3$}\rmark(b33)
\node(b34)(24,0){$b^3,4$}
\node(b30')(30,0){$b^3,0$}\rmark(b30')
\node[Nframe=n](e3)(36,0){}
\drawedge[ELpos=50](b00,b10){$b$}
\drawedge[ELpos=50](b10,b20){$b$}
\drawedge[ELpos=50](b20,b30){$b$}
\drawloop[loopangle=-90](b30){$b$}
\drawedge[ELpos=50](b01,b11){$b$}
\drawedge[ELpos=50](b11,b21){$b$}
\drawedge[ELpos=50](b21,b31){$b$}
\drawloop[loopangle=-90](b31){$b$}
\drawedge[ELpos=50](b02,b12){$b$}
\drawedge[ELpos=50](b12,b22){$b$}
\drawedge[ELpos=50](b22,b32){$b$}
\drawloop[loopangle=-90](b32){$b$}
\drawedge[ELpos=50](b03,b13){$b$}
\drawedge[ELpos=50](b13,b23){$b$}
\drawedge[ELpos=50](b23,b33){$b$}
\drawloop[loopangle=-90](b33){$b$}
\drawedge[ELpos=50](b04,b14){$b$}
\drawedge[ELpos=50](b14,b24){$b$}
\drawedge[ELpos=50](b24,b34){$b$}
\drawloop[loopangle=-90](b34){$b$}
\drawedge[ELpos=50](b00',b10'){$b$}
\drawedge[ELpos=50](b10',b20'){$b$}
\drawedge[ELpos=50](b20',b30'){$b$}
\drawloop[loopangle=-90](b30'){$b$}
\drawedge(b00,b01){$a$}
\drawedge(b01,b02){$a$}
\drawedge(b02,b03){$a$}
\drawedge(b03,b04){$a$}
\drawedge(b04,b00'){$a$}
\drawedge[dash={0.2}0](b00',e0){$a$}
\drawedge(b10,b11){$a$}
\drawedge(b11,b12){$a$}
\drawedge(b12,b13){$a$}
\drawedge(b13,b14){$a$}
\drawedge(b14,b10'){$a$}
\drawedge[dash={0.2}0](b10',e1){$a$}
\drawedge(b20,b21){$a$}
\drawedge(b21,b22){$a$}
\drawedge(b22,b23){$a$}
\drawedge(b23,b24){$a$}
\drawedge(b24,b20'){$a$}
\drawedge[dash={0.2}0](b20',e2){$a$}
\drawedge(b30,b31){$a$}
\drawedge(b31,b32){$a$}
\drawedge(b32,b33){$a$}
\drawedge(b33,b34){$a$}
\drawedge(b34,b30'){$a$}
\drawedge[dash={0.2}0](b30',e3){$a$}

\end{picture}\end{center}
\caption{DFA $\cA \times \cD$ for matching a single subsequence, with $m = 5$ and $n = 5$. Column 0 is duplicated to make the diagram cleaner; the actual DFA 
contains only
 one copy of this column.}
\end{figure}

We show that $\cA \times \cD$ has $(m-1)n$ reachable and pairwise distinguishable states.
For reachability, for $0 \le i \le m-2$ and $0 \le q \le n-1$, we can reach $(b^i,q)$ from the initial state $(\eps,0)$ by the word $a^qb^i$.
For distinguishability,
suppose we have states $(b^i,q)$ and $(b^j,q)$ in the same column $q$, with $i < j$.
By $b^{m-2-j}$ we reach $(b^{m-2+i-j},q)$ and $(w,q)$, with $b^{m-2+i-j} \ne w$.
These states are distinguishable by  a word in $a^*$.
For states in different columns, suppose we have $(b^i,p)$ and $(b^j,q)$ with $p < q$.
By a sufficiently long word in $b^*$, we reach $(w,p)$ and $(w,q)$.
These states are distinguishable by $a^{n-1-q}$.
So all reachable states are pairwise distinguishable. \qed
\end{proof}

\section{Conclusions}
We investigated the state complexity of four new combined operations on regular languages, inspired by pattern matching problems, in both the general case and the case where the pattern set is a single word.
The operations we considered were of the form ``the intersection of $T$ with the right (left, two-sided, all-sided) ideal generated by $P$'', corresponding to searching for prefixes (suffixes, factors, subsequences) from a set of patterns $P$ in a set of texts $T$.
In the general case, the state complexity of these combined operations is just equal to the composition of the complexities of the individual operations; the complexity is polynomial in the case of prefix matching, and exponential (in the first parameter) in the case of suffix, factor and subsequence matching.
For single-word pattern sets the complexity is significantly lower: linear in the case of prefix matching, and polynomial in the other cases.
In all cases, the maximal complexity can 
be achieved only
by languages over an alphabet of at least two letters. 
For unary languages, the general case and single-word case coincide, and the four operations are all equivalent. 
The complexity is linear in the unary case.

\section*{References}
\bibliographystyle{model1b-num-names}
\bibliography{matching}

\providecommand{\noopsort}[1]{}
\begin{thebibliography}{10}
\expandafter\ifx\csname natexlab\endcsname\relax\def\natexlab#1{#1}\fi
\providecommand{\bibinfo}[2]{#2}
\ifx\xfnm\relax \def\xfnm[#1]{\unskip,\space#1}\fi
\bibitem[{Aho and Corasick(1975)}]{AhCo75}
\bibinfo{author}{A.~Aho}, \bibinfo{author}{M.J. Corasick},
  \bibinfo{title}{Efficient string matching: An aid to bibliographic search},
  \bibinfo{journal}{Communications of the ACM} \bibinfo{volume}{18}
  (\bibinfo{year}{1975}) \bibinfo{pages}{333--340}.
\bibitem[{Brzozowski et~al.(2013)Brzozowski, Jir{\'a}skov{\'a} and Li}]{BJL13}
\bibinfo{author}{J.A. Brzozowski}, \bibinfo{author}{G.~Jir{\'a}skov{\'a}},
  \bibinfo{author}{B.~Li}, \bibinfo{title}{Quotient complexity of ideal
  languages}, \bibinfo{journal}{Theoret. Comput. Sci.} \bibinfo{volume}{470}
  (\bibinfo{year}{2013}) \bibinfo{pages}{36--52}.
\bibitem[{Crochemore and Hancart(1997)}]{CrHa97}
\bibinfo{author}{M.~Crochemore}, \bibinfo{author}{C.~Hancart},
  \bibinfo{title}{Automata for matching patterns}, in:
  \bibinfo{editor}{G.~Rozenberg}, \bibinfo{editor}{A.~Salomaa} (Eds.),
  \bibinfo{booktitle}{Handbook of Formal Languages},
  volume~\bibinfo{volume}{2}, \bibinfo{publisher}{Springer},
  \bibinfo{year}{1997}, pp. \bibinfo{pages}{399--462}.
\bibitem[{Dillon et~al.(1988)Dillon, Avrunin and Wileden}]{DAW88}
\bibinfo{author}{L.K. Dillon}, \bibinfo{author}{G.S. Avrunin},
  \bibinfo{author}{J.C. Wileden}, \bibinfo{title}{Constrained expressions:
  Toward broad applicability of analysis methods for distributed software
  systems}, \bibinfo{journal}{ACM Trans. Program. Lang. Syst.}
  \bibinfo{volume}{10} (\bibinfo{year}{1988}) \bibinfo{pages}{374--402}.
\bibitem[{Elloumi et~al.(2015)Elloumi, Iliopoulos, Wang and Zomaya}]{EIWA15}
\bibinfo{author}{M.~Elloumi}, \bibinfo{author}{C.~Iliopoulos},
  \bibinfo{author}{J.T. Wang}, \bibinfo{author}{A.Y. Zomaya},
  \bibinfo{title}{Pattern Recognition in Computational Molecular Biology:
  Techniques and Approaches}, \bibinfo{publisher}{Wiley}, \bibinfo{year}{2015}.
\bibitem[{Gao et~al.(2016)Gao, Moreira, Reis and Yu}]{GMRY16}
\bibinfo{author}{Y.~Gao}, \bibinfo{author}{N.~Moreira},
  \bibinfo{author}{R.~Reis}, \bibinfo{author}{S.~Yu}, \bibinfo{title}{A survey
  on operational state complexity}, \bibinfo{journal}{J. Autom. Lang. Comb.}
  \bibinfo{volume}{21} (\bibinfo{year}{2016}) \bibinfo{pages}{251--310}.
\bibitem[{Maslov(1970)}]{Mas70}
\bibinfo{author}{A.N. Maslov}, \bibinfo{title}{Estimates of the number of
  states of finite automata}, \bibinfo{journal}{Dokl. Akad. Nauk SSSR}
  \bibinfo{volume}{194} (\bibinfo{year}{1970}) \bibinfo{pages}{1266--1268
  (Russian).} \bibinfo{note}{\mbox{English translation}: Soviet Math. Dokl.
  {\bf 11} (1970) 1373--1375}.
\bibitem[{Okhotin(2010)}]{Okh10}
\bibinfo{author}{A.~Okhotin}, \bibinfo{title}{On the state complexity of
  scattered substrings and superstrings}, \bibinfo{journal}{Fundamenta
  Informaticae} \bibinfo{volume}{99} (\bibinfo{year}{2010})
  \bibinfo{pages}{325--338}.
\bibitem[{Salomaa et~al.(2007)Salomaa, Salomaa and Yu}]{SSY07}
\bibinfo{author}{A.~Salomaa}, \bibinfo{author}{K.~Salomaa},
  \bibinfo{author}{S.~Yu}, \bibinfo{title}{State complexity of combined
  operations}, \bibinfo{journal}{Theoret. Comput. Sci.} \bibinfo{volume}{383}
  (\bibinfo{year}{2007}) \bibinfo{pages}{140--152}.
\bibitem[{Yu et~al.(1994)Yu, Zhuang and Salomaa}]{YZS94}
\bibinfo{author}{S.~Yu}, \bibinfo{author}{Q.~Zhuang},
  \bibinfo{author}{K.~Salomaa}, \bibinfo{title}{The state complexities of some
  basic operations on regular languages}, \bibinfo{journal}{Theoret. Comput.
  Sci.} \bibinfo{volume}{125} (\bibinfo{year}{1994}) \bibinfo{pages}{315--328}.

\end{thebibliography}
\end{document}